\documentclass[sigconf, ]{acmart}
%
\AtBeginDocument{%
  }


\setcopyright{acmlicensed}
\copyrightyear{2024}
\acmYear{2024}
\acmDOI{XXXXXXX.XXXXXXX}

\acmConference[Conference'17]{ACM Conference}{July 2017}{Washington,
  DC, USA}

%
%
\acmISBN{978-1-4503-XXXX-X/18/06}
\settopmatter{printacmref=false} 

\usepackage{graphicx}
\usepackage{subfigure}
\usepackage{amsmath}
\usepackage[toc,page]{appendix}
\usepackage{ragged2e} 
\usepackage{booktabs,makecell, multirow, tabularx}
\usepackage{tikz}
\usepackage{algpseudocode}
\usepackage{algorithm}
\makeatletter
\newenvironment{breakablealgorithm}
{
		\begin{center}
			\refstepcounter{algorithm}
			\hrule height.8pt depth0pt \kern2pt
			\renewcommand{\caption}[2][\relax]{
				{\raggedright\textbf{\ALG@name~\thealgorithm} ##2\par}%
				\ifx\relax##1\relax 
				\addcontentsline{loa}{algorithm}{\protect\numberline{\thealgorithm}##2}%
				\else 
				\addcontentsline{loa}{algorithm}{\protect\numberline{\thealgorithm}##1}%
				\fi
				\kern2pt\hrule\kern2pt
			}
		}{
		\kern2pt\hrule\relax
	\end{center}
}
\makeatother

\usetikzlibrary{calc}

\usepackage{pgf-umlsd}
\usepackage{pifont}
\usepackage{tcolorbox}
\usepackage{float}
\usepackage{enumitem}
\usepackage{needspace}

\usetikzlibrary{graphs,quotes, shapes.geometric, positioning,shapes,arrows}
\theoremstyle{definition}
\newtheorem{definition}{Definition}[section]
\newtheorem{theorem}{Theorem}
\newtheorem{lemma}{Lemma}

\newcommand{\xmark}{\ding{55}}


\newcommand{\CP}{\mathcal{P}  }
\newcommand{\C}{\mathcal{C}  }
\newcommand{\R}{\mathcal{R} }

\newcommand{\FC}{\mathcal{F}_{Channel}} 
\newcommand{\Fclk}{\mathcal{F}_{clk}}   
\newcommand{\FL}{\mathcal{L}}   
\newcommand{\FR}{\mathcal{H}}   
\newcommand{\Fan}{\mathcal{F}_{auth}}   
\newcommand{\Fex}{\mathcal{F}_{mex}}    
\newcommand{\Fjc}{\mathcal{F}_{jc}}     
\newcommand{\Fsig}{\mathcal{F}_{Sig}}   

\newcommand*{\Sim}{\textit{Sim}}
\newcommand{\env}{\mathcal{E}}
\newcommand{\Fpre}{\mathcal{F}_{prelim}}
\newcommand{\Fideal}{\mathcal{F}_{ideal}}
\newcommand\leftarrowS{\leftarrow\joinrel\smalldollar}
\makeatletter
\newcommand{\smalldollar}{\mathrel{\mathpalette\small@dollar\relax}}
\newcommand{\small@dollar}[2]{%
  \vcenter{\hbox{%
    $#1\textnormal{\fontsize{0.7\dimexpr\f@size pt}{0}\selectfont\$}$%
  }}%
}
\makeatother
\usepackage{adjustbox}
\usepackage{mdframed}
\mdfdefinestyle{MyFrame}{%
    linecolor=black,
    outerlinewidth=1pt,
    roundcorner=5pt,
    innertopmargin=2pt,
    innerbottommargin=2pt,
    innerrightmargin=5pt,
    innerleftmargin=5pt,
    font=\small}

\begin{document}

\title{FairRelay: Fair and Cost-Efficient Peer-to-Peer Content Delivery through Payment Channel Networks}

\author{Jingyu Liu}
\email{jingyuliu@hkust-gz.edu.cn}
\affiliation{%
  \institution{The Hong Kong University of Science and Technology (Guangzhou)}
  \streetaddress{No.1 Du Xue Rd, Nansha District}
  \city{Guangzhou}
  \state{Guangdong}
  \country{China}
}

\author{Yingjie Xue}
\email{yingjiexue@hkust-gz.edu.cn}
\affiliation{%
  \institution{The Hong Kong University of Science and Technology (Guangzhou)}
  \streetaddress{No.1 Du Xue Rd, Nansha District}
  \city{Guangzhou}
  \state{Guangdong}
  \country{China}
}

\author{Zifan Peng}
\email{zpengao@connect.hkust-gz.edu.cn}
\affiliation{%
  \institution{The Hong Kong University of Science and Technology (Guangzhou)}
  \streetaddress{No.1 Du Xue Rd, Nansha District}
  \city{Guangzhou}
  \state{Guangdong}
  \country{China}
}

\author{Chao Lin}
\email{cschaolin@163.com}
\affiliation{%
  \institution{Jinan University}
  \streetaddress{No.601 West Huangpu Rd, Tianhe District}
  \city{Guangzhou}
  \state{Guangdong}
  \country{China}
}

\author{Xinyi Huang}
\email{xyhuang81@gmail.com}
\affiliation{
   \institution{Jinan University}
  \streetaddress{No.601 West Huangpu Rd, Tianhe District}
  \city{Guangzhou}
  \state{Guangdong}
  \country{China}
}

\renewcommand{\shortauthors}{Liu et al.}

\begin{abstract}

\textit{Peer-to-Peer} (P2P) content delivery, known for scalability and resilience, offers a decentralized alternative to traditional centralized \textit{Content Delivery Networks} (CDNs). A significant challenge in P2P content delivery remains: the fair compensation of relayers for their bandwidth contributions. Existing solutions employ blockchains for payment settlements, however, they are not practical due to high on-chain costs and over-simplified network assumptions.
In this paper, we introduce \textit{FairRelay}, a fair and cost-efficient protocol that ensures all participants get fair payoff in complex content delivery network settings. 
We introduce a novel primitive, \textit{Enforceable Accumulative Hashed TimeLock Contract} (\textit{Enforceable A-HTLC}), designed to guarantee payment atomicity \textemdash ensuring all participants receive their payments upon successful content delivery.

The fairness of FairRelay is proved using the Universal Composability (UC) framework. Our evaluation demonstrates that, in optimistic scenarios, FairRelay employs \textit{zero} on-chain costs.
In pessimistic scenarios, the on-chain dispute costs for relayers and customers are constant, irrespective of the network complexity. Specifically, empirical results indicate that the on-chain dispute costs for relayers and customers are 24,902 gas (equivalent to 0.01 USD on Optimism L2) and 290,797 gas (0.07 USD), respectively.  
In a 10-hop relay path, FairRelay introduces less than 1.5\% additional overhead compared to pure data transmission, showcasing the efficiency of FairRelay.

\end{abstract}

\begin{CCSXML}
<ccs2012>
  <concept>
      <concept_id>10002978.10003006.10003013</concept_id>
      <concept_desc>Security and privacy~Distributed systems security</concept_desc>
      <concept_significance>500</concept_significance>
      </concept>
</ccs2012>
\end{CCSXML}

\ccsdesc[500]{Security and privacy~Distributed systems security}

\keywords{Fair Exchange, Payment Channel Networks, P2P Content Delivery}

\settopmatter{printfolios=true}
\maketitle

\section{Introduction}

\textit{Peer-to-Peer} (P2P)  content delivery embraces a decentralized approach that has fundamentally transformed the landscape of digital content distribution. Distinct from centralized content delivery networks (CDNs), P2P systems  reduce dependence on centralized infrastructures, thereby achieving enhanced robustness, resilience, and cost-efficiency in distributing content \cite{zolfaghari_content_2021}. 
Due to its effectiveness, this paradigm has gained immense popularity and led to the development of various protocols, such as BitTorrent \cite{levin_bittorrent_2008}, FairDownload \cite{9929262}, and Bitstream \cite{linus_bitstream_nodate}.
 BitTorrent, as a remarkable representative protocol, has a substantial user base with more than 150 million monthly active users since 2012\footnote{https://en.wikipedia.org/wiki/BitTorrent}, manifesting the widespread interest from the general public.

\begin{figure}[!htbp]
    \centering
    \setlength{\abovecaptionskip}{0.2cm}
    \includegraphics[width=0.4\textwidth]{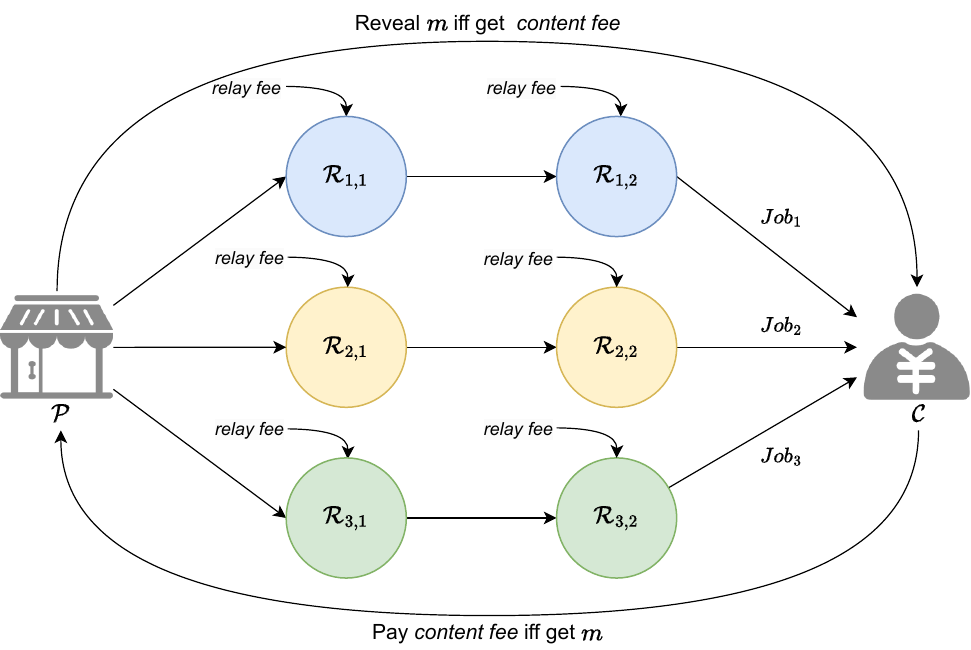}
    \caption{A digital content delivery process with multiple pathways succeeds only if all relayers are paid.}
    \label{fig:intro_model}
    \vspace{-0.5cm}
\end{figure}

A fundamental challenge in P2P content delivery is the assurance of fair payoff for each participant. Consider a scenario where a customer $\mathcal{C}$ seeks to purchase digital content $m$ from a content provider $\CP$.  There are three roles: a customer $\C$, a content provider $\CP$ and multiple relayers $\R$. Content $m$ is transmitted through many relayers. To optimize the relay service, the transmission of $m$ which comprises multiple data chunks, is divided into several \textit{delivery jobs}. Each job corresponds to a multi-hop pathway, leveraging the computing power and bandwidth of relayers' devices \cite{jain_using_nodate, padmanabhan_resilient_2003, anjum_survey_2017}, as shown in Fig. \ref{fig:intro_model}. To make sure every party gets a fair payoff, the following properties should be achieved : 
\begin{enumerate}[leftmargin=*]
    \item  $\mathcal{C}$ pays a \textit{content fee}  to $\CP$ only if $\mathcal{C}$ receives the content $m$;
    \item $\CP$ provides the content $m$ to  $\mathcal{C}$ only if $\CP$  gets the \textit{content fee};
    \item $\mathcal{R}$ gets \textit{relay fee} only if it provides relay service;
    \item The relayers should not know the content of $m$.
\end{enumerate}

By (4), the content $m$ should be encrypted during transmission.  Then, by (1) and (2), the content delivery process between $\C$ and $\CP$ is a two-party \textit{fair exchange} problem, where $\C$ pays $\CP$ only if $\CP$ reveals the encryption key for content $m$. 

\sloppy It is important to recognize that two models exist to incentivize participation in P2P content delivery: the altruistic player model and the rational player model. BitTorrent \cite{levin_bittorrent_2008} employs a reputation-based framework, effective in an altruistic setting where participants share contents based on personal interests. Here, we would like to explore the rational player model. By introducing monetary incentives, we anticipate that a greater number of rational players will join in the P2P content delivery network, thereby promoting the widespread adoption of this decentralized technology.

Existing fee-related P2P content delivery  protocols can be categorized into two classes: 1) Using a centralized party to distribute fees. Protocols like Floodgate \cite{nair_floodgate_2008}, Meson\cite{meson}, and Saturn\cite{saturn} rely on a trusted node or committee to distribute rewards. This leads to potential issues such as a single point of failure, where the trusted entity could become a target for attacks or manipulation. 
2) Using a blockchain to distribute fees. Various researches \cite{bertino_fair_2021, 9929262,  dziembowski_fairswap_2018, zkcp} leverage public blockchain to serve as a decentralized trusted third party to facilitate fair exchanges. For example, \textit{Zero-knowledge Contingent Payment} (ZKCP) \cite{zkcp} and FairSwap \cite{dziembowski_fairswap_2018} allow parties to perform fair exchange utilizing on-chain payment. However, the on-chain payments result in high costs and latency. In FairSwap,  a single exchange costs 1,050,000 gas (0.24 USD on Optimism L2) and at least three on-chain communication rounds (around 6s on Optimism L2). This on-chain overhead is unacceptable considering a user might retrieve hundreds of videos per day.


To avoid the costly on-chain payments, many solutions \cite{linus_bitstream_nodate, miller_sprites_2017}  adopt payment channels. 
However, payment channels typically solve the problem of fair exchange between two parties. They are not able to address the relay fee problem in (3) as there are no relayers in their models.


 Consider a simple case where $m$ is relayed by a set of relayers in a single path. The relayers must get relay fee for providing bandwidth for relaying. A straightforward attempt would be including relayers in a multi-hop payment channel network to relay payments, which would fail to solve the problem. The challenge lies in the inability to materialize the completion of a relay job. FairDownload \cite{9929262}  proposed the encryption of incoming data by relayers before relaying, which shifts the settlement of the relay job to the release of each relayer's decryption key. 
 The protocol succeeds to solve the relay fee problem with on-chain settlements when a single relayer is involved in the delivery jobs. For complex delivery jobs where $m$ is relayed by multiple relayers and multiple paths, the following challenges remain to be tackled.
 
\noindent \textbf{The costs.} The costs would be unacceptable if every payment is settled on-chain. For example, in FairDownload, each relayer additionally requires at least 171,591 gas (0.05 USD in Optimism L2) to get their relay fee.
Thus, it is desirable to move the payments off-chain using payment channel networks.  When operations are moved to off-chain payment channels, providing fair payments for every participant becomes challenging due to the lack of public synchronized state. 

 \noindent \textbf{Payment atomicity in complex networks.}    A desired protocol should guarantee the \emph{atomicity} of payments for all participants.  Atomicity means that, for relayers, once the customer gets the content $m$ implying all relayers' delivery jobs are successfully completed, then all relayers should get their relay fees. Otherwise, the delivery jobs fail, and no relayer gets relay fee and the customer does not get the content, either. 
 In a network where there are multiple relayers, relayers may exhibit complex behaviors, such as colluding relayers launching a \textit{wormhole attack} \cite{malavolta_anonymous_2019} to avoid paying some intermediate relayers. Therefore, \textbf{guaranteeing the atomicity of paying all relayers is an open problem.} 
 
 When the content is delivered through multiple paths, the problem becomes more intricate. Each channel updates independently off-chain, necessitating a synchronization scheme to ensure all paths are settled atomically.

\noindent \textbf{Contributions.} 
In this paper, we address the fairness problem in P2P content delivery involving multiple relayers and multiple paths, which is common in practice.  We provide \emph{FairRelay}, a fair and cost-efficient protocol  for P2P content delivery through payment channel networks. 
Our contributions are summarized as follows.
\begin{itemize}[leftmargin=*,topsep=5pt]
    \item \textbf{An atomic multi-hop payment scheme to pay all relayers in a single path.} To facilitate the fair payment for all participants, we design \textit{Accumulative Hashed TimeLock Contract} (\textit{A-HTLC}), based on which we construct a multi-hop payment scheme. Compared to traditional HTLC, A-HTLC enables hashlocks to accumulate along the path, which is essential to guarantee all relayers get fair payment.
    \item \textbf{An atomic multi-path payment scheme to settle all paths' payments.} To ensure all payments on multiple paths are settled atomically \textemdash all paths are settled or none is settled, we further introduce \textit{Enforceable A-HTLC} where an \textit{acknowledge-pay-enforce} procedure is introduced to enforce all sub-payments of all paths. 
    If any party fails to settle the payment, the provider can submit a challenge on-chain. The challenged party is enforced to settle the payment to avoid a penalty.

    \item \textbf{A fair and cost-efficient content delivery  protocol.} We introduce \emph{FairRelay}, a cost-efficient protocol enabling fair content delivery. 
    When all participants are honest, no on-chain costs are introduced. 
    If some participants are dishonest, a \textit{proof of misbehavior} scheme is designed to tackle the dishonest behavior.
    In the worst case, the on-chain overheads for dispute are constant for customers and relayers, regardless of the network complexity. 

    \item \textbf{Security analysis, implementation and evaluation.} We prove the protocol's fairness in the \textit{Universal Composability} (UC) framework. We implement FairRelay, and conduct performance evaluation. The evaluation results manifest the protocol's practicality and efficiency.

\end{itemize}

\noindent \textbf{Organization}. The paper is structured as follows. Section \ref{sec:prelimaries} introduces necessary background and primitives. Section \ref{sec:prob} delineates the protocol's security and privacy objectives. Section \ref{sec:overview} offers a technical overview of FairRelay. Detailed construction of the protocol is elaborated in Section \ref{sec:protocol_construction}. Section \ref{sec:security} sketches a security analysis, and Section \ref{sec:eval} evaluates the efficiency of our protocol. Related works are reviewed in Section \ref{sec:related_works}. The paper concludes in Section \ref{sec:conclusion}.

\section{Preliminaries}\label{sec:prelimaries}

We intend to enable fair content delivery through PCNs. In this section, we first introduce the payment channels and an important primitive \textit{Hashed TimeLock Contracts} (\textit{HTLCs}), then introduce commitment schemes and proof of misbehavior mechanisms to uphold the integrity of all participants. zk-SNARKs are used are as tools to reduce computational overhead in verifying proof of misbehavior. We also provide a brief introduction to the universal composability framework, setting the context for our subsequent security analysis. Some basic cryptographic primitives are given in the end.

\vspace{0.3em}
\noindent \textbf{Payment Channels and HTLC.} Payment channel provides an off-chain solution enabling two parties to transact without involving the blockchain. Two parties can open a payment channel by depositing funds into a multi-signature address and then update the channel balance by exchanging signed transactions. Parties within a payment channel can perform complex conditioned payment \cite{dziembowski_general_2018} via HTLC. With HTLC, a party can pay another party if the latter reveals a preimage of a hash value before a deadline. In this work, we employ conditioned payment and model payment channel as a single ideal functionality $\mathcal{F}_{Channel}$.

\vspace{0.3em}
\noindent \textbf{Commitment and Proof of Misbehavior.} A commitment scheme allows one to commit a chosen value or statement while keeping it hidden to others, with the ability to reveal the committed value/statement later \cite{goldreich2009foundations}. \textit{Proof of misbehavior} (PoM) scheme is a cryptographic construction that enables a prover to demonstrate  that an entity has violated its previous claim. PoM schemes can be combined with commitment schemes, serving to substantiate inconsistencies between revealed information and commitments. 

\vspace{0.3em}
\noindent \textbf{zk-SNARKs.} \textit{Zero-Knowledge Succinct Non-interactive Argument of Knowledge} ({zk-SNARKs}) are cryptographic paradigms that enables a prover to validate the truthfulness of a statement to a verifier without divulging any private witness pertaining to the statement itself \cite{bitansky2012extractable}. 
Our research adopts a zk-SNARK system  to construct succinct validity proofs within the proof of misbehavior framework, significantly reducing the computational and storage overheads associated with the on-chain verification of misbehavior proofs. 

\vspace{0.3em}
\noindent \textbf{Universal Composability.} We model the security of a \textit{real world} protocol $\Pi$ within the framework of universal composability \cite{canetti_universally_2001}, considering static corruptions where the adversary $\mathcal{A}$ declares which parties are corrupted at the outset. 
To analyze the security of $\Pi$ in the real world, we "compare" its execution with the execution of an ideal functionality  $\mathcal{F}$ in an \textit{ideal world}, where $\mathcal{F}$ specifies the protocol's interface and can be considered as an abstraction of what properties $\Pi$ shall achieve. In the ideal world, the functionality can be "attacked" through its interface by an ideal world adversary called simulator $\Sim$. 
Intuitively, a protocol $\Pi$ realizes an ideal functionality $\mathcal{F}$ when the environment $\env$, acting as a distinguisher, cannot distinguish between a real execution of the protocol and a simulated interaction with the ideal functionality. In the UC framework, $\env$ use session id (sid) to distinguish different instances. 

\vspace{0.3em}
\sloppy 
\noindent \textbf{Fundamental cryptographic primitives.} Our research builds upon some cryptographic primitives, including a commitment scheme $(Commit, Open)$ that relies on a hash function ($Commit(\cdot) = \mathcal{H}(\cdot)$)\footnote{A standard commitment scheme employs a collision-resistant hash function along with random padding.}, a symmetric encryption scheme $(SE.\text{KGen},\text{Enc}, \text{Dec})$, and an asymmetric encryption scheme $(AE.\text{KGen},\text{Enc}, \text{Dec}, \text{Sign}, \text{Ver})$. 
We assume that each participant possesses a set of key pairs, and the public key is known to all.  We use $K(u)$ to denote the private key of a user $u$, $Pk(u)$ to denote the public key of a user $u$. 
Furthermore, our protocol employs \textit{Merkle Trees} $(MT.\text{Root},\text{Member}, \text{Ver})$ with \textit{Merkle Multi-proof} for efficient membership verification and integrity check \cite{merklemulti}. 

\section{Assumptions and Goals}    \label{sec:prob}
In this section, we elaborate on the assumptions underlying the content delivery system and outline the security and privacy objectives, namely fairness and confidentiality.

\subsection{Assumptions}    \label{sec:assumption}
\begin{itemize}[leftmargin=*]
 \item \textbf{Path Existence. } Consider a  content delivery with multiple relayers over a PCN: a content provider $\CP$ connects to a customer $\C$ through multiple payment paths. We assume the existence of a content relay network that overlaps with a PCN. Specifically, we assume a valid payment path from $\C$ to $\CP$ with sufficient liquidity to pay the content fee $\mathfrak{B}_m$, and $\eta$ multi-hop payment paths with enough bandwidth and liquidity for transferring content from $\CP$ to $\C$, covering the relay fees. 
    \item \textbf{Adversarial Model. } 
    We consider a static adversary $\mathcal{A}$ that operates in \textit{probabilistic polynomial time}. The adversary $\mathcal{A}$ has the ability to corrupt any participant in this content delivery process before the protocol begins. 
    \item \textbf{Communication Model. } All communication between parties occurs over an authenticated communication channel $\Fan$ \cite{SigUC}.
    We assume a synchronous communication model, where all parties are constantly aware of the current round, which is modeled by a global clock functionality \cite{syncUC}\footnote{In the real world, the global clock can be achieved through the block height of a specific public blockchain}. If a party (including the adversary $\mathcal{A}$) sends a message to another party in round $i$, then that message is received by the recipient before the start of the next round.
    \item \textbf{TTP Model. }  We deploy a smart contract called \textit{Judge Contract} on a public-verifiable Turing-complete decentralized ledger as the \textit{trusted third party} (TTP). Any node in this delivery can interact with the smart contract to settle any dispute. 
   
\end{itemize}

\subsection{Security and Privacy Goals} \label{sec:prob_def}

\noindent \textbf{Notation.} We model $\CP$ delivering a content $m$ to $\C$ through $\eta$ delivery paths as a directed graph $\mathcal{G}$.  Content $m$ consists of $n$ chunks ($\{m_1, \dots, m_n\}$), and is committed by $com_m$, where $com_m$ is the merkle root built on all chunks ($com_m:= MT.Root(m)$). 
Each path $p$ is assigned with a delivery job $Job(p)$.  $\R_{k, i}$ is the $i$-th relayer in the $k$-th path ($p_k$).
$\mathfrak{B}_m$ is the content fee paid to $\CP$, and $\mathfrak{B}_{k, i}$ is the relay fee paid to $\R_{k, i}$. We denotes $|p_k|$ as the number of relayers in path $p_k$. 
In this work, we focus on two fundamental properties, fairness and confidentiality, which we define below.

\begin{definition}[\textbf{Fairness for $\mathcal{P}$}]
    \label{def:exf_p}
    The fairness for $\mathcal{P}$ is guaranteed if both of the following conditions are satisfied. 
    \vspace{0.3em}
    \begin{itemize}[leftmargin=*]
        \item For any corrupted PPT $\mathcal{C}, \mathcal{R} \in \mathcal{G}$ controlled by $\mathcal{A}$, honest $ \CP$ reveals $m$ to $\C$ only if $ \mathcal{P}$ gets the content fee $\mathfrak{B}_m$.
        \item For any corrupted PPT $\mathcal{C}, \mathcal{R} \in \mathcal{G}$ controlled by $\mathcal{A}$, if an honest entity $\mathcal{P}$ does not receive the content fee $\mathfrak{B}_m$, it does not pay any relay fee.

    \end{itemize}

\end{definition}

\begin{definition}[\textbf{Fairness for $\mathcal{C}$}]
    \label{def:exf_c}
    For  any corrupted PPT $  \mathcal{P}, \mathcal{R} \in \mathcal{G}$ controlled by $\mathcal{A}$, honest $ \mathcal{C}$ pays $ \mathfrak{B}_{m}$ to $\mathcal{P}$ only if $\mathcal{C}$ learns $m$.     
\end{definition}

\begin{definition}[\textbf{Fairness for $\mathcal{R}$}]
    \label{def:exf_r}
    For any corrupted PPT $  \mathcal{C}, \mathcal{R} \in \mathcal{G}$ controlled by $\mathcal{A}$,  $\C$ learns the data chunks relayed through paths $p_k$ only if every $\mathcal{R}_{k,i} \in p_k$  gets the relay fee $\mathfrak{B}_{{k,i}}$. 
\end{definition}

\begin{definition}[\textbf{Confidentiality}]
    \label{def:conf}
    For any corrupted PPT $ \R \in \mathcal{G}$ controlled by $\mathcal{A}$, $\mathcal{A}$ can not learn $m$ if $\CP$ and $\C$ are honest. 
\end{definition}

\section{Technical Overview}    \label{sec:overview}


\subsection{Key idea}   \label{sec:key_idea}

In order to keep the content confidential during the relay process, $\mathcal{P}$ encrypts all chunks in $m$ with its symmetric encryption key. For simplicity, let us first consider a single delivery path where a data chunk $m_i$ is relayed through multiple relayers. To ensure that relayers receive a relay fee, each relayer in the multi-hop path encrypts the relayed content before delivering it to the customer $\C$. This encryption reduced the content delivery problem into two fair exchange problems: the fee-key exchange between $\mathcal{P}$ and $\mathcal{C}$ and the fee-key exchange between  $\mathcal{P}$ and each relayer $\mathcal{R}$, since $P$ pays the relay fee. 

The introduction of multiple hops poses challenges to ensure the atomicity of key distribution and payment. To decrypt the encrypted content received by the customer in the last hop, the customer must obtain all encryption keys along the path. A straightforward solution is using HTLC: the customer escrows its content fee and locks it using the hashes of all encryption keys. The provider sends the encryption keys to redeem the payment. However, this trivial solution compromises the confidentiality of the content: 
the encryption key used for redemption might be revealed on-chain, leading the encrypted content accessible to all relayers and network eavesdroppers\footnote{In case of dispute, the provider might submit the keys on-chain to redeem the payment.}.
To address this issue,  we design a sophisticated key delivery scheme by masking the encryption key with a secret, and then using the secret to redeem the payment. This approach allows the secrets to be released for content fee redemption without compromising the confidentiality of the content. At this point, the problem is formally modeled as a multi-party fee-secret exchange as shown in Fig. \ref{fig:relay_content_exchange}.

However, another challenge remains. By exchanging a secret with a content fee using HTLC, it only guarantees that the secret corresponds to the lock specified in the HTLC. The validity of the secret for decryption is not guaranteed in the aforementioned process. To tackle this challenge, we construct a \textit{proof of misbehavior} scheme to ensure the validity of the provided secret. In this scheme, both the content provider and relayers are required to make a deposit. Before the content delivery, the provider and all relayers make commitments on the secret they used for decryption. When a payment is made and the secret is released, if the secret is not consistent with the commitment, it can be uploaded to an on-chain \textit{Judge Contract} along with the commitment which will penalize the provider/relayer by slashing their deposits. 

In a P2P content delivery instance as shown in Fig. \ref{fig:relay_content_exchange}, a protocol achieves fairness if all payments and secret reveals are settled atomically.

\begin{figure}[htbp]
    \setlength{\abovecaptionskip}{0.1cm}
    \centering
    \includegraphics[width=0.47\textwidth]{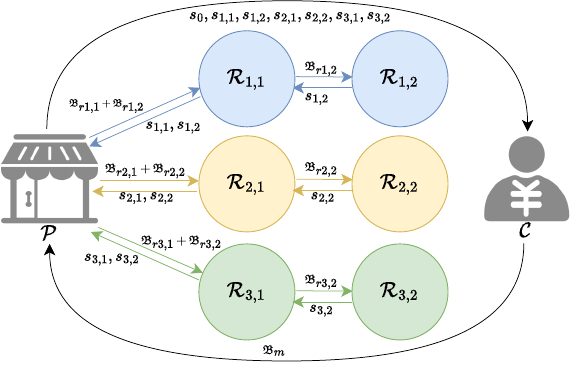}
    \caption{Multi-party Money-Secret Exchange}
    \label{fig:relay_content_exchange}
    \vspace{-0.2cm}
\end{figure}

\vspace{0.5em}
\noindent \textbf{Fair Multi-hop Payment.} \label{key:ahtlc} To ensure atomicity in multi-hop fair payments, we propose an enhanced HTLC scheme called \textit{Accumulative Hashed TimeLock Contract} (\textit{A-HTLC}). Instead of directly revealing secrets to the customer $\mathcal{C}$, each relayer progressively reveals its secret to the preceding hop through the payment channel. While following HTLC's lock-unlock procedure, the unlock condition propagates incrementally from the last relayer to the provider of the payment path.

\begin{figure}[htbp]
    \setlength{\abovecaptionskip}{0.1cm}
    \includegraphics[width=0.47\textwidth]{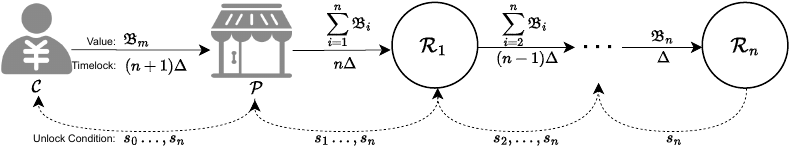}
    \caption{\textit{A-HTLC} Workflow }
    \label{fig:payment_path}
    \vspace{-0.2cm}
\end{figure}

Fig. \ref{fig:payment_path} illustrates the workflow of one specific path in \textit{A-HTLC}. In this scheme, $\C$ locks the content fee payment to the provider $\CP$, requiring the revelation of all secrets $\{s_0, s_{1}, \ldots, s_{n}\}$ before a timeout. Subsequently, $\CP$ locks a conditioned payment to $\mathcal{R}_{1}$ with the unlocking condition being the release of all secrets $\{s_{1}, s_{2}, \ldots, s_{n}\}$. This sequential process continues until the last relayer $\mathcal{R}_{n}$ in this path, where the relayer $\R_{n-1}$ locks the payment using $\R_n$'s secret $s_n$. 
The lock time for each conditioned payment reduces linearly from left to right, ensuring strong atomicity against the \textit{wormhole attack}, 
where no one can unlock their outcome payments until all their incoming payments are unlocked.
In general, \textit{A-HTLC} consolidates atomic secret reveals in a multi-hop payment, and overcomes the \textit{weak atomicity} \cite{malavolta_anonymous_2019} of HTLC in multi-hop payments, thereby preventing relayers from exploiting intermediate relay fees.
 
\vspace{0.5em}
\noindent \textbf{Fair Multi-path Payment.} \label{key:enforce}
While \textit{A-HTLC} addresses the atomicity problem in multi-hop payments, the challenge of achieving atomicity over multiple paths persists due to the asynchronous nature of updates in different paths.
To overcome this challenge, we introduce an \textit{acknowledge-pay-enforce} paradigm based on \textit{A-HTLC}: all relayers reach an "agreement" with the provider on a global deadline $T$ by which all sub-paths must be settled (all payments are unlocked). If any sub-path remains unsettled by $T$, $\mathcal{P}$ has the authority to request the relayers' secrets on-chain, thereby enforcing the settlement of the remaining payment path.
The following outlines the workflow of \textit{acknowledge-pay-enforce} paradigm.

To ensure payment settlement only after reaching an "agreement", $\CP$ generates a synchronizer secret $s_{sync}$ and incorporates its commitment $h_{sync}$ into the unlock condition for each \textit{A-HTLC} payment initiated from $\CP$. A relayer $\mathcal{R}_i$ sends a \textit{lock receipt} to $\CP$ when its incoming channel is locked. The  \textit{lock receipt} states that "If all subsequent relayers reveal their secrets and $s_{sync}$ is revealed on-chain, I will reveal my secret $s_i$". 
Once $\CP$ gathers all receipts, $\CP$ distributes $s_{sync}$, and then each path commences unlocking the payment channel from the last relayer to the first. 
If a path is not unlocked as expected, $\CP$ initiates an on-chain challenge with all receipts to enforce the unlock process in this path, requiring each relayer to sequentially reveal its secret on-chain. Failure to reveal a secret in time results in punishment for the corresponding relayer, with a compensation transferred towards $\CP$.

\subsection{Protocol Overview}

Here is the sketch of our protocol. We transform the content delivery problem into multi-party fee-secret exchanges. To ensure the correct release of secrets, we propose a commitment scheme along with proof of misbehavior mechanism. 
Our protocol proceeds in four main phases: \textit{setup}, \textit{delivery}, \textit{payment} and \textit{decryption}, with two additional \textit{challenge} mechanisms if misbehavior occurs, as described below and visualized in Fig. \ref{fig:overview}. 
\begin{figure}[htbp]
\setlength{\abovecaptionskip}{0.cm}
    \includegraphics[width=0.47\textwidth]{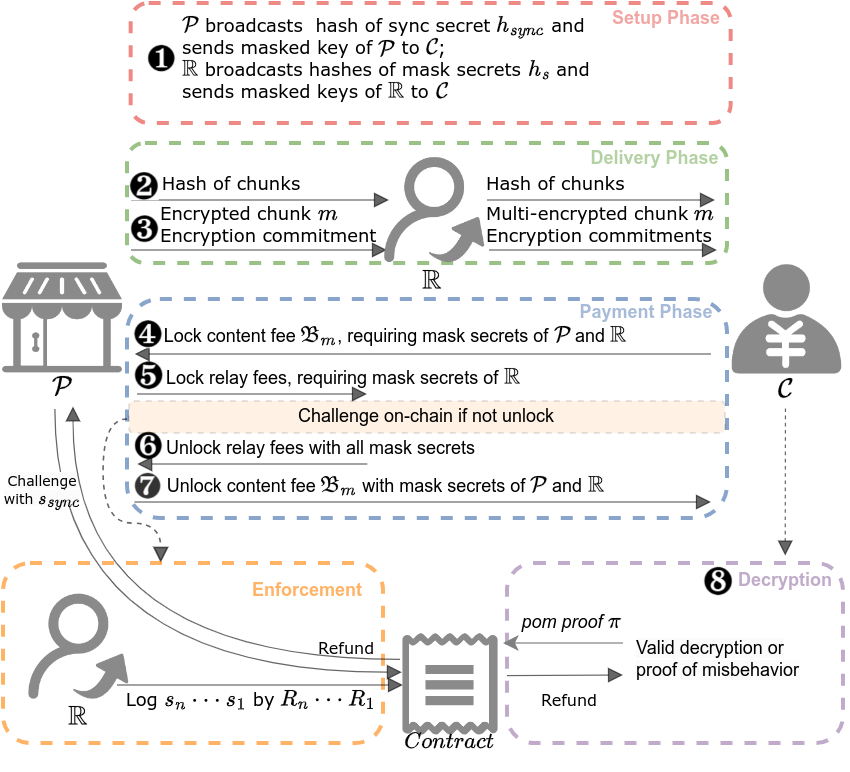}
    \caption{Protocol Overview}
    \label{fig:overview}
    \vspace{-0.2cm}
\end{figure}

During the setup phase, each party involved, excluding $\mathcal{C}$, generates an encryption key and masks the key with a secret. The results after masking are delivered to $\mathcal{C}$ privately, and the hash values of these secrets are utilized as hashlocks in \textit{A-HTLC}.
In the delivery phase, $\mathcal{P}$ encrypts the plaintext chunks and transmits them to the first relayer along with its encryption commitments. Each subsequent relayer further encrypts the incoming chunks and appends encryption commitments before forwarding them.
Once all ciphertexts are delivered, exchanges happen in the payment phase, where $\C$ buys all encryption keys from $\CP$, and $\CP$ buys keys from $\R$s. We utilize \textit{Enforceable A-HTLC} to achieve atomic multi-hop fee-secret exchange. 
In the decryption phase, $\mathcal{C}$ unmasks all the secrets to obtain the encryption keys and verifies them against the commitments. $\mathcal{C}$ decrypts the ciphertexts with the encryption keys and verifies the decryption results against the commitments. If any inconsistency is detected, $\mathcal{C}$ can submit a proof of misbehavior on-chain to request a compensation.

\section{Protocol Construction} \label{sec:protocol_construction}

\subsection{Building Blocks}


\noindent \textbf{Ledger and Channels.} We utilize a ledger $\FL$ modeled in \cite{egger_atomic_2019} and payment channel networks $\FC$ based on the model in \cite{aumayr_thora_2022}. The corresponding functionalities are elaborated in Fig. \ref{fig:ledger_if} and \ref{fig:offchain_payment_channel}. $\FL$ publicly maintains the balances of each user and provides two interfaces, namely \textit{transfer} and \textit{query}. The \textit{transfer} interface allows other ideal functionalities to transfer balances from one user to another, while the \textit{query} interface enables anyone to query a user's latest balance.
$\FC$ enables two parties to perform arbitrary off-chain conditioned payments. 
$\FC$ supports two core operations: the \textit{query} interface allows the participants in a payment channel to look up the latest balance, while the \textit{update} interface allows two parties to reach an agreement on a new balance update with a specific condition $\phi$. Once the condition is fulfilled, any party can update the channel by submitting this agreement with the parameters to $\FC$.
For example, in a channel utilizing a HTLC as the update condition, $\phi$ could specify "provide a preimage of $h$ before time $t$". The condition $\phi(s, ct) = 1$ holds only if the \textit{redeem parameter} $s$ is the preimage of $h$ and the current round time $ct$ is less than $t$. To simplify the notation, we define two functions to build such agreements (detailed in Appendix \ref{app:functions}, Algorithm \ref{algo:lock_unlock}):
\begin{itemize}[leftmargin=*]
    \item $lock(v, sid, cid,amt, \phi)$: This function allows one party $v$ in channel $cid$ to compose a partially-signed update agreement $tx$, transferring $amt$ tokens to the other party $w$. The transferring can be further redeemed by $w$ only if the condition $\phi$ is met. 
    \item $unlock(tx, w, s)$: This function allows the other party $w$ to generate an update message $(\textit{update}, \textit{sid}, cid, lb', rb', \phi, s)$ from the partially-signed update agreement $tx$ and a \textit{redeem parameter}  \textit{s}. If the current round is $ct$ and $\phi( s, ct) = 1$, $w$ can update the latest channel state to $(lb', rb')$, which represents the updated balances in the channel. Specifically, $amt$ tokens are transferred from $v$'s balance $lb$ to $w$'s balance $rb$.
\end{itemize}
\begin{figure}
    \setlength{\abovecaptionskip}{0.1cm}
    \centering
    \begin{mdframed}[userdefinedwidth= \linewidth,frametitlealignment=\center,  frametitle={ $\FL$}, frametitlerule=true, frametitlebackgroundcolor=gray!20, linecolor=gray!50, font=\small,  innertopmargin=2pt, innerbottommargin=2pt, innerrightmargin=5pt, innerleftmargin=5pt]

    \textbf{Local Variable}: $\FC$ maintains a map  $ Balance[uid] \mapsto x  $, where $x$ is the balance of user with $uid$. $Balance$ is public to all parties. 

    \textbf{API}
    \begin{itemize}[leftmargin=*]
        \item \textit{Transfer.} Upon receiving $( \textit{transfer},\textit{sid}, uid_s, uid_r, amt)$ from an ideal functionality of session $sid$:
        If $Balance[uid_s] \geq amt$, then $Balance[uid_s] := Balance[uid_s] - amt$, $Balance[uid_r] := Balance[uid_r] + amt$, send $(\textit{transferred}, sid,uid_s, uid_r, amt ) $ to the session $sid$. Otherwise, send $(\textit{insufficient}, sid, uid_s,  uid_r, amt ) $ to the session $sid$.
        \item \textit{Query.} Upon receiving $(\textit{query}, \textit{sid}, uid)$ from a party of session $sid$: send $(sid, uid, Balance[uid]) $
    \end{itemize}
    
    \end{mdframed}

    \caption{Ideal Functionality of Ledger }
    \label{fig:ledger_if}
    \vspace{-0.5cm}
\end{figure}
\begin{figure}
\setlength{\abovecaptionskip}{0.1cm}
\begin{mdframed}[userdefinedwidth= \linewidth, frametitlealignment=\center, frametitle={$\mathcal{F}_{Channel}$}, frametitlerule=true, frametitlebackgroundcolor=gray!20,  linecolor=gray!50, font=\small, innertopmargin=2pt, innerbottommargin=2pt, innerrightmargin=5pt, innerleftmargin=5pt ]

    \noindent \textbf{Local Variable}: 
    \begin{description}
        \item[$\mathbb{C}$]: a map  $ \mathbb{C}[cid] \mapsto \gamma  $, where $\gamma = \{lu, ru, lb, rb \}$. Here, $cid$ is the channel identifier, $lu$ is the left user id, $ru$ is the right user id, $lb$ is the left user balance, $rb$ is the right user balance. 
        \item[$ct$]: current round time retrieved from the global clock $\Fclk$. 
    \end{description}
\textbf{API}: 
\begin{itemize}[leftmargin=*]
    \item \textit{Update}.  Upon receiving $(\textit{update}, \textit{sid}, cid, lb', rb', \phi , s)$ from $\mathbb{C}[cid].lu $ or $\mathbb{C}[cid].ru$, where $(lb', rb')$ is the new state balance of channel $cid$, $\phi$ is the update condition, and $s$ is the condition parameters. 
    If both party agree the payment condition $ (cid, lb', rb', \phi)$ and the update condition is satisfied
    ($\phi(s, ct) = 1$), $\FC$ updates $\mathbb{C}[cid].lb = lb'$, $\mathbb{C}[cid].rb = rb'$, and sends $(\textit{updated}, \textit{sid}, cid, s)$ to both parties in the channel. 
    Otherwise, $\FC$ sends $(\textit{update-fail}, \textit{sid}, cid, s)$. $\FC$ will leak the \textit{update}/\textit{update-fail} message to $\Sim$. 
    \item \textit{Query}. Upon receiving $(\textit{query}, \textit{sid}, cid)$ from $\mathbb{C}[cid].lu $ or $ \mathbb{C}[cid].ru$, $\FC$ returns the latest channel state $(sid, cid,\mathbb{C}[cid].lb, \mathbb{C}[cid].rb)$.

\end{itemize}

\end{mdframed}
        \caption{Ideal Functionality of Payment Channels}
	\label{fig:offchain_payment_channel}
 \vspace{-0.5cm}
\end{figure}

\noindent \textbf{A-HTLC and Enforceable A-HTLC.} 
Extending HTLC's single-hash lock, \textit{A-HTLC} introduces a multi-hash lock mechanism where the payment condition $\phi_{\textit{A-HTLC}}$ is defined by a list of hashes $\mathbb{H}$ and a timelock $t$ ($\phi_{\textit{A-HTLC}} := \text{Construct}(\mathbb{H}, t)$). A payment locked by \textit{A-HTLC} can only be released if all preimages $\mathbb{S}$ corresponding to $\mathbb{H}$ are provided before the deadline $t$. This framework supports an atomic multi-hop payment scheme. 
In this context, the provider $\CP$ disburses a relay fee $\mathfrak{B}_i$ to each relayer $\R_i$ in a sequence of $n$ relayers, where each $\R_i$ holds the secret $s_i$ (with corresponding hash $h_i$), and the channel between $\R_{i-1}$ ($\R_0$ denotes the provider $\CP$) and $\R_i$ is denoted as $cid_i$.

$\CP$ initiates a conditional payment to $\R_1$ by creating the condition payment $tx_1$ ($tx_1 := \textit{lock}(\CP, sid, cid_1, \sum_{j=1}^{n} \mathfrak{B}_j, \phi_1)$), where $\phi_1 := \text{Construct}(\{h_1, \ldots, h_n\}, t_1)$. Each subsequent relayer $\R_i$ forwards a sub-payment $tx_{i+1}$ to $\R_{i+1}$, requiring the disclosure of secrets $\{s_i, \ldots, s_n\}$ before round $t_i$. Upon receiving $tx_n$, $\R_n$ redeems the payment by submitting $unlock(tx_n, \R_n, s_n)$, thereby disclosing $s_n$ to $\R_{n-1}$. This cascade ensures each relayer redeems its incoming payment, allowing secrets to accumulate from $\R_n$ back to $\CP$. 

Leveraging the "acknowledge-pay-enforce" paradigm (introduced in Section \ref{key:enforce}), we introduce the \textit{Enforceable A-HTLC} to guarantee the atomicity of all payments across multiple paths. In this scheme, $\CP$ selects a random synchronizer secret $s_{sync}$ and incorporates its hash ($h_{sync}$) into the hashlock of its all outgoing \textit{A-HTLC} payments. Upon receiving a valid payment $tx_i$ locked with $h_{sync}$, each $\R_i$ includes $h_{sync}$ in the subsequent payment's hashlock and sends a \textit{lock receipt} back to $\CP$. This receipt asserts that $\R_i$ will disclose its secret $s_i$ if all subsequent secrets $\{s_{i+1}, \ldots, s_n\}$ and $s_{sync}$ are revealed on-chain. Once $\CP$ collects all \textit{lock receipts}, $\CP$ releases $s_{sync}$, initiating the payment redemption process. Each payment must be sequentially redeemed by all relayers from $\R_n$ to $\R_1$. If the redemption process stalls, $\CP$ can enforce the disclosure by submitting an \textit{enforcement} request along with $s_{sync}$ to \textit{Judge Contract}, requesting relayers to reveal their secrets on-chain. Non-compliance results in penalties.

\vspace{0.3em}
\noindent \textbf{Commitment and PoM Construction} 
To ensure that the customer can access the content once certain secrets are revealed, we introduce commitment and proof of misbehavior schemes for masking and encryption.
The commitment scheme on masking ($MCOM$) allows a node $v$ to generate a commitment asserting that "$v$ possesses a key $sk$ and a mask secret $s$, and the mask result is $ck$ ($ck:= s \oplus sk$)". The commitment $com_{mask}$ consists of $(h_{sk}, h_s, ck, \sigma)$, where $h_{sk} := Commit(sk)$, $h_s := Commit(s)$, and $\sigma$ represents the signature of $v$ for $(h_{sk}, h_s, ck)$. Once the mask secret $s$ is revealed, the holder of $com_{mask}$ can decrypt $ck$ and obtain an encryption key $sk'$ ($sk' := ck \oplus s$).
The commitment scheme on encryption ($ECOM$) enables a node $v$ to generate a commitment asserting that "$v$ encrypts a data chunk $m_i$ in content with index $i$ using the key $sk$, and the encryption result is $c$". The commitment $com_{enc}$ consists of $(h_{m}, h_c, h_{sk}, i, \sigma)$, where $h_{m} := Commit(m_i)$, $h_c := Commit(c)$, $h_{sk}$ denotes the commitment of the key $sk$, and $\sigma$ refers to the signature of $v$ on $(h_{m}, h_c, h_{sk})$. When the encryption key $sk$ and the ciphertext $c$ committed by $h_{c}$ are revealed, the recipient of $com_{enc}$ can unveil a plaintext $m_i'$.
If the decryption result is inconsistent with the commitment on masking or encryption, the recipient can generate a proof of misbehavior against $v$ (using PoM scheme $PoME$ or $PoMM$) and submit to the \textit{Judge Contract}, alleging this inconsistency. The commitment and PoM scheme on masking and encryption formally defined in Appendix \ref{app:pomm} and  \ref{app:pome}.

\begin{figure*}[ht]
    \center 
    \setlength{\abovecaptionskip}{0.cm}
    \includegraphics[width=1\textwidth]{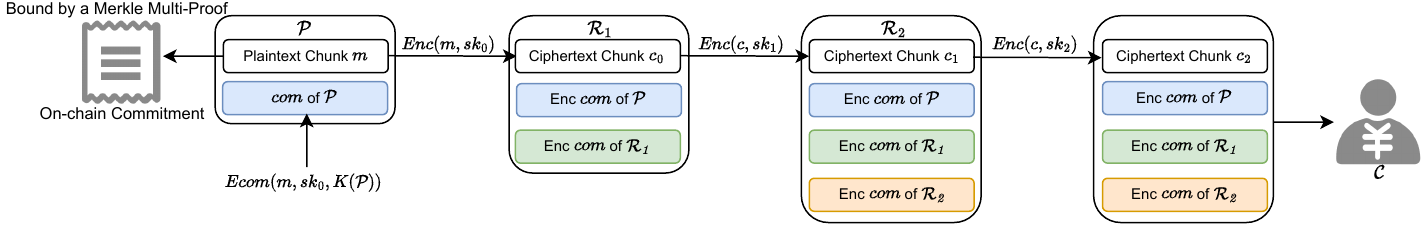}
    \caption{Encryption and Commitments in FairRelay: chunk $m$ is relayed from $\mathcal{P}$ to $\mathcal{C}$ through 2 relays. Each relay appends its encryption commitment to ciphertext and the commitments in a relay path form a chain $com_m$.}
    \label{fig:enc}
\end{figure*}

\begin{figure*}[htb]
    \setlength{\abovecaptionskip}{0.1cm}
    \vspace*{-0.2cm}
    \begin{mdframed}[frametitle={Ideal Functionality of \textit{Judge Contract} $\Fjc$}, frametitlealignment=\centering, frametitlerule=true, frametitlebackgroundcolor=gray!20, font=\small, innerleftmargin=1mm, innerrightmargin=1mm]

         All providers and relayers should have a sufficient security deposit locked in $\FL$. 

        \noindent \textbf{Local variables:} 
        \begin{description}
            \item[$\mathbb{RT}$]: the registration table mapping the content commitment $com_m$ to the committers' IDs and content price.
            \item[$\mathfrak{B}_{max}$]: maximum price for a content.
            \item[$Logs$]: a set of challenges submitted.
            \item[$ct$]: the current global round in the global clock.
        \end{description}
\textbf{API} 
\begin{itemize}[leftmargin=*]
    \item \textit{Register.} Upon receiving $(\textit{register}, sid, com_m, \mathfrak{B}_m)$ from $uid$: query balance $(sid, uid, x) \leftarrow \FL $, if $x \gg \mathfrak{B}_{max} \gg \mathfrak{B}_m$, set $\mathbb{COM}[com_m] = (uid, \mathfrak{B}_m)$.

    \item \textit{PoMM.} Upon receiving $( \textit{pomm}, sid, \pi_{pomm}, tid)$ from a user $uid$: If $PoMM.Ver(\pi_{pomm}, tid) = 1$: $(\textit{transfer}, sid, tid, uid, \mathfrak{B}_{max} ) \rightarrow\FL$. 

    \item \textit{PoME.} Upon receiving $(\textit{pome}, sid, \pi_{pome}, tid)$ from a user $uid$: If $PoME.Ver(\pi_{pome}, tid) = 1$:$(\textit{transfer}, sid, tid, uid, \mathfrak{B}_{max} ) \rightarrow \FL$. 
\end{itemize}
\begin{center}
    \textit{Enforcement Handler} 
\end{center}
\begin{itemize}[leftmargin=*]
    \item \textit{Enforce.} (Round $t_r$) Upon receiving $(\textit{enforce}, sid,  Ch, \Sigma, s' )$ from $\CP$:
    \begin{itemize}[leftmargin=*]
        \item $n:= |Ch.ADDR| - 1$, parse $\{h_1, \ldots, h_n\} := Ch.\mathbb{H}$. Parse identity of all participants from $Ch.ADDR$, $(\CP, \R_1, \ldots, \R_n)$
        \item If $(t_r < Ch.T ) \land (Ch \notin Logs) \land (Open(s', Ch.h_{0}) = 1)$, and ($\Sigma$ contains all parties' lock receipt): $Logs: = Logs \cup \{Ch\}$, broadcast $(\textit{enforced}, sid,   s' )$. Create a new empty log $\mathbb{S}$. 
    \end{itemize}

    \item \textit{Response.} (Round $t_r + n - i + 1$ ) Upon receiving (\textit{log}, $sid$,   $i$, $s_i$) from $R_i$ : if $Open(s_i, h_i) = 1$ and ($\mathbb{S}[i+1] \neq \bot$ or $i = n$) , set $\mathbb{S}[i] := s_i$,  and broadcast $(\textit{logged}, sid, i, \mathbb{S})$ to all parties. 
    If no valid $\textit{log}$ message received in this round, set $\mathbb{S}[i] := \bot$

    \item \textit{Punishment.} (Round $t_r + n + 1$ ) Upon receiving (\textit{punish}, $sid$) from $\CP$, if not all relayers log their secrets, let $\mathcal{R}_x$ be the relayer that fails to log the secret and has the highest index.
    Send $(\textit{transfer}, sid, \R_x, \CP, \mathfrak{B}_{max}) \rightarrow \FL$.
    Upon receiving $(\textit{transferred}, sid, \R_x, \CP,  \mathfrak{B}_{max})$ from $\FL$, broadcast $(\textit{punished}, sid, \R_x, \CP, \mathfrak{B}_{max})$. 
\end{itemize}
\end{mdframed}
\caption{Ideal functionality of \textit{Judge Contract}}
\label{fig:if_contract}
\vspace{-0.2cm}
\end{figure*}

\vspace{0.3em}
\noindent \textbf{Judge Contract.} We present the \textit{Judge Contract} as an ideal functionality $\Fjc$, depicted in Fig. \ref{fig:if_contract}, which manages disputes and facilitates content registration.
For global content integrity, any content $m$ intended for sale must be registered on the \textit{Judge Contract}'s \textit{registration table} $\mathbb{RT}$ using the \textit{register} interface. Each provider and relayer must have a sufficient deposit locked on $\FL$ accessible by $\Fjc$ for accountability.
The \textit{PoMM}/\textit{PoME} interfaces accept valid proofs of misbehavior related to masking/encryption issues involving a provider or a relayer. Upon receiving such proofs of misbehavior, it initiates a compensation process, returning a predetermined amount $\mathcal{B}_\text{max}$ to the customer(compensation amount $\mathcal{B}_\text{max}$ is greater than any content price $\mathfrak{B}_m$, restricted by the \textit{register} interface). 
The \textit{Enforcement Handler} manages the enforcement logic in the \textit{Enforceable A-HTLC}.

\subsection{Protocol Details} \label{sec:construction}

Consider a customer $\C$ seeks to obtain a digital content $m$ (comprises $n$ fixed-size data chunks denoted as $\{m_1, \ldots, m_n\}$) with a price of $\mathfrak{B}_m$. This price is determined by a Merkle root $com_m$ and is offered by a provider $\CP$.
Assuming that multiple relay paths can be found, the content delivery process involves the participation of several relayers who contribute their bandwidth in exchange for relay fees over the PCN.
The delivery graph $\mathcal{G}$ is publicly accessible to all participants.
For ease of exposition, we will first describe the fair content delivery over a singular relay path $p$. Then, we will discuss the extension of this solution to multi-path scenarios.

\subsubsection{\textbf{FairRelay in the single-path scenario}}
The singular path solution operates as follows.

\vspace{0.3em}
\noindent \textbf{Setup Phase.}
$\mathcal{C}$ initiates the protocol by broadcasting an \textit{init} message to $\CP$ and all relayers, triggering the setup phase. 
Denotes all relayers and the provider as $\mathbb{R} \cup \CP$, a node $v$ in  $\mathbb{R} \cup \CP$ generates:  
(1) a symmetric encryption key $sk$ for encrypting all chunks and its commitment $h_{sk}:= Commit(sk)$,
(2) a secret $s$ used to mask the encryption key and its commitment $h_{s}:= Commit(s)$, 
(3) a commitment on masking ($com_{mask}:= (h_{sk}, h_s, ck)_{\sigma}$) including the mask result $ck$ ($ck:= s \oplus sk$) signed by $v$. 
Afterward, $v$ privately sends $com_{mask}$ to $\mathcal{C}$ by encrypting it using $\mathcal{C}$'s public key. Upon receiving all valid commitments on masking and hashes, $\mathcal{C}$ triggers the delivery phase. A formal decryption is demonstrated in Fig. \ref{constrction:setup2}.

\vspace{0.3em}
\noindent \textbf{Delivery Phase.}
To ensure the integrity of the selling content $m$, the provider $\CP$ first sends the hashes ($\mathbb{H}_m$) of all content chunks to $\mathcal{C}$ through the relay path $p$. 
$\mathbb{H}_m$ serves as the digest for the chunks be relayed, ensuring that the final decryption result will match the content $m$. 
 Upon receiving them, $\mathcal{C}$ constructs a Merkle root $com_m'$ from these hashes and checks if $com_m' = com_m$. A match indicates that once the preimages (chunks of the content) for all these hashes are revealed, $\mathcal{C}$ must have received all the chunks of $m$.
Next, $\CP$ encrypts each content chunk using its encryption key and forwards the encrypted chunks to the next hop, along with its encryption commitment. Each relayer then encrypts the incoming chunks again using its own key, attaches its encryption commitment, and forwards them to the next hop. The encryption commitments accumulate with the chunks relayed hop-by-hop, forming a \textit{commitment chain}, where each commitment in this chain mapping to a layer of encryption.

Let's consider the $j$-th chunk $m_j$ as an example. $\CP$ first encrypts it using its own key $sk_0$, generating ciphertext $c_{0, j}$. $\CP$ then constructs a commitment on the encryption, denoted as $com_{enc}^{0,j}$, for this chunk. $\CP$ forwards $c_{0, j}$ and $com_{enc}^{0,j}$ to the first relayer $\R_1$. $\R_1$ encrypts the ciphertext $c_{0, j}$ using its own key $sk_1$, resulting in ciphertext $c_{1, j}$, and generates a commitment $com_{enc}^{1,j}$ for this encryption.
$\R_1$ then forwards $c_{1, j}$ and $\{com_{enc}^{0,j}, com_{enc}^{1,j}\}$ to the customer $\C$. Upon receiving the ciphertext and the commitment chain, $\mathcal{C}$ verifies the following:
 (1) each encryption commitment is properly signed by each encoder, (2) the \textit{commitment chain} linked from the ciphertext $c_{1,j}$ to the $j$-th hash in the hash list $\mathbb{H}_m$ (detailed on customer $\C$'s description in Fig. \ref{constrction:delivery2}). A valid linked \textit{commitment chain} guarantees that once $\C$ gets all the encryption keys committed in this chain, $\C$ can properly decrypts the ciphertext $c_{1,j}$ layer-by-layer, generating chunk $m_j$. Otherwise, $\C$ can locate the dishonest encoder and submit a proof of misbehavior to $\Fjc$, seeking a compensation (detailed on the Extract function in Appendix \ref{alg:extcontent}). 
Fig. \ref{fig:enc} demonstrates how a plaintext chunk is committed and encrypted in a two-hop path, and formal descriptions are demonstrated in Fig. \ref{constrction:delivery2}. 
        
\vspace{0.3em}
\noindent \textbf{Payment Phase.}
To pay the content fee $\mathfrak{B}_m$, there exists a payment path from $\C$ to $\CP$. For better clarity of the protocol description, we consider this payment path as a direct payment channel, as its security is not affected (see Appendix \ref{app:loose}).
Once the relay path completes the delivery of all ciphertext chunks, $\C$ enters the payment phase. $\C$ initiates a conditioned payment of $\mathfrak{B}_m$ to $\CP$, with the unlocking condition that $\CP$ must reveal all mask secrets $s$ from $\R$ and $\CP$ before round $t_0$.
Subsequently, $\CP$ makes a conditioned payment to the first relayer $\R_1$ with an amount equal to the total relay fees in the path. The unlocking condition is that $\R_1$ must reveal all subsequent secrets ($\R_2$ to $\CP$) before round $t_1$. This process stops at the last relayer in the path, where the unlocking condition reduces one hash at each lock (as shown in Fig. \ref{fig:payment_path}).
Once the incoming channel for the last relayer $\R_{|p|}$ is properly locked, $\R_{|p|}$ redeems this payment by calling the \textit{update} interface of $\FC$, revealing its mask secret $s_{|p|}$ to $\R_{|p|-1}$. The secrets accumulated from $\R_{|p|}$ to $\CP$ through the payment channel update process, and once $\CP$'s outcoming channel is updated, $\mathcal{P}$ gathers all necessary secrets to unlock $\mathcal{C}$'s payment, redeeming the content fee $\mathfrak{B}_m$. If any relayer $R_i$ fails to redeem its incoming payment, all its left party ($\R_r$, $r < i$) will refuse to reveal its secrets, expiring the conditioned payment, resulting the termination of the protocol, see Fig. \ref{constrction:payment2}.

\vspace{0.3em}
\noindent \textbf{Decryption Phase.}
    Upon redeem of payment from $\mathcal{C}$, $\mathcal{C}$ obtains all necessary mask secrets to derive the encryption keys:
    $ sk' := ck \oplus s$.
    $\mathcal{C}$ then verifies whether the revealed $sk'$ match the corresponding mask commitment (see Fig. \ref{constrction:decryption2}). If not, $\mathcal{C}$ will generate and submit a proof of misbehavior on masking on-chain to punish the dishonest node and seek a compensation. ($ExtKey$ in Algorithm \ref{alg:extkey}) 
    Once all keys are validated, $\mathcal{C}$ proceeds to decrypt each ciphertext chunk layer by layer using these keys to recover the content $m$.
    At each decryption layer, $\mathcal{C}$ confirms the encryption commitment with the intermediate decryption result. If the encryption commitment is not consistent with the decryption result, $\mathcal{C}$ can submit a proof of misbehavior on encryption on-chain to punish the dishonest node then request a compensation ($Extract$ in Algorithm \ref{alg:extcontent}).  
    
\subsubsection{\textbf{FairRelay in the multi-path scenario}}
Consider the multi-path content delivery $\mathcal{G}$, which consists of $\eta$ relay paths.
In the $k$-th relay path $p_k$, $\R_{k, i}$ denotes the $i$-th relayer, $\mathfrak{B}_{k,i}$ denotes the corresponding relay fee, $t_{k, i}$ denotes the incoming payment timelock to $\R_{k, i}$. 
The maximum length of the relay path determines the publicly known ciphertext delivery deadline $T_1$ and the payment deadline $T_2$ (ensuring protocol \textit{liveness}).
Next, we outline the differences between the multi-path solution and the single-path solution. 

\vspace{0.3em}
\noindent \textbf{Setup Phase. } (Detailed in Fig. \ref{constrction:setup})
In the single-path scenario, $v \in \mathbb{R}\cup \CP$ only needs to send the commitment on masking to $\C$ privately. However, in the multi-path case, $v$ needs to additionally broadcast the hashes of the mask secret ($h_{s} := \text{Commit}(s)$), which will be used to construct the \textit{enforcement challenges} $Ch$ in the \textit{enforceable A-HTLC} scheme. 
Furthermore, $\mathcal{P}$ will generate a synchronizer secret $s_{\text{sync}}$ and broadcast its hash $h_{\text{sync}} := \text{Commit}(s_{\text{sync}})$ to all nodes. This synchronizer secret is used to ensure the atomicity of all payments. 
Each relay path $p_k$ has its own \textit{enforcement challenge} $Ch_k$, which consists of the enforcement deadline $T_2$, the hashes of all mask secrets ($\mathbb{H}_k$) from each relayer in this path, the hash $h_{\text{sync}}$, and the addresses $ADDR_k$ of the provider and all relayers in $p_k$.

\vspace{0.3em}
\noindent \textbf{Delivery Phase.} (Detailed in Fig. \ref{constrction:delivery})
Instead of delivering hashes of all chunks in one path, $\CP$ distributes hashes along with a Merkle multi-proof $\pi_{\text{merkle}}^k$ to $\mathcal{C}$ in each relay path $p_k$. The chunk delivery process remains the same as in the single-path solution.

\vspace{0.3em}
\noindent \textbf{Payment Phase.} (Detailed in Fig. \ref{constrction:payment})
Once all $\eta$ paths have completed the ciphertext delivery, $\mathcal{C}$ enters the payment phase. $\mathcal{C}$ initiates a conditioned payment of $\mathfrak{B}_m$ to $\mathcal{P}$ in exchange for all mask secrets $s$ from each relayer $\mathcal{R}$ and the provider $\mathcal{P}$.
For each relay path, $\mathcal{P}$ locks a payment to the first relayer $\R_{k,1}$ in $p_k$ with an amount of $\sum_{i=1}^{|p_k|}\mathfrak{B}_{k,i}$ (the sum of relay fees in path $p_k$). In the multi-path case, the first relayer additionally is required to reveal $s_{\text{sync}}$ to unlock this payment, enabling $\CP$ to control the initiation of the redeem process.

Once a relayer $\R_{k,i}$ confirms the proper locking of its incoming channel, it acknowledges this by sending a \textit{lock receipt} (a signature on $Ch_k$) back to $\CP$ and locks the next channel. 
Upon collecting all receipts from all relayers, $\mathcal{P}$ broadcasts $s_{\text{sync}}$, allowing $\R_{k, |p_k|}$ to unlock the payment by revealing their own secret along with $s_{\text{sync}}$ to $\FC$. Intermediate relayers append their own secret to unlock their incoming channel upon receiving the corresponding secrets from the update of their outgoing channel.
If all paths are successfully unlocked, $\mathcal{P}$ obtains all secrets to unlock $\mathcal{C}$'s payment.

If a payment path $p_k$ fails to redeem the payment within the specified time, $\CP$ has the option to submit $Ch_k$ (including $s_{\text{sync}}$) and all receipts from the stalled path on-chain. This submission triggers a request for all relayers in the path to reveal their mask secrets on-chain within a designated time window. The revealing process starts from the last relayer and proceeds towards the first, with each intermediate relayer disclosing their secret only after all subsequent relayers have done so.
Upon completion of the revealing process, $\mathcal{P}$ can unlock the payment from $\mathcal{C}$. The first relayer who fails to reveal their secret on-chain will face consequences. This includes a predefined compensation $\mathcal{B}_{\text{max}}$ ($\mathcal{B}_{\text{max}}$ is greater than the sum of relay fees) being transferred from the non-compliant relayer to the provider.

\vspace{0.3em}
\noindent \textbf{Decryption Phase.}
The decryption phase is the same as the single solution, illustrated on Fig. \ref{constrction:decryption}.

\subsubsection{\textbf{Timelocks}}
In this section, we will discuss the timelocks in single-path and multi-path solutions.

\noindent \textit{Single-path scenario:} In the single-path protocol, $\C$ first makes the conditioned payment redeemable until round $t_0$. Once $\CP$ receives this payment, if the payment amount and the redeem condition, including the revelation of $s_0$, are satisfied, $\CP$ makes a new conditioned payment to $\R_1$ with a timelock of $t_0 - 1$. This ensures that once $\R_1$ redeems $\CP$'s payment, $\CP$ has an additional round to redeem the payment from $\CP$, ensuring fairness for $\CP$. Each honest relayer $\R_i$ applies the same logic to handle the incoming conditioned payment, ensuring that the outgoing timelock is earlier than the incoming one.

\sloppy 
\noindent \textit{Multi-path scenario:} In the multi-path protocol, each participant in a delivery instance $\mathcal{G}$ configures a delivery deadline $T_1 := 5 + \max(|p_k|)_{p_k \in \mathcal{G}}$ and an enforcement deadline $T_2 := T_1 + 2\max(|p_k|)_{p_k \in \mathcal{G}} + 5$ based on the length of the longest relay path. Honest $\CP$ accepts its incoming payment only if the corresponding timelock $t_0 := T_2 + \max(|p_k|)_{p_k \in \mathcal{G}} + 2$, and honest $\R_{k,i}$ accepts its incoming payment only if its timelock $t_{k, i} = T_2 + |p_k| - i + 2$. This configuration ensures the following: (1) For a provider or a relayer, the outgoing payment timelock is smaller than the incoming payment timelock. This is evident as $t_0 = T_2 + \max(|p_k|)_{p_k \in \mathcal{G}} + 2 > t_{k, 1} = T_2 + |p_k| + 1$, and $t_{k, i} > t_{k, i+1}$. (2) Once a relayer $\R_{k, i}$ is enforced to reveal its secret on-chain, it has additional rounds to redeem its incoming payment. In our protocol, the provider can only submit an enforcement against a path $p_k$ before round $T_2$. Therefore, the deadline for $\R_{k, i}$ to reveal its secret on-chain is round $T_2 + |p_k| - i$, where $t_{k, i} = T_2 + |p_k| - i + 2 > T_2 + |p_k| - i$.
\begin{figure*}[htbp]
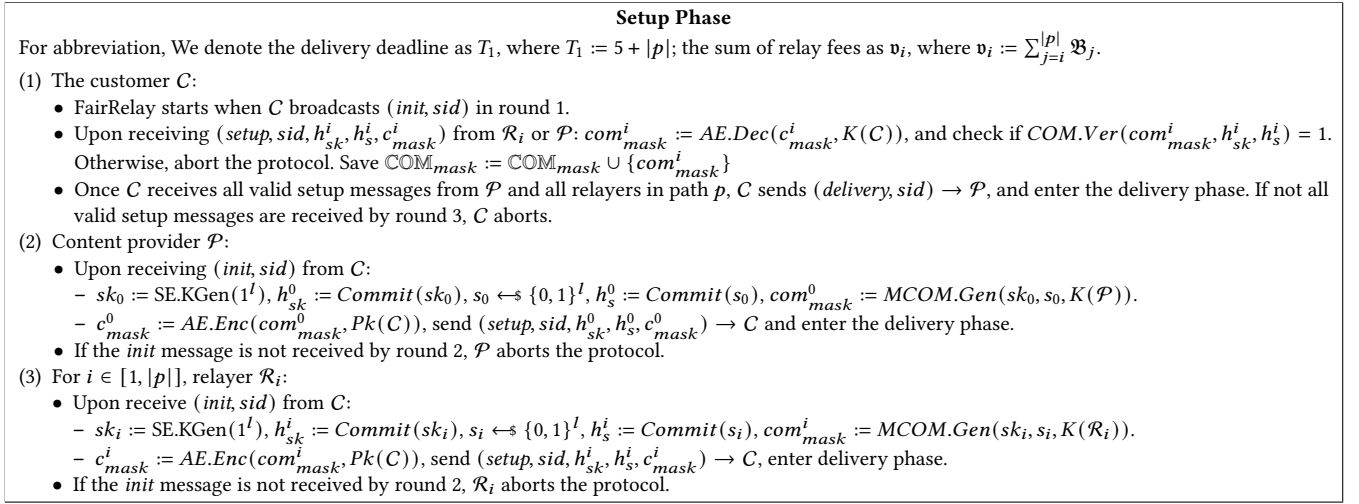

\setlength{\abovecaptionskip}{0.1cm}
    \vspace*{-0.3cm}
    \begin{mdframed}[style=MyFrame]

        \begin{center}
            \textbf{Setup Phase}
        \end{center}
    For abbreviation, We denote the delivery deadline as $T_1$, where $T_1 := 5 + |p|$; the sum of relay fees as $\mathfrak{v}_{i}$, where $\mathfrak{v}_{i}:= \sum_{j=i}^{|p|} \mathfrak{B}_{j}$.

        \begin{enumerate}[leftmargin=*]
            \item The customer $\C$:
                \begin{itemize}
                \item FairRelay starts when $\C$ broadcasts $(\textit{init}, sid)$ in round 1. 
                
                \item Upon receiving $(\textit{setup}, sid, h_{sk}^{ i},h_{s}^{ i},  c_{mask}^{i})$ from $\R_{i}$ or $\CP$: $com_{mask}^{i}: = AE.Dec(c_{mask}^{ i}, K(\C)) $, and check if $COM.Ver(com_{mask}^{ i},  h_{sk}^{ i}, h_{s}^{i}) =1$. Otherwise, abort the protocol. Save $ \mathbb{COM}_{mask}:=  \mathbb{COM}_{mask} \cup \{ com_{mask}^{i} \}$
                
                \item Once $\C$ receives all valid setup messages from $\CP$ and all relayers in path $p$, $\C$ sends $(\textit{delivery}, sid) \rightarrow \CP$, and enter the delivery phase. If not all valid setup messages are received by round 3, $\C$ aborts. 
                \end{itemize}

            \item Content provider $\CP$: 
            \begin{itemize}
                \item Upon receiving $(\textit{init}, sid)$ from $\mathcal{C}$:  
                  \begin{itemize}[leftmargin=*]
                      \item $sk_{0}: =  \text{SE.KGen}(1^l)$, $h_{sk}^0 := Commit(sk_{0})$, $s_{0} \leftarrowS \{0, 1\} ^l $,  $h_{s}^0 := Commit(s_{0})$, $com_{mask}^0 := MCOM.Gen(sk_0, s_0,  K(\mathcal{P}) )$. 
                      \item    $c_{mask}^{0}:= AE.Enc(com_{mask}^{0}, Pk(\C))$, send $(\textit{setup}, sid,h_{sk}^0, h_{s}^0, c_{mask}^{0}) \rightarrow \C$ and enter the delivery phase. 
                  \end{itemize}
                   \item If the \textit{init} message is not received by  round 2, $\CP$ aborts the protocol. 
                  
            \end{itemize}
               
            \item For $i \in [1, |p|]$, relayer $\mathcal{R}_{i}$: 
            \begin{itemize}[leftmargin=*]
                \item  Upon receive $(\textit{init}, sid)$ from $\mathcal{C}$:  
                  \begin{itemize}[leftmargin=*]
                      \item $sk_{i}: = \text{SE.KGen}(1^l)$, $h_{sk}^{i} := Commit(sk_{i})$, $s_{i} \leftarrowS \{0, 1\} ^l $, $h_{s}^{i} := Commit(s_{ i})$, $com_{mask}^{i} := MCOM.Gen(sk_{i}, s_{i}, K(\R_{i})) $. 
                       \item   $c_{mask}^{i}:= AE.Enc(com_{mask}^{i}, Pk(\C))$, send $(\textit{setup}, sid, h_{sk}^{ i},h_{s}^{i}, c_{mask}^{i}) \rightarrow \C$, enter delivery phase. 
                  \end{itemize}
                  \item If the \textit{init} message is not received by round 2, $\R_{i}$ aborts the protocol. 
                  
            \end{itemize}

        \end{enumerate}
    \end{mdframed}
    \caption{Setup Phase of FairRelay in single path scenario}
    \label{constrction:setup2}
\end{figure*}

\begin{figure*}[htbp]
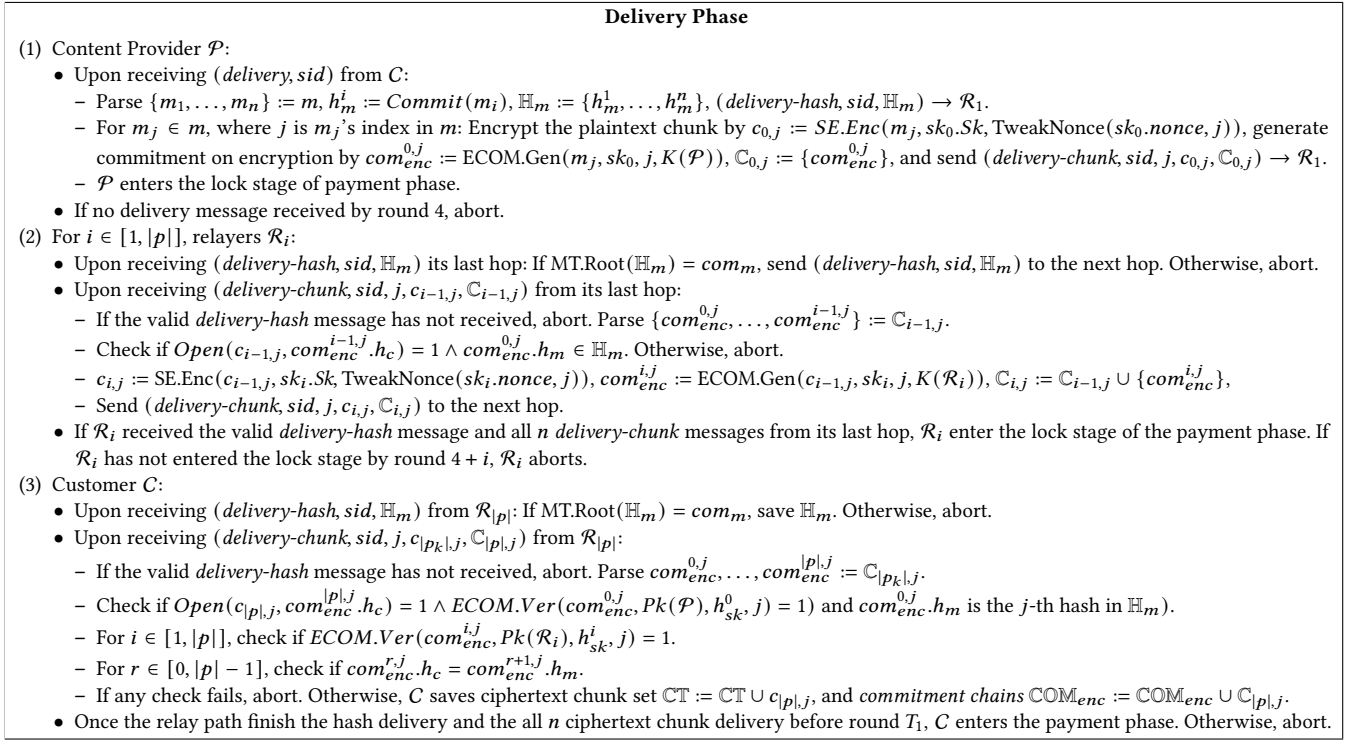

\setlength{\abovecaptionskip}{0.1cm}
    \begin{mdframed}[style=MyFrame]
        \begin{center}
            \textbf{ Delivery Phase }
        \end{center}
    
        \begin{enumerate}[leftmargin=*]

        \item Content Provider $\CP$: 
            \begin{itemize}
                \item Upon receiving $(\textit{delivery}, sid)$ from $\C$: 
                \begin{itemize}[leftmargin=*]
                      \item Parse $\{m_1, \ldots, m_n\}:= m$, $h_m^i := Commit(m_i)$, $\mathbb{H}_m := \{h_m^1, \ldots, h_m^n\}$,  $(\textit{delivery-hash}, sid, \mathbb{H}_m) \rightarrow \mathcal{R}_{1}$. 
                      \item For $m_j \in m$, where $j$ is  $m_j$'s index in $m$: Encrypt the plaintext chunk by $  c_{0,j} := SE.Enc(m_j, sk_0.Sk, \text{TweakNonce}( sk_{0}.nonce , j))$, generate commitment on encryption by $com_{enc}^{0,j} := \text{ECOM.Gen}( m_j,  sk_{0}, j, K(\CP) )$, $\mathbb{C}_{0,j} : = \{com_{enc}^{0,j}\}$, and send $( \textit{delivery-chunk}, sid, j, c_{0,j},\mathbb{C}_{0,j}) \rightarrow \mathcal{R}_{1}$.
        
                        \item $\CP$ enters the lock stage of payment phase. 
                \end{itemize}
                \item If no delivery message received by round 4, abort. 
               
            \end{itemize}
        \item For  $i \in [1, |p|]$, relayers $\R_{ i}$: 
            \begin{itemize}
                \item  Upon receiving $(\textit{delivery-hash}, sid, \mathbb{H}_m)$ its last hop: If $\text{MT.Root}( \mathbb{H}_m) =com_m$,  send $(\textit{delivery-hash}, sid, \mathbb{H}_m)$ to the next hop. Otherwise, abort. 

                \item  Upon receiving $( \textit{delivery-chunk}, sid, j,  c_{i-1,j}, \mathbb{C}_{i-1,j})$ from its last hop: 
                \begin{itemize}[leftmargin=*]
                    \item If the valid $\textit{delivery-hash}$ message has not received, abort. Parse $\{ com_{enc}^{0,j}, \ldots,  com_{enc}^{i-1,j} \} := \mathbb{C}_{i-1,j}$.
                      \item Check if $ Open(c_{i-1,j}, com_{enc}^{i-1,j}.h_c) = 1 \land com_{enc}^{0,j}.h_m \in \mathbb{H}_m$. Otherwise, abort. 
                      \item $c_{i,j} := \text{SE.Enc}( c_{i-1,j}, sk_{ i}.\textit{Sk}, \text{TweakNonce}(sk_{ i}.nonce, j) )$, $com_{enc}^{i,j} := \text{ECOM.Gen}(c_{i-1,j},  sk_{i}, j,  K(\mathcal{R}_{ i}) )$, $\mathbb{C}_{i,j} : = \mathbb{C}_{i-1,j} \cup \{com_{enc}^{i,j} \}$, \item Send $(\textit{delivery-chunk},sid, j, c_{i,j}, \mathbb{C}_{i,j}) $ to the next hop. 
                  \end{itemize}
                \item  If $\R_{ i}$ received the valid $\textit{delivery-hash}$ message and all $n$ \textit{delivery-chunk} messages from its last hop, $\R_{ i}$ enter the lock stage of the payment phase. If $\R_{ i}$ has not entered the lock stage by round $4 + i$, $\R_{ i}$ aborts. 
            \end{itemize}    
        \item Customer $\C$: 
        \begin{itemize}
            \item  Upon receiving  $(\textit{delivery-hash}, sid, \mathbb{H}_m)$ from $\R_{ |p|}$:  If $\text{MT.Root}( \mathbb{H}_m) =com_m$, save $\mathbb{H}_m$. Otherwise, abort.  
               
            \item Upon receiving $( \textit{delivery-chunk}, sid, j,  c_{|p_k|,j}, \mathbb{C}_{|p|,j})$ from $\R_{|p|}$: 
                 \begin{itemize}[leftmargin=*]
                    \item  If the valid $\textit{delivery-hash}$ message has not received, abort. Parse $com_{enc}^{0,j}, \ldots, com_{enc}^{|p|,j} := \mathbb{C}_{|p_k|,j}$. 
                    \item Check if $Open(c_{|p|,j}, com_{enc}^{|p|,j}.h_c) = 1 \land ECOM.Ver(com_{enc}^{0,j}, Pk(\mathcal{P}), h_{sk}^{0}, j ) = 1)$ and  $com_{enc}^{0,j}.h_m $ is the $j$-th hash in  $\mathbb{H}_m)$.
                    \item For $i \in [1, |p|]$, check if $ECOM.Ver(com_{enc}^{i,j}, Pk(\mathcal{R}_{i}), h_{sk}^{i} ,j) = 1$. 
                    \item For $r \in [0, |p|-1]$, check if $com_{enc}^{r,j}.h_c = com_{enc}^{r+1, j}.h_m$.
                    \item If any check fails, abort. Otherwise, $\C$ saves ciphertext chunk set $\mathbb{CT}:= \mathbb{CT} \cup c_{|p|,j} $, and \textit{commitment chains} $\mathbb{COM}_{enc}:= \mathbb{COM}_{enc} \cup \mathbb{C}_{|p|,j} $.
                \end{itemize}
            \item  Once the relay path finish the hash delivery and the all $n$ ciphertext chunk delivery before round $T_1$, $\C$ enters the payment phase. Otherwise, abort. 
        \end{itemize}
    \end{enumerate}

    \end{mdframed}
    \caption{Delivery Phase of FairRelay  in single path scenario}
    \label{constrction:delivery2}
    \vspace{-0.3cm}
\end{figure*}

\begin{figure*}[htbp]
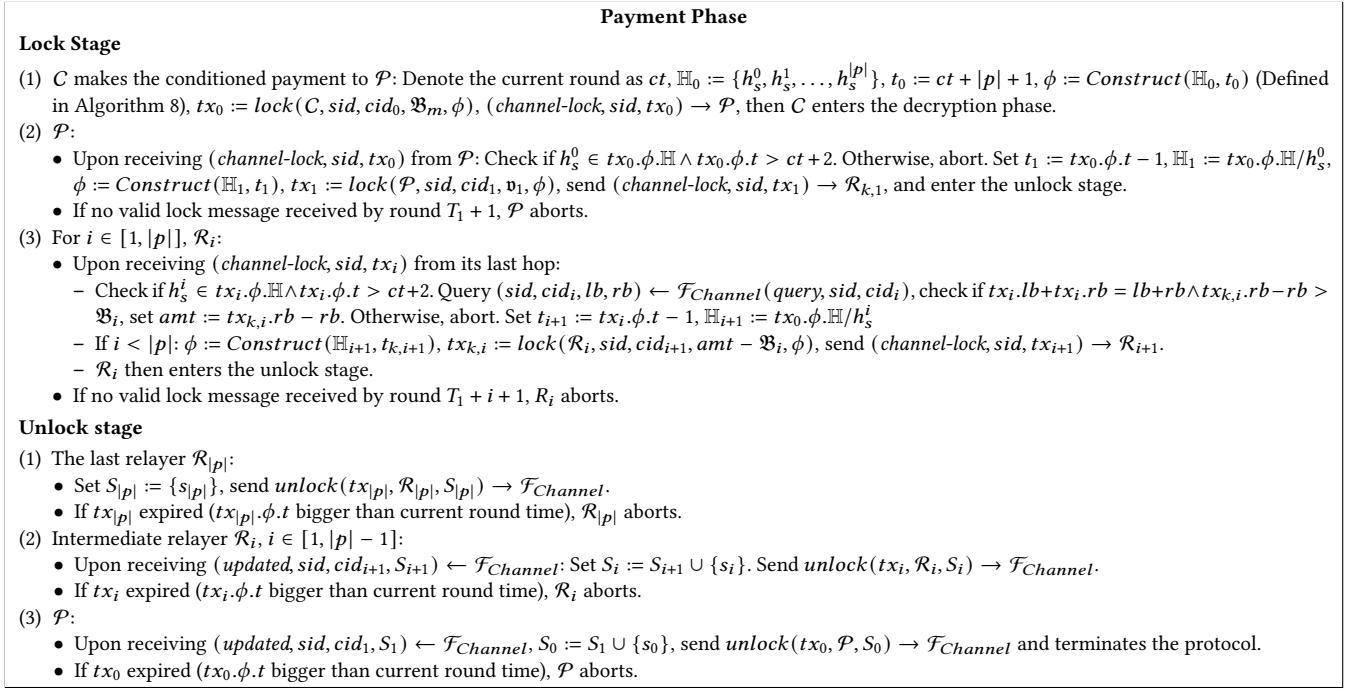
  
\setlength{\abovecaptionskip}{0.1cm}
    \begin{mdframed}[style=MyFrame]
    
        \begin{center}
            \textbf{Payment Phase }
        \end{center}

        \noindent \textbf{Lock Stage}
        \begin{enumerate}[leftmargin=*]
        \item $\C$ makes the conditioned payment to $\CP$: Denote the current round as $ct$, $\mathbb{H}_0 := \{h_s^0, h_s^1, \ldots, h_s^{|p|} \}$, $t_0:= ct + |p| + 1$, $\phi: = Construct(\mathbb{H}_0, t_0)$ (Defined in Algorithm \ref{al:construct}), $tx_0 := lock(\C, sid, cid_0, \mathfrak{B}_m, \phi)$, $(\textit{channel-lock}, sid, tx_0) \rightarrow \CP$, then $\C$ enters the decryption phase. 
       
        \item $\CP$: 
        \begin{itemize}[leftmargin=*]
            \item  Upon  receiving $(\textit{channel-lock}, sid, tx_0)$ from $\CP$: Check if $h_{s}^{0} \in tx_0.\phi.\mathbb{H} \land tx_0.\phi.t > ct + 2$. Otherwise, abort. Set $t_{1}:=  tx_0.\phi.t - 1$, $\mathbb{H}_{ 1} := tx_0.\phi.\mathbb{H}/h_{s}^{0} $, $\phi: = Construct(\mathbb{H}_{ 1}, t_{1})$, $tx_{1} := lock(\CP, sid, cid_{ 1}, \mathfrak{v}_{1}, \phi)$, send $ (\textit{channel-lock}, sid, tx_1) \rightarrow \R_{k, 1}$, and enter  the unlock stage. 
            \item If no valid lock message received by round $T_1 + 1$, $\CP$ aborts. 
        \end{itemize}
        \item For $i \in [1, |p|]$, $\R_{i}$: 
        \begin{itemize}[leftmargin=*]
            \item  Upon receiving $(\textit{channel-lock}, sid,tx_{ i})$ from its last hop: 
                  \begin{itemize}[leftmargin=*]
                          \item Check if $h_{s}^{i} \in tx_{i}.\phi.\mathbb{H} \land tx_i.\phi.t > ct + 2$. Query $(sid, cid_{i}, lb, rb) \leftarrow \FC(\textit{query},sid,cid_{i} )$, check if $tx_{i}.lb + tx_{i}.rb = lb + rb \land tx_{k,i}.rb - rb > \mathfrak{B}_i$, set  $amt := tx_{k,i}.rb - rb$. Otherwise, abort. Set $t_{i + 1}:=  tx_i.\phi.t - 1$, $\mathbb{H}_{ i + 1} := tx_0.\phi.\mathbb{H}/h_{s}^{i} $
                      \item If $i < |p|$:  $\phi: = Construct(\mathbb{H}_{ i+1}, t_{k, i+1})$, $tx_{k, i} := lock(\R_{i}, sid, cid_{i+1}, amt - \mathfrak{B}_{ i}, \phi)$, send $ (\textit{channel-lock}, sid,tx_{ i + 1}) \rightarrow \R_{i+1}$. 
                      \item $\R_{i}$ then enters the unlock stage. 
                  \end{itemize}
            \item  If no valid lock message received by round $T_1 + i + 1$, $R_{i}$ aborts. 
        \end{itemize}

        \end{enumerate}

        \noindent \textbf{Unlock stage}
        \begin{enumerate}[leftmargin=*]
            \item The last relayer $\R_{ |p|}$: 
            \begin{itemize}
                \item Set $S_{ |p|} := \{s_{|p|} \}$, send $unlock(tx_{ |p|}, \R_{|p|},S_{ |p|} )  \rightarrow \FC$. 
                \item If $tx_{|p|}$ expired ($tx_{|p|}.\phi.t$ bigger than current round time), $\R_{ |p|}$ aborts. 
            \end{itemize}
            \item Intermediate relayer $\R_{ i}$, $i \in [1, |p|-1]$: 
            \begin{itemize}
                \item Upon receiving $(\textit{updated}, sid, cid_{i+1}, S_{ i+1}) \leftarrow \FC$: Set  $S_{i} : = S_{ i+1} \cup \{s_{ i}\}$. Send $unlock(tx_{ i}, \R_{i}, S_{ i} )  \rightarrow \FC$.
                \item If $tx_{i}$ expired ($tx_{i}.\phi.t$ bigger than current round time), $\R_{i}$ aborts. 
            \end{itemize}
            \item $\CP$: 
            \begin{itemize}
                \item Upon receiving $(\textit{updated}, sid, cid_{ 1}, S_{ 1}) \leftarrow \FC$, $S_0 := S_{1}  \cup \{s_0\} $,  send $unlock(tx_0, \CP, S_0 )  \rightarrow \FC$ and terminates the protocol. 
                \item If $tx_{0}$ expired ($tx_{0}.\phi.t$ bigger than current round time), $\CP$ aborts. 
            \end{itemize}
             
        \end{enumerate}
    
    \end{mdframed}
    \caption{Payment Phase of FairRelay in single path scenario}
    \label{constrction:payment2}
    \vspace{-0.2cm}
\end{figure*}

\begin{figure*}[htbp]
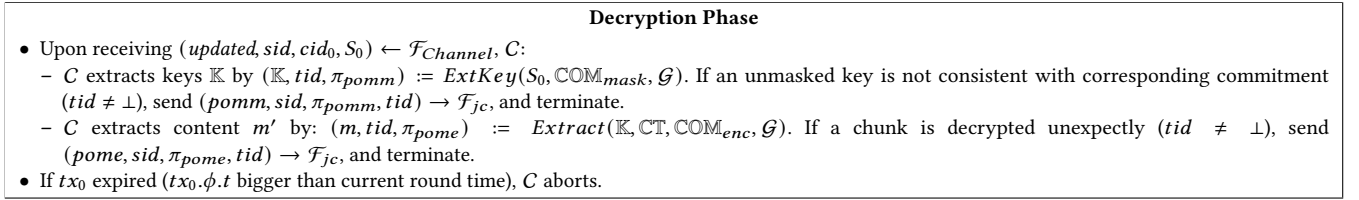

\setlength{\abovecaptionskip}{0.1cm}
    \begin{mdframed}[style=MyFrame]
        \begin{center}
            \textbf{Decryption Phase}
        \end{center}
\begin{itemize}[leftmargin=*]
    \item Upon receiving $(\textit{updated}, sid, cid_{0}, S_0) \leftarrow \FC$, $\C$: 
    \begin{itemize}
        \item  $\C$ extracts keys $\mathbb{K}$ by $(\mathbb{K}, tid, \pi_{pomm} ):= ExtKey(S_0, \mathbb{COM}_{mask}, \mathcal{G})$. If an unmasked key is not consistent with corresponding commitment ($tid \neq \bot$), send $(pomm, sid,  \pi_{pomm}, tid) \rightarrow \Fjc$, and terminate.
        \item $\C$ extracts content $m'$ by:  $(m, tid, \pi_{pome}) : = Extract (\mathbb{K}, \mathbb{CT}, \mathbb{COM}_{enc}, \mathcal{G})$. If a chunk is decrypted unexpectly ($tid \neq \bot$), send $(pome, sid, \pi_{pome}, tid) \rightarrow \Fjc$, and terminate. 
    \end{itemize}
    \item If $tx_{0}$ expired ($tx_{0}.\phi.t$ bigger than current round time), $\C$ aborts. 
\end{itemize}
    \end{mdframed}
    \caption{Decryption Phase of FairRelay in single path scenario}
    \label{constrction:decryption2}
    \vspace{-0.3cm}
\end{figure*}

\vspace{-0.3cm}
\section{Security Analysis in Sketch} \label{sec:security}

The formal security analysis is comprehensively detailed in Appendix \ref{sec:fullproof}. In the analysis, we initially demonstrate that our \textit{Enforceable A-HTLC} protocol \textit{UC-realizes} the ideal functionality $\Fex$. Subsequently, we elucidate the security properties addressed by $\Fex$, which tackle the atomicity issue in multi-path fee-secret exchanges. Finally, we assert the fairness and confidentiality as defined in Section \ref{sec:prob_def}. 
This section provides an intuitive, high-level overview of the security analysis.

Intuitively, the \textit{A-HTLC} payment scheme ensures fairness in a single-path fee-secret exchange: $\CP$ and $\R$ reveal their secrets only when their incoming channels are updated, enabling the transfer of content/relay fees to these nodes. Moreover, $\C$'s payment can only be updated if $\CP$ provides the preimages of the targeted hashes.
The \textit{Enforceable A-HTLC} scheme further guarantees fairness for $\CP$ in a multi-path fee-secret exchange: If any payment issued from $\CP$ is redeemed, $\CP$ can ensure that all payments are redeemed, or alternatively, $\CP$ can guarantee a compensation exceeding the total sum of all relay fees from $\Fjc$.
The commitment and proof of misbehavior in masking/encryption schemes ensure that once $\C$ has acquired all mask secrets, it can obtain the encryption key and decrypt the content $m$ from the ciphertext chunks. The encryption key remains masked by the mask secret, preventing leakage to adversaries and guaranteeing confidentiality.
By combining all these schemes, the requirements for fairness and confidentiality defined in Section \ref{sec:prob_def} are effectively addressed.

\section{Implementation and Evaluation} \label{sec:eval}

\noindent \textbf{Implementation}
We have implemented, tested, and evaluated our decentralized content delivery protocol within a simulated environment\footnote{FairRelay source code: https://github.com/7ujgt6789i/FairRelay}.
Regarding encryption for content chunks, we adopt \textit{Ciminion} \cite{dobraunig2021ciminion}, a scheme compatible with zero-knowledge proofs. For the generation and verification of the zk-SNARK proofs, we utilize the \textit{Groth16} \cite{groth2016size} proof system, with the corresponding circuits designed in \textit{Circom} \cite{circom}.
\begin{table}[ht]
    \begin{center}
        \centering
        \begin{tabular}{ccccc}
                \hline
                Contract & Operation & Tx Gas & ETH/\$ & Optimism L2/\$ \\
                \hline
                \multirow{7}{*}{\textbf{JC}} & Deploy & 2,421,188 & 93.44 & 0.55 \\
                & Join & 46,175 & 1.78 & 0.01 \\
                & Leave & 28,783 & 1.11 & 0.01 \\
                & Withdraw & 25,966 & 1.00 & 0.01 \\
                & Add & 90,924 & 3.51 & 0.02 \\
                & Remove & 30,159 & 1.16 & 0.01 \\
                & PoMM & 35429 &1.37 & 0.01 \\
                & PoME & 290,797 &11.22 & 0.07 \\
                & Enforce & 267,325 & 10.31 & 0.07 \\  
                & Response & 24,902 & 0.96 & 0.01 \\
                & Punish & 27,458 & 1.06 & 0.01 \\
                \hline
                \multirow{4}{*}{\textbf{PC}} & Deploy & 1,215,909 & 46.92 & 0.28 \\
                & Update & 90,912 & 3.50 &0.02\\
                & Close & 43,611 & 1.68 & 0.01 \\
                & Withdraw & 22,547 & 0.87 & 0.01 \\
                \hline
                \textbf{ENS} & Register & 266,996 & 10.30 & 0.06 \\
                \textbf{USDT} & Transfer & 54,128 & 2.09 & 0.01 \\
                \textbf{ERC20} & Deploy & 1,311,213 & 50.60 & 0.30 \\
                \hline
        \end{tabular}
        \caption{Gas cost of on-chain operations: on-chain section consists of Judge Contract (JC) and Payment Channel (PC), USD cost on Ethereum (ETH) and Optimism L2. Compare with deploying an ERC-20 contract, registering an ENS domain, and performing a USDT transfer.}
        \label{tab:gas-cost}
    \end{center}    
    \vspace{-1.05cm}
\end{table}

\subsection{On-chain Evaluation}
Table \ref{tab:gas-cost} provides an overview of the on-chain gas costs and their corresponding USD costs for all operations in our protocol. The table assumes that the price of Ether is set to 2270.13 USD (as of February 5, 2024). Additionally, the gas prices used in the calculations are set to 17 GWei on the Ethereum mainnet and 0.1 GWei on the Optimism Rollup \cite{optimism}.

\vspace{0.3em}
\noindent \textbf{One-time Costs.} 
The deployment of the \textit{Judge Contract} is a one-time global occurrence and incurs a gas cost of 2,421,188. In contrast, deploying a simple ERC-20 Contract typically costs 1,311,213 gas.
For each content provider or relayer participating in decentralized content delivery and earning profits, there is a total one-time gas cost of 107,099, which covers joining, leaving, and withdrawing deposits from the network.
The total one-time gas cost for deploying a payment channel contract is 1,282,069. It's worth noting that a payment channel can be utilized for multiple off-chain payments.
If a content provider intends to deliver content using our protocol, a one-time content register operation is necessary, incurring a gas cost of 90,924. In comparison, it costs roughly 266,996 gas to register a domain name in ENS \footnote{Ethereum Name Service: https://ens.domains/}. In summary, the one-time costs of FairRelay are acceptable. 

\vspace{0.3em}
\noindent \textbf{Optimistic Costs.} If content delivery concludes without any dispute, our protocol requires no on-chain operations, resulting in zero on-chain costs for all participants in a content delivery job. The optimistic costs are consistently zero and independent of the content chunk size, content chunk number, and the number of participants.

\vspace{0.3em}
\noindent \textbf{Pessimistic Costs.} If a customer observes misbehavior by certain nodes during the decryption phase, they have the option to file a complaint with the \textit{Judge contract}. The gas cost of verifying the proof of misbehavior \textit{PoME} on encryption on-chain remains constant at 290,797 gas (equivalent to 0.07 USD on Optimism L2) and is unaffected by the content chunk size, the number of content chunks, or the number of participants. Similarly, the gas cost of verifying the proof of misbehavior on masking \textit{PoMM} on-chain is also constant at 35,429 gas (0.01 USD). Consider a withdraw operations in Tornado Cash \cite{tornado} cost 301,233 gas (0.07 USD), the pessimistic costs of our protocol are acceptable.

\subsection{Off-chain Efficiency}
All efficiency tests are conducted on a customer-level laptop with 32GB memory and Intel i7-9750H CPU(2.60GHz).

\begin{figure*} 
    \centering    
    \setlength{\abovecaptionskip}{0.cm}
    \subfigure[Time costs of PoME with different chunk sizes.] {
        \label{fig:zkp_eff}     
        \includegraphics*[width = .305\textwidth]{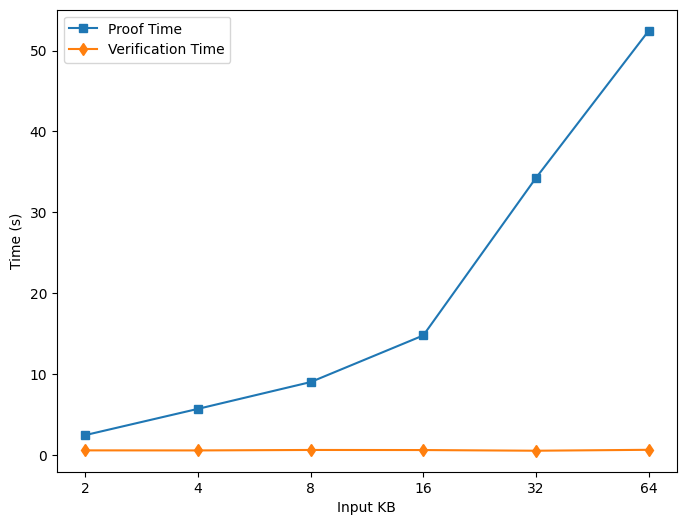}
    }     
    \subfigure[Memory costs of PoME with different chunk sizes.] { 
        \label{fig:zkp_eff}     
        \includegraphics*[width = .305\textwidth]{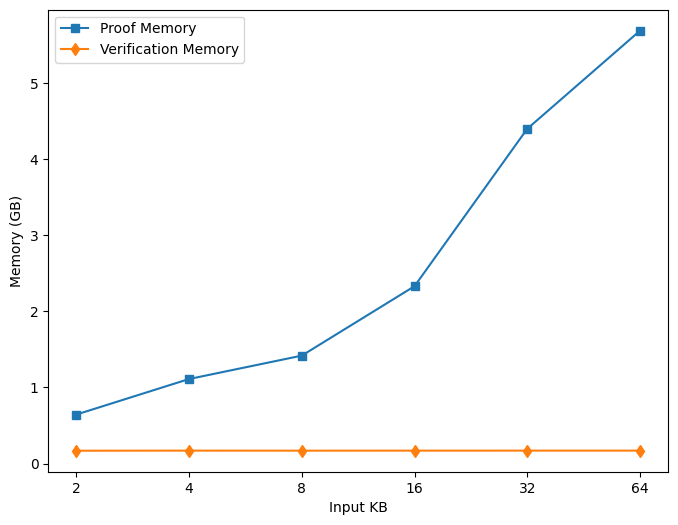} 
    }    
    \subfigure[Bandwidth usage transferring a 1GB file within one path: 32 bytes hash size and 65 bytes signature size.] {
        \label{fig:relay_eff}
        \includegraphics[width=.31\textwidth]{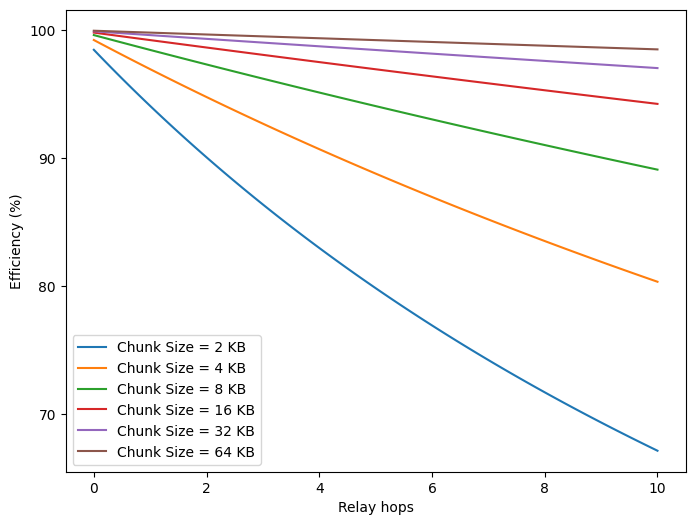}    
    }
    \caption{Off-chain Evaluation}     
    \label{fig:eva}     
\end{figure*}

\vspace{0.3em}
\noindent \textbf{zk-SNARK Efficiency. } 
In our protocol, only one zero-knowledge proof needs to be generated and verified in the proof of misbehavior on encryption stage. Fig. \ref{fig:zkp_eff} illustrates the time and memory costs on generating and verifying the \textit{PoME} along with content chunk size. 
For instance, when proving misbehavior on encrypting a 2KB chunk, the circuit contains 26,633 constraints. In comparison, FileBounty \cite{filebounty} requires 29,339 constraint circuits to verify a 64-byte chunk, which demonstrates the efficiency of our implementation. 
The time and memory costs of generating and verifying \textit{PoME} increase linearly with the size of the circuits to be proven, whereas the proof size, verification memory usage, and verification time remain constant are independent of the content chunk size, approximately 805B, 170MB, and 0.6s, respectively.

\vspace{0.3em}
\noindent \textbf{Encryption Efficiency.} To accelerate the proof generation in the \textit{PoME} generation, we use a zero-knowledge proof friendly symmetric encryption scheme \textit{Ciminion} in our protocol. The efficiency of off-chain encryption/decryption is about 110KB/s per thread, which could be accelerated through multi-threading optimization.

\subsection{Overhead} 
To demonstrate the efficiency of FairRelay, we initially compare our protocol with blockchain-based fair exchange protocols, FairSwap\cite{dziembowski_fairswap_2018}, FDE\cite{tas_atomic_nodate}, Bitstream\cite{linus_bitstream_nodate}, and FairDownload\cite{9929262}. Since most of protocols do not support multiple relayers, we compare the overheads of these protocols with FairRelay in scenarios involving the content provider $\CP$ and the customer $\C$. Table \ref{tab:compare} illustrates that our protocol achieves zero optimistic on-chain cost and constant pessimistic on-chain cost in two-party fair exchange.

\begin{table*}[ht]
  \resizebox{0.98\textwidth}{!}{%
  \begin{tabular}{@{}ccccccccc@{}}
    \toprule
  Protocol &
    Rounds &
    \begin{tabular}[c]{@{}l@{}}Commitment Scheme\end{tabular} &
    P-\textgreater C comm. &
    \begin{tabular}[c]{@{}l@{}}Onchain operations (Opt., Pess.) \end{tabular} &
    Optimistic On-chain cost &
    Pessimistic On-chain cost &
    Dynamic chunk size &
    Content Integrity \\ \midrule
  FairSwap &
    5 &
    Merkle Tree &
    \begin{tabular}[c]{@{}l@{}}$|m| + |h|$\end{tabular} &
    4,4 &
    \begin{tabular}[c]{@{}l@{}} $O(1)$ \end{tabular} & 
    \begin{tabular}[c]{@{}l@{}}$ O(log(n))$ \end{tabular} & 
    \color[HTML]{308014}\checkmark  &
    \color[HTML]{308014}\checkmark \\
   FairBounty &
    n &
    Merkle-Damgård &
    \begin{tabular}[c]{@{}l@{}}$|m| + n|h|$\end{tabular} &
    3,5 &
    \begin{tabular}[c]{@{}l@{}} $O(n)$\end{tabular} & 
    \begin{tabular}[c]{@{}l@{}} $O(n)$ \end{tabular} & 
    \color[HTML]{308014}\checkmark &
    \color[HTML]{B22222}\xmark \\
   FDE-ElGammal &
    3 &
    KZG &
    \begin{tabular}[c]{@{}l@{}} $8|m| + 6 \mathbb{G}$ \end{tabular} &
    4,4 &
    \begin{tabular}[c]{@{}l@{}} $O(1)$\end{tabular} & 
    \begin{tabular}[c]{@{}l@{}}$O(1)$\end{tabular} & 
    \color[HTML]{308014}\checkmark &
    \color[HTML]{308014}\checkmark \\
  FairDownload &
    n &
    Merkle tree &
    \begin{tabular}[c]{@{}l@{}}$|m| + 2n |\sigma| + (2n-2)|h| $ \end{tabular} &
    4,6 &
    \begin{tabular}[c]{@{}l@{}} $O(1)$ \end{tabular} &
   $ O(log(n))$ &
    \color[HTML]{308014}\checkmark &
    \color[HTML]{308014}\checkmark \\
  Bitstream &
    3 &
    Merkle Tree &
    \begin{tabular}[c]{@{}l@{}}$2|m| + |h| + |\sigma| $  \end{tabular} &
    0,1 &
    0 &
    \begin{tabular}[c]{@{}l@{}}$ O(log(n) ) $ \end{tabular} & 
    \color[HTML]{B22222}\xmark &
    \color[HTML]{308014}\checkmark \\
\hline
  \textbf{Our protocol} &
    3 &
    Merkle Tree &
    \begin{tabular}[c]{@{}l@{}} $|m| + (2n + 3)|h| + (n+1)|\sigma|  $\end{tabular} &
    0,2 &
    0 &
    \begin{tabular}[c]{@{}l@{}} $O(1)$ \end{tabular} & 
    \color[HTML]{308014}\checkmark &
    \color[HTML]{308014}\checkmark \\ \bottomrule
  \end{tabular}%
  }
  \caption{Compare FairRelay with related works. $n$ denotes the number of content chunks in $m$, $|h|$ denotes the size of hash , $|\mathbb{G}|$ denotes a group element,  $|\sigma|$ denotes the signature size, $|m|$ denotes the size of the content. Content integrity means $\C$ only pays to $\CP$ if the whole content is delivered correctly. In FDE-ElGammal, the ciphertext size $|ct|$ of a content chunk is about 8 times of the plaintext size \cite{tas_atomic_nodate}; In Bitstream, the content size is fixed to 32-bytes.}
  \label{tab:compare}
\vspace{-0.5cm}
\end{table*}

Next, we consider the overheads in relation to the number of hops in a relay path. For instance, let's consider a 1GB content $m$ relayed through an $n_r$-hop relay path. Before $\CP$ delivers the ciphertext, it conveys the hashes of each chunk to $\C$. Subsequently, each hop encrypts its incoming data and appends an encryption commitment $com_{enc} := (h_m, h_c, h_{sk}, id, \sigma)$. Since $h_m$ is included in the last encryption commitment and $h_{sk}$ is received during the setup phase, the index info is already reveal in the merkle proof stage, the size of $com_{enc}$ is $|h| + |\sigma|$. Along the relay path, the data overheads increase linearly with the number of previous hops: $total\_{overheads} =  n_r|com_{enc}| + setup\_overheads$. Fig. \ref{fig:relay_eff} depicts the bandwidth efficiency (i.e., the bandwidth used to transfer ciphertext divided by the total bandwidth usage) as a function of the number of hops for different chunk sizes. In the practical range of zero-knowledge chunk sizes (2KB - 64KB), it is feasible to select a chunk size that achieves high bandwidth efficiency.

\section{Related Works} \label{sec:related_works}

There are two technological approaches to achieve fair P2P content delivery. One is using a centralized trusted third party (TTP) to observe the resource usages and distribute fees. Such solutions include FloodGate \cite{nair_floodgate_2008}, Meson \cite{meson} and Saturn \cite{saturn}. Once the centralized trusted third party is compromised, their approach will fail. Thus, many researchers turn to using a decentralized trusted third party, usually a blockchain, yielding a dencentralized approach. The problem we study in the P2P content delivery context is related to the research of blockchain-based fair exchange, fair exchange using payment channels and atomic multiple channel updates. Here we review the related works in these research lines.

\vspace{0.3em}
\noindent \textbf{Blockchain-based Fair Exchange.}
Protocols \cite{ 9929262,  dziembowski_fairswap_2018, zkcp, eckey_optiswap_2020, filebounty,   ma_decentralized_2023,  tas_atomic_nodate} such as Zero-Knowledge Contingent Payment (ZKCP)  and FairSwap \cite{dziembowski_fairswap_2018} epitomize the blockchain-based fair exchange paradigm, wherein the content provider $\mathcal{P}$ encrypts the content $m$ into a verifiable ciphertext, and then finalizes the payment by disclosing the decryption key to $\mathcal{C}$. 
ZKCP employs zero-knowledge proofs to ensure the correctness of $\mathcal{P}$'s encryption, whereas FairSwap utilizes a proof of misbehavior scheme, thereby reducing the cost introduced by zero-knowledge proofs. 
Subsequent researches, including OptiSwap \cite{eckey_optiswap_2020}, FileBounty \cite{filebounty}, and \textit{Fair Data Exchange} \cite{tas_atomic_nodate}, have focused on reducing the on-chain and off-chain overheads.
Nevertheless, even in the absence of disputes, these protocols invariably necessitate at least one on-chain transaction to conclusively settle the fair exchange.

\vspace{0.3em}
\noindent \textbf{Payment Channel and Two-Party Fair Exchanges.}
Notably, a straightforward HTLC-based off-chain payment embodies an off-chain fair exchange, wherein the payer pays the payee in exchange for a predefined preimage.  Bitstream\cite{linus_bitstream_nodate} capitalizes on this concept, utilizing the HTLC preimage as a means of delivering the decryption key. However, Bitstream does not include relayers, which cannot be directly extended to solve the problem addressed in this paper.

\vspace{0.3em}
\noindent \textbf{Atomic Multi-channel Updates.} 
Atomic multi-channel update schemes \cite{aumayr2021blitz, malavolta_concurrency_2017, aumayr_thora_2022, malavolta_anonymous_2019, egger_atomic_2019, miller_sprites_2017} primarily focus on achieving atomic settlement of payments within payment channels.
Single-path multi-channel update schemes like AMHL \cite{malavolta_anonymous_2019} and Blitz \cite{aumayr2021blitz} provide "strong atomicity" \cite{malavolta_anonymous_2019} for all sub-payments by  replacing HTLC with homomorphic one-way functions and signatures, respectively. 
Multi-path multi-channel update schemes, such as Sprites \cite{miller_sprites_2017} and Thora \cite{aumayr_thora_2022}, offer a more versatile approach by utilizing a global on-chain event to ensure atomicity for all channel updates.
However, these schemes primarily address the atomicity of one-way payments and do not fully support two-way exchanges. For instance, Thora's "revoke-unless-all-paid" paradigm \cite{aumayr_thora_2022} is not suitable for the key release process since once a key is released, it cannot be revoked.
Consequently,  existing atomic multi-channel update frameworks are not well-suited for atomic multi-party exchanges within PCNs.

\vspace{0.3em}

FairDownload \cite{9929262} is most relevant to our work. 
Like our proposed scheme, FairDownload asks every relayer to encrypt the data for transmission to the subsequent node. However, unlike our work which employs multiple relayers per path, their scheme is restricted to delivery jobs where there is only one relayer in each delivery path. What's more, compare with our protocol, FairDownload incurs high on-chain costs even in the optimistic scenario.

 \vspace{-0.3 em}
\section{Conclusion}    \label{sec:conclusion}

In this work, we presents FairRelay, the first fair and cost-efficient P2P content delivery protocol built on payment channel networks.
We transform the fair content delivery problem into an \textit{atomic multi-party fee-secret exchange} problem. On top of that, we propose a multi-hop payment scheme based on \textit{A-HTLC}, ensuring that all participants get corresponding fees when the content is delivered by a single path. When the content is delivered via multiple paths, we design \textit{Enforceable A-HTLC}, which involves a \textit{acknowledge-pay-enforce} procedure to enforce the atomicity  of all payments. 

FairRelay has several positive impacts. On the one hand, FairRelay motivates owners of payment channels to use their idle bandwidth for profit, enhancing active participation in PCNs. The liquidity of locked funds in PCNs are also increased when those funds are used for relay fee payments.
On the other hand, \textit{Enforceable A-HTLC} provides solutions for some related problems in PCNs. For example, it can be used to enforce atomic multi-path payments.

Here, we outline several interesting questions for future work. 
\begin{itemize}[leftmargin=*,  topsep=3pt]
  
    \item \textit{Eliminating on-chain deposits. } An interesting optimization involves freeing the provider and relayers from making on-chain deposits, thereby lower the barrier to be a relayer.
    \item \textit{Fault tolerance. } Current FairRelay requires all relay paths to complete their tasks for fee settlement. A fault-tolerant payment scheme that accommodates node failures would be desirable.
    \item \textit{Defending front-running attack.}   FairRelay makes use of on-chain deposits for punishment. A possible attack is that an adversary drains its deposit before an honest party asks for compensation (known as front-running attack). In this work, we address this problem by slashing some compensation and require a large deposit to defend such attack (more discussion in Appendix \ref{dis:front-running}). In the future, we are interested in exploring other approaches to defend the attack, for example, removing the use of  on-chain deposits.  
\end{itemize}

\bibliographystyle{ACM-Reference-Format}
\bibliography{dCDN}

\appendix 
\section{FairRelay in multi-path scenario}
Fig. \ref{constrction:setup}, \ref{constrction:delivery}, \ref{constrction:payment} and \ref{constrction:decryption} formally demonstrate the FairRelay protocol in the multi-path scenario. 

\begin{figure*}[htbp]
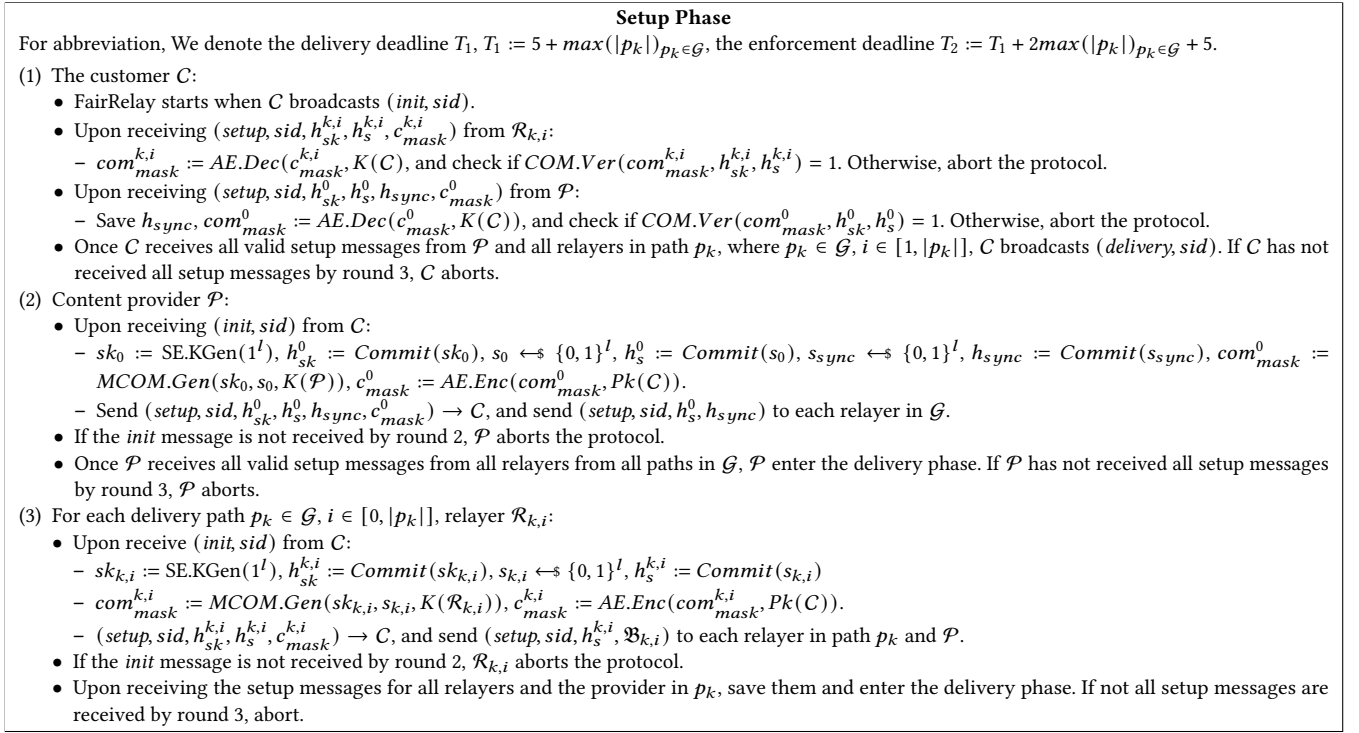

    \vspace*{-0.3cm}
    \begin{mdframed}[style=MyFrame]

        \begin{center}
            \textbf{Setup Phase}
        \end{center}
    For abbreviation, We denote the delivery deadline $T_1$,  $T_1 := 5 + max(|p_k|)_{p_k \in \mathcal{G}} $, the enforcement deadline $T_2:= T_1 + 2max(|p_k|)_{p_k \in \mathcal{G}}  + 5$. 

        \begin{enumerate}[leftmargin=*]
            \item The customer $\C$:
                \begin{itemize}
                \item FairRelay starts when $\C$ broadcasts $(\textit{init}, sid)$.
                    
                \item Upon receiving $(\textit{setup}, sid, h_{sk}^{k, i},h_{s}^{k, i},  c_{mask}^{k, i})$ from $\R_{k, i}$: 
                \begin{itemize}[leftmargin=*]
                    \item $com_{mask}^{k, i}:= AE.Dec(c_{mask}^{k, i}, K(\C) $, and check if $COM.Ver(com_{mask}^{k, i},  h_{sk}^{k, i}, h_{s}^{k, i}) =1$. Otherwise, abort the protocol. 
                \end{itemize}
                \item Upon receiving $(\textit{setup}, sid, h_{sk}^{0},h_{s}^{0}, h_{sync}, c_{mask}^{0})$ from $\CP$: 
                \begin{itemize}[leftmargin=*]
                    \item Save $h_{sync}$, $com_{mask}^{0}:= AE.Dec(c_{mask}^{0}, K(\C))$, and check if $COM.Ver(com_{mask}^{0},  h_{sk}^{0}, h_{s}^{0}) =1$. Otherwise, abort the protocol. 
                \end{itemize}
                \item Once $\C$ receives all valid setup messages from $\CP$ and all relayers in path $p_k$, where $p_k \in \mathcal{G}$, $i \in [1, |p_k|]$, $\C$ broadcasts $(\textit{delivery}, sid)$. If $\C$ has not received all setup messages by round 3, $\C$ aborts. 
                \end{itemize}

            \item Content provider $\CP$: 
            \begin{itemize}
                \item  Upon receiving $(\textit{init}, sid)$ from $\mathcal{C}$:  
                  \begin{itemize}[leftmargin=*]
                      \item $sk_{0}: =  \text{SE.KGen}(1^l)$, $h_{sk}^0 := Commit(sk_{0})$, $s_{0} \leftarrowS \{0, 1\} ^l $,  $h_{s}^0 := Commit(s_{0})$, $s_{sync}\leftarrowS \{0, 1\} ^l $, $h_{sync} := Commit(s_{sync}) $, $com_{mask}^0 := MCOM.Gen(sk_0, s_0,  K(\mathcal{P}) )$, $c_{mask}^{0}:= AE.Enc(com_{mask}^{0}, Pk(\C))$. 
                      \item  Send $(\textit{setup}, sid,h_{sk}^0, h_{s}^0, h_{sync},  c_{mask}^{0}) \rightarrow \C$, and send $(\textit{setup}, sid,  h_{s}^0, h_{sync}) $ to each relayer in $\mathcal{G}$. 
                  \end{itemize}
                   \item If the \textit{init} message is not received by round 2, $\CP$ aborts the protocol. 
                   \item Once $\CP$ receives all valid setup messages from  all relayers from all paths in $\mathcal{G}$, $\CP$ enter the delivery phase. If $\CP$ has not received all setup messages by round 3, $\CP$ aborts. 
            \end{itemize}
               
            \item For each delivery path $p_k \in \mathcal{G}$, $i \in [0, |p_k|]$, relayer $\mathcal{R}_{k,i}$: 
            \begin{itemize}[leftmargin=*]
                \item  Upon receive $(\textit{init}, sid)$ from $\mathcal{C}$:  
                  \begin{itemize}[leftmargin=*]
                      \item $sk_{k,i}: = \text{SE.KGen}(1^l)$, $h_{sk}^{k,i} := Commit(sk_{k, i})$, $s_{k,i} \leftarrowS \{0, 1\} ^l $, $h_{s}^{k,i} := Commit(s_{k, i})$
                       \item $com_{mask}^{k,i} := MCOM.Gen(sk_{k,i}, s_{k,i}, K(\R_{k,i})) $,  $c_{mask}^{k,i}:= AE.Enc(com_{mask}^{k,i}, Pk(\C))$.
                       \item $(\textit{setup}, sid, h_{sk}^{k, i},h_{s}^{k, i}, c_{mask}^{k,i}) \rightarrow \C$, and send $(\textit{setup}, sid, h_{s}^{k, i}, \mathfrak{B}_{k, i} ) $ to each relayer in path $p_k$ and $\CP$. 
                  \end{itemize}
                  \item If the \textit{init} message is not received by round 2, $\R_{k,i}$ aborts the protocol. 
                  \item  Upon receiving the setup messages for all relayers and the provider in $p_k$, save them and enter the delivery phase. If not all setup messages are received by round 3, abort. 
            \end{itemize}

        \end{enumerate}
    \end{mdframed}
    \caption{Setup Phase of FairRelay in multi-path scenario.}
    \label{constrction:setup}
\end{figure*}

\begin{figure*}[htbp]
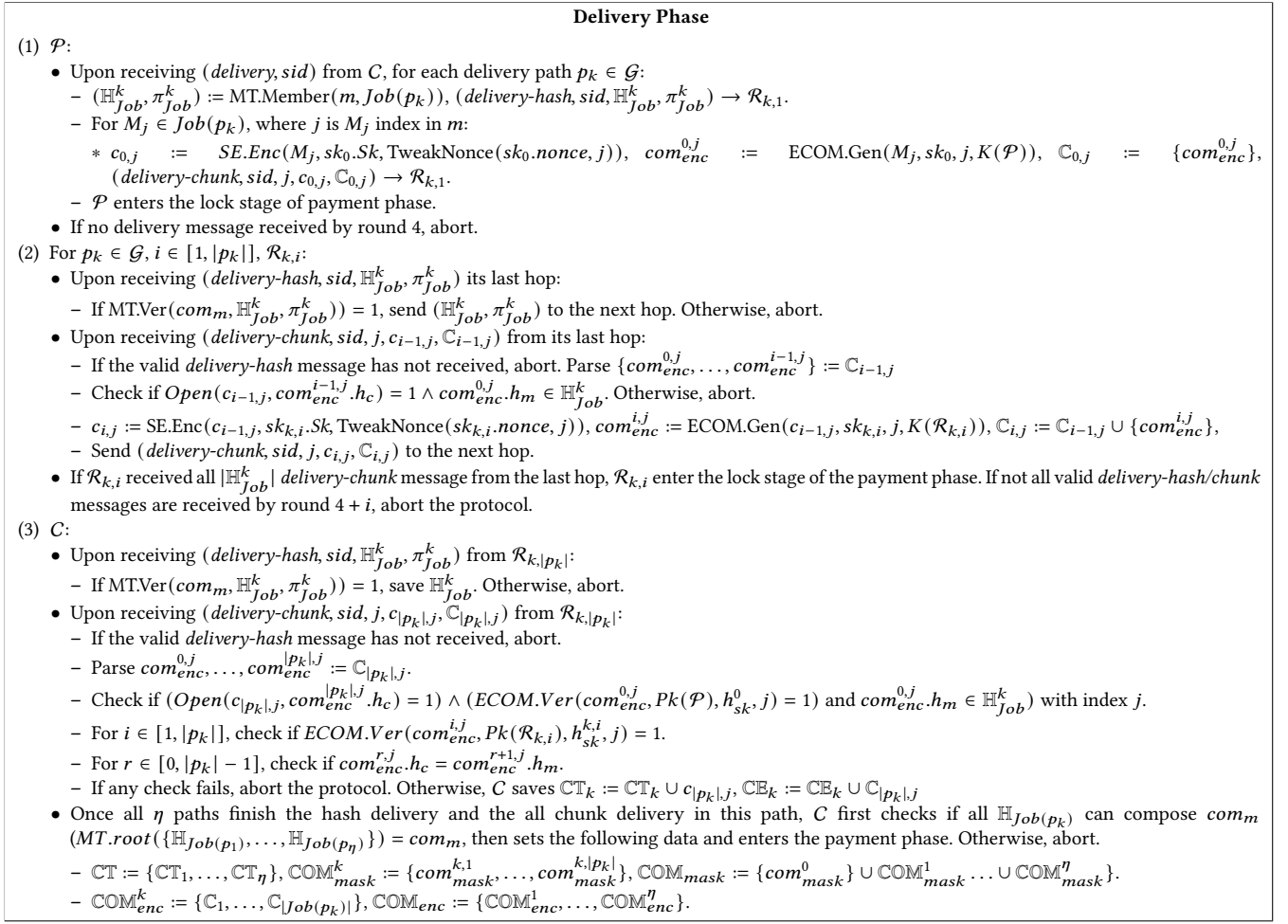

    \vspace*{-0.1cm}
    \begin{mdframed}[style=MyFrame]
        \begin{center}
            \textbf{ Delivery Phase }
        \end{center}
    
        \begin{enumerate}[leftmargin=*]

        \item $\CP$: 
            \begin{itemize}
                \item Upon receiving $(\textit{delivery}, sid)$ from $\C$, for  each delivery path $p_k \in \mathcal{G}$: 
                \begin{itemize}[leftmargin=*]
                      \item $(\mathbb{H}_{Job}^{k}, \pi_{Job}^{k}): = \text{MT.Member}(m, Job(p_k))$,  $(\textit{delivery-hash}, sid, \mathbb{H}_{Job}^{k}, \pi_{Job}^{k}) \rightarrow \mathcal{R}_{k, 1}$. 
                      \item For $M_j \in Job(p_k)$, where $j$ is  $M_j$ index in $m$:
                          \begin{itemize}[leftmargin=*]
                              \item $  c_{0,j} := SE.Enc(M_j, sk_0.Sk, \text{TweakNonce}( sk_{0}.nonce , j))$, $com_{enc}^{0,j} := \text{ECOM.Gen}( M_j,  sk_{0}, j, K(\CP) )$, $\mathbb{C}_{0,j} : = \{com_{enc}^{0,j}\}$, $( \textit{delivery-chunk}, sid, j, c_{0,j},\mathbb{C}_{0,j}) \rightarrow \mathcal{R}_{k,1}$.
                          \end{itemize}
                        \item $\CP$ enters the lock stage of payment phase. 
                \end{itemize}
                \item  If no delivery message received by round 4, abort. 
               
            \end{itemize}
        \item For $p_k\in \mathcal{G}$, $i \in [1, |p_k|]$, $\R_{k, i}$: 
            \begin{itemize}
                \item  Upon receiving $(\textit{delivery-hash}, sid, \mathbb{H}_{Job}^{k}, \pi_{Job}^{k})$ its last hop: 
                \begin{itemize}[leftmargin=*]
                      \item If $\text{MT.Ver}(com_m,\mathbb{H}_{Job}^{k},\pi_{Job}^{k}) ) =1$,  send $(\mathbb{H}_{Job}^{k}, \pi_{Job}^{k})$ to the next hop. Otherwise, abort. 
                \end{itemize}
                \item  Upon receiving $( \textit{delivery-chunk}, sid, j,  c_{i-1,j}, \mathbb{C}_{i-1,j})$ from its last hop: 
                \begin{itemize}[leftmargin=*]
                    \item If the valid $\textit{delivery-hash}$ message has not received, abort. Parse $\{ com_{enc}^{0,j}, \ldots,  com_{enc}^{i-1,j} \} := \mathbb{C}_{i-1,j}$
                      \item Check if $ Open(c_{i-1,j}, com_{enc}^{i-1,j}.h_c) = 1 \land com_{enc}^{0,j}.h_m \in \mathbb{H}_{Job}^{k}$. Otherwise, abort. 
                      \item $c_{i,j} := \text{SE.Enc}( c_{i-1,j}, sk_{k, i}.\textit{Sk}, \text{TweakNonce}(sk_{k, i}.nonce, j) )$, $com_{enc}^{i,j} := \text{ECOM.Gen}(c_{i-1,j},  sk_{k,i}, j,  K(\mathcal{R}_{k, i}) )$, $\mathbb{C}_{i,j} : = \mathbb{C}_{i-1,j} \cup \{com_{enc}^{i,j} \}$, \item Send $(\textit{delivery-chunk},sid, j, c_{i,j}, \mathbb{C}_{i,j}) $ to the next hop. 
                  \end{itemize}
                \item If $\R_{k, i}$ received all $|\mathbb{H}_{Job}^{k}|$ \textit{delivery-chunk} message from the last hop, $\R_{k, i}$ enter the lock stage of the payment phase.  If not all valid $\textit{delivery-hash/chunk}$ messages are received by round $4 + i$, abort the protocol. 
            \end{itemize}    
        \item $\C$: 
        \begin{itemize}
            \item Upon receiving $(\textit{delivery-hash}, sid, \mathbb{H}_{Job}^{k}, \pi_{Job}^{k})$ from $\R_{k, |p_k|}$: 
                \begin{itemize}[leftmargin=*]
                      \item If $\text{MT.Ver}(com_m,\mathbb{H}_{Job}^{k},\pi_{Job}^{k}) ) =1$, save $\mathbb{H}_{Job}^{k}$. Otherwise, abort. 
                \end{itemize}
            \item  Upon receiving $( \textit{delivery-chunk}, sid, j,  c_{|p_k|,j}, \mathbb{C}_{|p_k|,j})$ from $\R_{k, |p_k|}$: 
                 \begin{itemize}[leftmargin=*]
                    \item If the valid $\textit{delivery-hash}$ message has not received, abort.
                    \item Parse $com_{enc}^{0,j}, \ldots, com_{enc}^{|p_k|,j} := \mathbb{C}_{|p_k|,j}$. 
                    \item Check if $(Open(c_{|p_k|,j}, com_{enc}^{|p_k|,j}.h_c) = 1) \land (ECOM.Ver(com_{enc}^{0,j}, Pk(\mathcal{P}), h_{sk}^{0}, j ) = 1)$ and  $com_{enc}^{0,j}.h_m \in \mathbb{H}_{Job}^{k})$ with index $j$. 
                    \item For $i \in [1, |p_k|]$, check if $ECOM.Ver(com_{enc}^{i,j}, Pk(\mathcal{R}_{k, i}), h_{sk}^{k, i},j) = 1$. 
                    \item For $r \in [0, |p_k|-1]$, check if $com_{enc}^{r,j}.h_c = com_{enc}^{r+1, j}.h_m$.
                    \item If any check fails, abort the protocol. Otherwise, $\C$ saves $\mathbb{CT}_k:= \mathbb{CT}_k \cup c_{|p_k|,j} $, $\mathbb{CE}_k:= \mathbb{CE}_k \cup \mathbb{C}_{|p_k|,j} $
                \end{itemize}
            \item Once all $\eta$ paths finish the hash delivery and the all chunk delivery in this path, $\C$ first checks if  all $\mathbb{H}_{Job(p_k)}$ can compose $com_m$ ($MT.root(\{\mathbb{H}_{Job(p_1)} , \ldots, \mathbb{H}_{Job(p_{\eta})} \} ) = com_m$, then sets the following data and enters the payment phase. Otherwise, abort. 
                \begin{itemize}[leftmargin=*]
                    \item $\mathbb{CT}:= \{ \mathbb{CT}_1, \ldots,\mathbb{CT}_{\eta} \}$, $ \mathbb{COM}_{mask}^k:=  \{com_{mask}^{k, 1}, \ldots, com_{ mask}^{k, |p_k| } \}$, $ \mathbb{COM}_{mask}:=  \{com_{mask}^{0}\} \cup \mathbb{COM}_{mask}^1 \ldots \cup  \mathbb{COM}_{mask}^{\eta} \}$.
                    \item $\mathbb{COM}_{enc}^k:= \{ \mathbb{C}_1, \ldots, \mathbb{C}_{|Job(p_k)|} \}$, $\mathbb{COM}_{enc}:= \{ \mathbb{COM}_{enc}^1, \ldots, \mathbb{COM}_{enc}^{\eta} \}$.
                \end{itemize}
        \end{itemize}
    \end{enumerate}

    \end{mdframed}
    \caption{Delivery Phase of FairRelay in multi-path scenario.}
    \label{constrction:delivery}
\end{figure*}

\begin{figure*}[htbp]
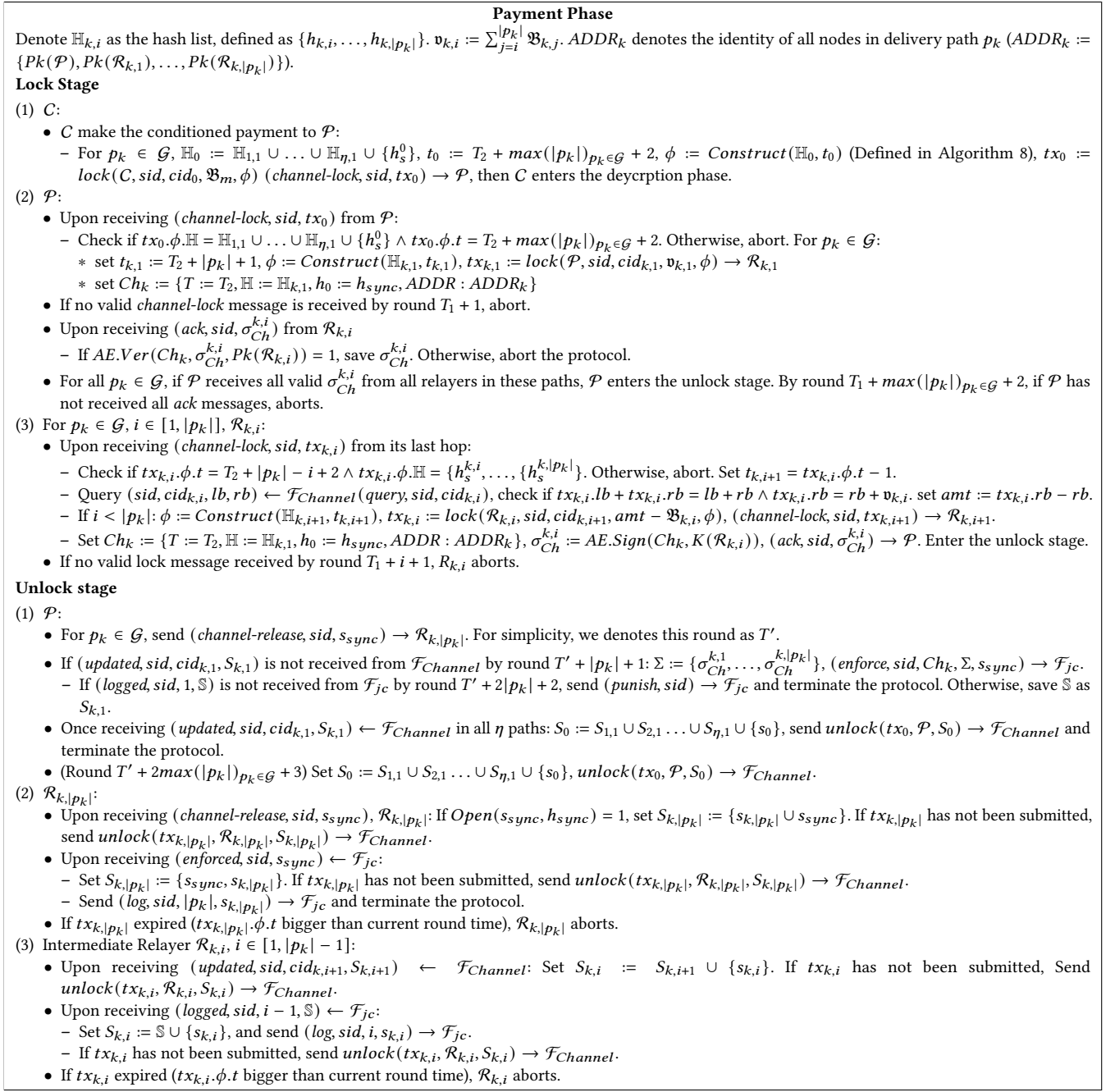
  
    \begin{mdframed}[style=MyFrame]
    
        \begin{center}
            \textbf{Payment Phase }
        \end{center}

    \noindent
    Denote  $\mathbb{H}_{k, i}$ as the hash list, defined as $\{h_{k,i}, \ldots, h_{k,|p_k|}\}$. $\mathfrak{v}_{k, i}:= \sum_{j=i}^{|p_k|} \mathfrak{B}_{k, j}$.
    $ADDR_k$ denotes the identity of all nodes in delivery path $p_k$ ($ADDR_k:= \{Pk(\CP), Pk(\R_{k, 1}), \ldots, Pk(\R_{k, |p_k|}) \}$).

        \noindent \textbf{Lock Stage}
        \begin{enumerate}[leftmargin=*]
        \item $\C$: 
        \begin{itemize}[leftmargin=*]
            \item  $\C$ make the conditioned payment to $\CP$: 
                  \begin{itemize}[leftmargin=*]
                      \item For $p_k \in \mathcal{G}$, $\mathbb{H}_0 := \mathbb{H}_{1, 1} \cup  \ldots \cup \mathbb{H}_{\eta, 1} \cup \{ h_s^0\}$, $t_0:= T_2 + max(|p_k|)_{p_k \in \mathcal{G}} + 2$, $\phi: = Construct(\mathbb{H}_0, t_0)$ (Defined in  Algorithm \ref{al:construct}), $tx_0 := lock(\C, sid, cid_0, \mathfrak{B}_m, \phi)$ $(\textit{channel-lock}, sid, tx_0) \rightarrow \CP$, then $\C$ enters the deycrption phase. 
                  \end{itemize}
        \end{itemize}
        \item $\CP$: 
        \begin{itemize}[leftmargin=*]
            \item  Upon  receiving $(\textit{channel-lock}, sid, tx_0)$ from $\CP$: 
                  \begin{itemize}[leftmargin=*]
                      \item Check if $ tx_0.\phi.\mathbb{H} = \mathbb{H}_{1, 1} \cup  \ldots \cup \mathbb{H}_{\eta, 1} \cup \{ h_s^0\} \land tx_0.\phi.t = T_2 + max(|p_k|)_{p_k \in \mathcal{G}} + 2$. Otherwise, abort. 
                      For $p_k \in \mathcal{G}$: 
                        \begin{itemize}[leftmargin=*]
                            \item set  $t_{k,1}:= T_2 + |p_k| + 1$, $\phi: = Construct(\mathbb{H}_{k, 1}, t_{k,1})$, $tx_{k, 1} := lock(\CP, sid, cid_{k, 1}, \mathfrak{v}_{k,1}, \phi) \rightarrow \R_{k, 1}$
                            \item set $Ch_k:= \{ T:= T_2, \mathbb{H}:= \mathbb{H}_{k,1}, h_{0}:= h_{sync}, ADDR: ADDR_k \}$
                        \end{itemize}
                  \end{itemize}

            \item  If no valid \textit{channel-lock} message is received by round  $T_1 + 1$, abort. 
            \item  Upon receiving $(\textit{ack}, sid, \sigma_{Ch}^{k,i}) $ from  $\mathcal{R}_{k, i}$
            \begin{itemize}[leftmargin=*]
                \item If $AE.Ver(Ch_k, \sigma_{Ch}^{k,i}, Pk(\mathcal{R}_{k,i})) = 1$, save $ \sigma_{Ch}^{k,i}$. Otherwise, abort the protocol. 
            \end{itemize}
            \item For all $p_k \in \mathcal{G}$, if $\mathcal{P}$ receives all valid $\sigma_{Ch}^{k,i}$ from all relayers in these paths, $\CP$ enters the unlock stage.  By round $T_1 + max(|p_k|)_{p_k \in \mathcal{G}} + 2$, if $\CP$ has not received all \textit{ack} messages, aborts. 
        \end{itemize}
        \item For $p_k \in \mathcal{G}$, $i \in [1, |p_k|]$, $\R_{k, i}$: 
        \begin{itemize}[leftmargin=*]
            \item  Upon receiving $(\textit{channel-lock}, sid,tx_{k, i})$ from its last hop: 
                  \begin{itemize}[leftmargin=*]
                      
                          \item Check if $tx_{k, i}.\phi.t = T_2 + |p_k| - i + 2 \land tx_{k, i}.\phi.\mathbb{H} = \{h_{s}^{k, i}, \ldots, \{h_{s}^{k, |p_k|}\}$. Otherwise, abort. Set $ t_{k, i+1} = tx_{k, i}.\phi.t -1$. 
                          \item Query $(sid, cid_{k,i}, lb, rb) \leftarrow \FC(\textit{query},sid,cid_{k,i} )$, check if $tx_{k,i}.lb + tx_{k,i}.rb = lb + rb \land  tx_{k,i}.rb = rb +\mathfrak{v}_{k,i} $. set $amt := tx_{k,i}.rb - rb$. 
                      \item If $i < |p_k|$:  $\phi: = Construct(\mathbb{H}_{k, i+1}, t_{k, i+1})$, $tx_{k, i} := lock(\R_{k,i}, sid, cid_{k, i+1}, amt - \mathfrak{B}_{k, i}, \phi)$, $ (\textit{channel-lock}, sid,tx_{k, i+1 }) \rightarrow \R_{k, i+1}$. 
                      \item Set  $Ch_k:= \{ T:= T_2, \mathbb{H}:= \mathbb{H}_{k,1}, h_{0}:= h_{sync}, ADDR: ADDR_k \}$, $\sigma_{Ch}^{k,i} := AE.Sign( Ch_k, K(\mathcal{R}_{k,i}))$, $(\textit{ack}, sid, \sigma_{Ch}^{k,i}) \rightarrow \CP$. Enter the unlock stage. 
                  \end{itemize}
            \item  If no valid lock message received by  round $T_1 + i + 1$, $R_{k, i}$ aborts. 
        \end{itemize}

        \end{enumerate}

        \noindent \textbf{Unlock stage}
        \begin{enumerate}[leftmargin=*]
            \item $\CP$: 
            \begin{itemize}
                \item For $p_k \in \mathcal{G}$, send $(\textit{channel-release},sid, s_{sync}) \rightarrow \R_{k, |p_k|}$. For simplicity, we denotes  this round as $T'$.  
                
                \item If $(\textit{updated}, sid, cid_{k,1}, S_{k,1})$ is not received from $\FC$ by round  $T' + |p_k| + 1$: $\Sigma := \{\sigma_{Ch}^{k,1}, \ldots, \sigma_{Ch}^{k,|p_k|} \}$, $(\textit{enforce}, sid, Ch_k, \Sigma, s_{sync} ) \rightarrow \Fjc$. 
                \begin{itemize}
                    \item  If $(\textit{logged}, sid, 1, \mathbb{S})$ is not received from $\Fjc$ by round  $T' + 2|p_k| + 2$, send $(\textit{punish}, sid) \rightarrow \Fjc$ and terminate the protocol. Otherwise, save $\mathbb{S}$ as $S_{k, 1}$. 
                \end{itemize}

                \item Once receiving $(\textit{updated}, sid, cid_{k, 1}, S_{k, 1}) \leftarrow \FC$ in all $\eta$ paths: $S_0 := S_{1, 1} \cup S_{2, 1} \ldots \cup S_{\eta, 1} \cup \{s_0\} $,  send $unlock(tx_0, \CP,S_0 )  \rightarrow \FC$ and terminate the protocol. 
                \item (Round $T' + 2max(|p_k|)_{p_k \in \mathcal{G}} + 3$) Set $S_0 := S_{1, 1} \cup S_{2, 1} \ldots \cup S_{\eta, 1} \cup \{s_0\} $,  $unlock(tx_0, \CP,S_0 )  \rightarrow \FC$.
                
            \end{itemize}
            \item $\R_{k, |p_k|}$: 
            \begin{itemize}
                \item Upon receiving $(\textit{channel-release},sid, s_{sync})$, $\mathcal{R}_{k, |p_k|}$: If $Open(s_{sync}, h_{sync}) = 1$, set $S_{k, |p_k|} := \{s_{k, |p_k|} \cup s_{sync} \}$.  If $tx_{k,|p_k|}$ has not been submitted, send $unlock(tx_{k, |p_k|}, \R_{k,|p_k|},S_{k, |p_k|} )  \rightarrow \FC$. 
                  \item Upon receiving $(\textit{enforced}, sid, s_{sync}) \leftarrow \Fjc$: 
                  \begin{itemize}[leftmargin=*]
                    
                    \item Set $ S_{k, |p_k|}:= \{s_{sync}, s_{k, |p_k|} \}$. If $tx_{k, |p_k|}$ has not been  submitted, send $unlock(tx_{k, |p_k|}, \R_{k,|p_k|}, S_{k, |p_k|} )  \rightarrow \FC$. 
                    \item Send $(\textit{log}, sid,  |p_k|,  s_{k, |p_k|}) \rightarrow \mathcal{F}_{jc}$ and terminate the protocol. 
                \end{itemize}
                \item If $tx_{k, |p_k|}$ expired ($tx_{k, |p_k|}.\phi.t$ bigger than current round time), $\R_{k, |p_k|}$ aborts. 
            \end{itemize}
            \item Intermediate Relayer $\R_{k, i}$, $i \in [1, |p_k|-1]$: 
            \begin{itemize}
                \item Upon receiving $(\textit{updated}, sid, cid_{k, i+1}, S_{k, i+1}) \leftarrow \FC$: Set  $S_{k, i} : = S_{k, i+1} \cup \{s_{k, i}\}$. If $tx_{k, i}$ has not been  submitted, Send $unlock(tx_{k, i}, \R_{k,i}, S_{k, i} )  \rightarrow \FC$.
                 \item Upon receiving $(\textit{logged}, sid, i-1, \mathbb{S})\leftarrow \Fjc$:
                    \begin{itemize}[leftmargin=*]
                        \item Set $ S_{k, i}:=  \mathbb{S} \cup \{ s_{k, i} \}$, and send $(\textit{log}, sid, i, s_{k, i}) \rightarrow \mathcal{F}_{jc}$. 
                         \item If $tx_{k,i}$ has not been submitted, send $unlock(tx_{k, i}, \R_{k,i}, S_{k, i} )  \rightarrow \FC$. 
                    \end{itemize}
                \item If $tx_{k,i}$ expired ($tx_{k, i}.\phi.t$ bigger than current round time), $\R_{k, i}$ aborts. 
            \end{itemize}

        \end{enumerate}
    
    \end{mdframed}
    \caption{Payment Phase of FairRelay}
    \label{constrction:payment}
\end{figure*}

\begin{figure*}[htbp]
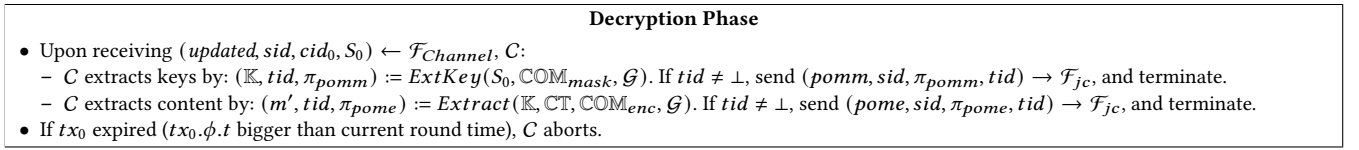

    \begin{mdframed}[style=MyFrame]
        \begin{center}
            \textbf{Decryption Phase}
        \end{center}
\begin{itemize}[leftmargin=*]
    \item Upon receiving $(\textit{updated}, sid, cid_{0}, S_0) \leftarrow \FC$, $\C$: 
    \begin{itemize}
        \item  $\C$ extracts keys by: $(\mathbb{K}, tid, \pi_{pomm} ):= ExtKey(S_0, \mathbb{COM}_{mask}, \mathcal{G})$. If $tid \neq \bot$, send $(pomm, sid,  \pi_{pomm}, tid) \rightarrow \Fjc$, and terminate.
        \item $\C$ extracts content by:  $(m', tid, \pi_{pome}) : = Extract (\mathbb{K}, \mathbb{CT}, \mathbb{COM}_{enc}, \mathcal{G})$. If $tid \neq \bot$, send $(pome, sid, \pi_{pome}, tid) \rightarrow \Fjc$, and terminate. 
    \end{itemize}
    \item If $tx_{0}$ expired ($tx_{0}.\phi.t$ bigger than current round time), $\C$ aborts. 
\end{itemize}
    \end{mdframed}
    \caption{Decryption Phase of FairRelay}
    \label{constrction:decryption}
\end{figure*}

\section{Auxiliary primitives, building blocks and functions} \label{app:pritmives}
In this section, we first outline the construction of cryptographic primitives and building blocks essential to our protocol. Subsequently, we elaborate on the auxiliary functions employed in the FairRelay protocol.

\subsection{Primitives}

\noindent \textbf{Global Random Oracle.} We utilize a \textit{restricted-programmable }  global random oracle, modeled as ideal functionality $\FR$. $\FR$ responds to all queries with uniformly random sampled $\lambda$-bit string $r\leftarrowS \{0, 1\}^\lambda$. 
Every party has oracle access to the global functionality $\FR$, and $\FR$ is implemented by a hash function in the real world. 
$\FR$ will logs all query response pairs $(q, r)$ in a query history set (denoted as $Q$. If a new query is logged in $Q$ before, such that $(q, r) \in Q$, $\FR$ will return the logged random $r$. In the global random oracle settings, $\FR$ would return the same random value for all the same query no matter which session it comes from. 
\textit{restricted-programmable} refers to the capability of an ideal adversary $\Sim$ to inspect and alter the log history of the random oracle within the confines of its session, while an honest user retains the ability to verify if a query history mapping $(q, r)$ has been programmed. We devise the \textit{restricted-programmable} global random oracle in accordance with the framework delineated in \cite{ro1}, as delineated below.

\vspace{20pt}

\begin{mdframed}[userdefinedwidth= \linewidth,frametitlealignment=\center,  frametitle={ Random Oracle Ideal Functionality $\FR$}, frametitlerule=true, frametitlebackgroundcolor=gray!20, linecolor=gray!50, font=\small,  innertopmargin=2pt, innerbottommargin=2pt, innerrightmargin=5pt, innerleftmargin=5pt, font=\small ]

    \textbf{Local Var}: $\FR$ maintains the input history query set $Q$, mapping query value $q \in \{0, 1\} ^ *$ to the output $r  \in \{0, 1\} ^ \lambda $. 
    $\FR$ also stores set $P$ as programmed set and $Q_{sid}$ for all sessions $sid$. 
    
    \noindent \textbf{Query.} Upon receiving $( \textit{query}, \textit{sid}, q)$ from $sid'$:
    \begin{itemize}[leftmargin=*]
        \item If $(sid, q, r) $ no in $ Q$, randomly sample $r \leftarrowS \{0, 1\} ^ \lambda $, add $(sid, q, r)$ to $Q$. Output $(\textit{query}, \textit{sid}, q, r)$.
        \item If $sid' \neq sid$ or query from a adversary , store $(q, r)$ in $Q_{sid}$
    \end{itemize}
    
    \noindent \textbf{Program.}
    \begin{itemize}[leftmargin=*]
        \item  Upon receiving $(\textit{program}, sid, q, r)$ from a adversary $\mathcal{A}$ in session $sid$, if there exists a record $(sid, q, r')$ in $Q$, abort. Otherwise, if $r \in \{0,1\}^ \lambda$ store $(sid, q, r)$ in $Q$ and $sid, q)$ in $P$. 
        \item Upon receiving $(\textit{isProgrammed}, sid, q)$ from a party of session $sid$, if $(sid, q) \in P$, return $(\textit{programmed}, sid, q)$
    \end{itemize}
    
    \noindent \textbf{Observe} Upon receiving $(\textit{lookup}, sid)$ from the adversary of sesion $sid$, respond with $(\textit{lookuped}, sid, Q_{sid})$
\end{mdframed}

\vspace{10pt}

\noindent \textbf{Commitment Scheme.} We build a commitment scheme based on $\FR$, defined in Algorithm \ref{algo:com}: 

\vspace{10pt}

\begin{breakablealgorithm}
    \small
    \caption{Commitment Scheme} \label{algo:com}
    \begin{algorithmic}
    \Function{Commit}{x}    \Comment{$x \in \{0, 1\} ^*$}
        \State  $d \leftarrowS \{0, 1\} ^K \text{s.t } \FR (\textit{isProgrammed},sid, x || d ) \neq 1$  \Comment{ Choose an un-corrupted $d$}
        \State $c \gets \FR( \textit{query}, sid, x || d)$
        \State \Return $(c, d)$
    \EndFunction
    \Statex
    \Function{Open}{m, h}  
    \State parse $(c, d) \gets h$   \Comment{ $c \in \{0, 1\} ^\lambda, d\in \{0, 1\}^K $}
    \State $c' = \FR(\textit{query}, sid, x||d)$
    \If{$c = c' \land \FR(\textit{isProgrammed},sid, x || d ) \neq 1$}
    \State \Return 1
    \EndIf 
    \State \Return 0
    \EndFunction    
    \end{algorithmic}
\end{breakablealgorithm}

\vspace{0.3em}
\noindent \textbf{Encryption Scheme.} We utilize a symmetric encryption scheme $SE$, which satisfies the IND-CPA security \cite{goldreich2009foundations}, and provides following interfaces. 

\begin{itemize}[leftmargin=*]
    \item $(Sk, nonce):= SE.KGen(1^\mu)$, where $Sk$ is the master encryption key and $nonce$ is randomly sampled. We use $sk$ to denotes this tuple, $l$ is the security parameters where $|sk| = \mu$
    \item $c:= SE.Enc(m, sk.Sk, r)$. Encrypt a message $m$ with a random nonce $r$.  $\forall c:= SE.Enc(m, sk.Sk, r)$, $c':= SE.Enc(m', sk.Sk, r')$, if $m' \neq m \land r' \neq r $, then this encryption holds the IND-CPA security. 
    \item $m:= SE.Dec(c, sk.Sk, r)$. Decrypt a ciphertext $c$ uses a master encryption key $sk.Sk$ and a nonce $r$. 
\end{itemize}

\vspace{0.3em}
\noindent \textbf{Asymmetric encryption scheme.}
We use an asymmetric encryption scheme captured by an ideal functionality $\Fsig$, presented in work \cite{SigUC}.   $\Fsig$ has following interfaces: 
\begin{itemize}[leftmargin=*]
    \item $(iPk, iK) := AE.KGen(\lambda)$. Generate keypair with security parameter $\lambda$, where $iPk$ is the public key, $iK$ is the secret key. 
    \item $\sigma :=AE.Sign( m, iK)$. Sign message $m$ and generate signature $\sigma$. 
    \item $AE.Ver(m, \sigma, iPk) \mapsto \{0, 1\}$. Verify the signature $\sigma$ of $m$ with a public key $iPk$. 
    \item $c:= AE.Enc(m, iPk)$ Encrypt a message $m$ with public key $iPk$. 
    \item $m:= AE.Dec(c, iK)$ Encrypt a ciphertext $c$ with private key $iK$. 
\end{itemize}

\vspace{0.3em}
\noindent \textbf{zk-SNARK.} A \textit{zk-SNARK} system $Snk$ can demonstrate that a specific witness (comprising a public witness $x$ and a private witness $w$) validates a given claim $C$, where $C(x, w) = 1$, without disclosing the confidential witness. This system upholds the principles of \textit{correctness}, \textit{soundness}, and \textit{zero knowledge} \cite{groth2016size}. $Snk$ has following interfaces:  
\begin{itemize}[leftmargin=*]
    \item $(vp, pp) :=   Snk.\text{Setup}(C) $ Setup the \textit{verification parameter} $vp$ and \textit{proving parameter} $pp$. 
    \item  $\pi := Snk.\text{Proof}(pp, x, w) $. Create a proof $\pi$ claiming $C(x,w) =1$. 
    \item $Snk.\text{Verify}(vp, x, \pi) \mapsto \{0, 1\}$. 
\end{itemize}

\vspace{0.3em}
\noindent \textbf{Merkle Multi-proof.} A \textit{Merkle Multi-proof} system $MT$ can prove a set $x$ is a subset of $y$, where $y$ is commitment by a merkle root \cite{merklemulti}. The \textit{Merkle Multi-proof} $MT$ provides following interfaces: 
\begin{itemize}[leftmargin=*]
    \item $root := MT.Root(y)$. Compute the Merkle root of an ordered list $y$.
    \item $ (\mathbb{H}, \pi_{merkle} ) := MT.Member(x, y)$. Generate a proof $\pi_{merkle}$ claiming $x$ is a subset of $y$, where $\mathbb{H}$ is the hashes of all elements in $x$. 
    \item $MT.Ver(\mathbb{H}, \pi_{merkle}, root) \mapsto \{0, 1\} $. Verify if $\mathbb{H}$ is a subset of merkle leaves committed by $root$. 
\end{itemize}

\subsection{Building Blocks}
\subsubsection{Commitment and PoM on Masking}\label{app:pomm}

Here we formally define the commitment scheme \textit{MCOM} (shown in Algorithm \ref{algo:mcom}) and the proof of misbehavior scheme \textit{POMM} on masking (shown in Algorithm \ref{algo:pomm}). 

\vspace{1cm}
\begin{breakablealgorithm}
    \small
    \caption{Mask Commitment}   \label{algo:mcom}
    \begin{algorithmic}[1]
        \Function{MCOM.Gen}{$sk, s, iK$}
            \State $h_{sk} \gets Commit(sk)$ 
            \State $h_s \gets Commit(s)$
            \State $ck \gets s \oplus sk$ \Comment{ mask result}
            \State $\sigma \gets AE.Sign(\{h_{sk}, h_s, ck\}, iK)$ 
            \State $com_{mask} \gets \{h_{sk}, h_s, ck, \sigma\}$
            \State \Return $com_{mask}$
        \EndFunction
        \Statex
        \Function{MCOM.Ver}{$com_{mask}, iPk, h_{sk}', h_s'$}
        \State Parse $\{h_{sk}, h_s, ck, \sigma\} \gets com_{mask}$
        \If{$AE.Ver(\{h_{sk}, h_s, ck\}, \sigma, iPk) = 1 \land h_{sk}' = h_{sk} \land h_s' = h_s$ }
        \State \Return 1
        \EndIf 
        \State \Return 0
        \EndFunction

    \end{algorithmic}
\end{breakablealgorithm}

\vspace{0.3cm}
\begin{breakablealgorithm}
    \small
    \caption{Proof of Misbehavior on Masking (POMM)}    \label{algo:pomm}
    \begin{algorithmic}[1]
        \Function{PoMM.Gen}{$com_{mask}, iPk, s'$}
            \State $tid \gets iPk$
            \State Parse $\{h_{sk}, h_s, ck, \sigma\} \gets com_{mask}$
            \State $\pi_{pomm}   \gets (h_{sk}, h_s, ck, \sigma, s') $
            \State \Return $(\pi_{pomm}, tid)$
        \EndFunction
        \Statex
        \Function{PoMM.Ver}{$\pi_{pomm}, tid$}
            \State Parse $ (h_{sk}, h_s, ck, \sigma, s') \gets \pi_{pomm} $
            \If{$AE.Ver(\{h_{sk}, h_s, ck\}, \sigma, tid) = 1 \land Open(s', h_s) = 1 \land Open(ck \oplus s', h_{sk} ) = 1$}
            \State \Return 1
            \EndIf 
            \State \Return 0 
        \EndFunction
    \end{algorithmic}
\end{breakablealgorithm}

We now formally define the properties secured by the $MCOM$ and $PoMM$. 

\begin{lemma}   \label{lemma:pomm_binding}
     For any mask commitment $com_{mask}$ correctly signed by a public key $iPk$, and the secret $s$ is correctly revealed: If $com_{mask}$ is accurately generated from function $MCOM.Gen()$, the correct encryption key $sk$ committed by $com_{mask}.h_{sk}$ can be extracted. Otherwise, a valid $\pi_{pomm}$ could be generated. In other words, $P_1$ with \textit{probability} 1 when the committer is honest, $P_2$ with \textit{probability} 1 when the committer is dishonest. The sum of $P_1$ and $P_2$ has a \textit{probability} of 1.

\end{lemma}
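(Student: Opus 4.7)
The plan is to do a case split on whether the candidate key $sk' := ck \oplus s$ derived from the revealed mask secret $s$ opens the committed hash $h_{sk}$, and to show that the two branches of the split correspond exactly to the honest and dishonest committer cases, covering all possibilities with probability $1$.

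First, I would record two uniqueness facts that follow from the binding property of $Commit(\cdot) = \mathcal{H}(\cdot)$, which in turn follows from the collision-resistance of $\FR$ (where collisions occur with only negligible probability, treated as a probability-$0$ event in the idealised random-oracle model). Namely: the hypothesis $Open(s,h_s)=1$ pins $s$ down as the unique preimage of $h_s$, and any value opening $h_{sk}$ is unique as well. Combined with the hypothesis that $\sigma$ is a valid signature on $(h_{sk},h_s,ck)$ under $iPk$, this means the tuple $(h_{sk},h_s,ck,s)$ is fixed by the committer's public behaviour.

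For the honest case, I would unfold $MCOM.Gen(sk,s_0,iK)$: by construction $h_{sk}=Commit(sk)$, $h_s=Commit(s_0)$, and $ck = s_0 \oplus sk$. Since $Open(s,h_s)=1$ and $h_s$ has a unique opening, $s=s_0$, whence $ck\oplus s = sk$ and $Open(ck\oplus s,h_{sk})=1$. Thus the extractor $sk' := ck\oplus s$ recovers exactly the committed key with probability $1$, establishing the honest branch. For the dishonest case I would exhibit $\pi_{pomm} := (h_{sk},h_s,ck,\sigma,s)$ and verify the three checks of $PoMM.Ver$ in turn: the signature check follows from the signing hypothesis, the opening $Open(s,h_s)=1$ is the revelation hypothesis, and the third check is exactly what is \emph{violated} when the committer did not produce $com_{mask}$ via $MCOM.Gen$ --- that is, $ck\oplus s$ fails to open $h_{sk}$ --- which is precisely the inconsistency that the $PoMM$ is designed to witness on-chain.

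The main obstacle is making the dichotomy airtight: I must rule out a committer who deviates from $MCOM.Gen$ but nevertheless produces $(h_{sk},h_s,ck)$ for which $ck\oplus s$ happens to open $h_{sk}$. I would handle this by observing that such a tuple would either (i) have $h_{sk}=Commit(ck\oplus s)$ and $h_s=Commit(s)$ with $ck=(ck\oplus s)\oplus s$, meaning the output is indistinguishable from one produced by $MCOM.Gen$ with inputs $(ck\oplus s,s,iK)$ and hence should be classified as honest, or (ii) require finding a second preimage for $h_{sk}$ or $h_s$, which contradicts collision-resistance of $\FR$. Consequently the two events partition the probability space (up to a negligible random-oracle collision event that is $0$ in the ideal model), so $P_1 + P_2 = 1$ as claimed.
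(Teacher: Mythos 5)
Your proposal is correct and follows essentially the same route as the paper: the paper's own proof is a one-sentence appeal to the security of the commitment and signature schemes, and your argument is simply the explicit reduction behind that appeal (binding of $Commit$ via collision-resistance of the oracle for the honest branch, direct verification of the $PoMM.Ver$ checks for the dishonest branch, with the two branches partitioning the space). If anything, your version is slightly more accurate, since you correctly refrain from invoking zk-SNARK correctness, which the paper's proof mentions even though $PoMM$ (unlike $PoME$) uses no SNARK.
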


\[ \small P_1 = 
\Pr\left[
\begin{array}{l}
Open(sk,h_{sk}) = 1
\end{array} 
\middle| 
\begin{array}{lcl}
(h_{sk}, h_s, ck, \sigma) \gets com_{mask} \\
AE.Ver(\{h_{sk}, h_s, ck\}, \sigma, iPk) = 1 \land \\
Open(s', h_s) = 1 \land \\
sk' = ck \oplus s'

\end{array}\right]
\]

\[ 
P_2 = 
\Pr\left[
\begin{array}{l}

PoMM.Ver(\pi, tid) = 1
\end{array} 
\middle| 
\begin{array}{l}
(h_{sk}, h_s, ck, \sigma) \gets com_{mask} \\
AE.Ver(\{h_{sk}, h_s, ck\}  \sigma, iPk) \\ = 1 \land\\
Open(s', h_s)=1 \land\\
sk' = ck \oplus s' \\
(\pi, tid) \gets \\ PoMM.Gen(com_{mask}, iPk, s') 
\end{array}\right] 
\]

\begin{proof}
This property holds if the  commitment scheme and the signature scheme are secure, and the \textit{zk-SNARK} scheme used in proof generation holds the \textit{correctness}. 
\end{proof}

\begin{lemma}  \label{lemma:pomm_hiding}
    For any mask commitment $com_{mask}$ honestly constructed, PPT adversary can learns the encryption key $sk$ and the mask secret $s$ with a negligible probability, shown below.
\end{lemma}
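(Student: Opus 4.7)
\textbf{Proof plan for Lemma \ref{lemma:pomm_hiding}.} The strategy is a standard reduction to the hiding property of the underlying commitment scheme, which in turn rests on the random-oracle functionality $\FR$. Observe that an honestly generated $com_{mask} = (h_{sk}, h_s, ck, \sigma)$ consists of two commitments, a one-time-pad-style ciphertext $ck = s \oplus sk$, and a signature $\sigma$ over the first three components. The signature is a deterministic function of values the adversary already sees, so by the unforgeability / simulatability of $\Fsig$ it adds no information beyond $(h_{sk}, h_s, ck)$.

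The key observation is that $ck$ alone is information-theoretically independent of both $sk$ and $s$: since $s \leftarrowS \{0,1\}^l$ is uniform and independent of $sk$, the pair $(sk, s)$ conditioned on $ck$ is uniform over the set $\{(x, x \oplus ck) : x \in \{0,1\}^l\}$. Hence learning either coordinate is equivalent (given $ck$), and it suffices to bound the probability that a PPT adversary outputs $sk$. The first step in the proof is to make this precise by arguing that the adversary's view reduces to the triple $(h_{sk}, h_s, ck)$ plus the random-oracle transcript.

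Next, I would reduce to the hiding of $Commit$: suppose a PPT adversary $\mathcal{A}$ outputs the correct $sk$ with non-negligible probability $\epsilon$. Build a distinguisher $\mathcal{B}$ for the commitment scheme that, given a challenge $h^* = Commit(sk^*)$ for uniform $sk^*$, samples $s \leftarrowS \{0,1\}^l$, computes $ck = s \oplus sk^*$ (without knowing $sk^*$; equivalently, sample $ck$ uniformly and define $s = ck \oplus sk^*$ implicitly — either framing works since both are uniform), computes $h_s = Commit(s)$, produces $\sigma$ honestly, and hands $(h^*, h_s, ck, \sigma)$ to $\mathcal{A}$. If $\mathcal{A}$ returns $sk^*$, $\mathcal{B}$ wins. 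In the random oracle model, extracting the preimage of $h^*$ requires querying $\FR$ on $sk^* \| d$ for the correct random padding $d \in \{0,1\}^K$; across the polynomially many queries $\mathcal{A}$ makes, this event has probability at most $\mathrm{poly}(\lambda) \cdot 2^{-K}$, which is negligible. The symmetric argument handles direct extraction of $s$.

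\textbf{Main obstacle.} The subtle point is the \emph{correlation} introduced by $ck$: one cannot simply invoke hiding of $h_{sk}$ and $h_s$ in isolation, because conditioning on $ck$ links the two challenges. The care needed is to set up the reduction so that the simulator chooses one of $\{sk, s\}$ freely and treats the other as the challenge, then uses $ck$ to glue them together. I expect this step, together with a clean accounting of adversary queries to $\FR$ (including programmed queries in the UC setting, which must be excluded by the \textit{isProgrammed} check in Algorithm \ref{algo:com}), to be the only nontrivial piece of the proof.
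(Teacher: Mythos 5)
Your proposal is correct and takes essentially the same route as the paper, whose entire proof is the one-line assertion that the lemma ``holds true if the commitment scheme and the one-time pad encryption are secure''; your reduction to the hiding of $Commit$ via the random oracle, combined with the observation that $ck = s \oplus sk$ leaves $(sk,s)$ uniform on a line, is exactly the argument the paper leaves implicit, and your handling of the $ck$-induced correlation is a genuine (and correctly resolved) subtlety the paper glosses over. One minor slip: since Algorithm~\ref{algo:com} outputs the padding $d$ as part of the commitment, the relevant query-guessing bound is $\mathrm{poly}(\lambda)\cdot 2^{-l}$ over the entropy of $sk$ (resp.\ $s$) rather than $2^{-K}$ over the padding, but this does not affect the conclusion.
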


\[ 
\Pr\left[
\begin{array}{l}

\mathcal{A}(com_{mask}) = (sk, s)
\end{array} 
\middle| 
\begin{array}{l}
com_{mask} \gets MCOM.Gen(sk, s, iK)\\

\end{array}\right] 
\]

\begin{proof}
It is evident that this property holds true if the commitment scheme and the one-time pad encryption are secure.
\end{proof}

\subsubsection{Commitment and PoM on Encryption} \label{app:pome}
We formally define encryption scheme and proof of misbehavior on encryption scheme.

\vspace{7pt}

\begin{breakablealgorithm}
    \small
    \caption{Encryption Commitment}
    \begin{algorithmic}[1]
        \Function{ECOM.Gen}{$m_i,  sk, id, iK$}
            \State $c \gets SE.Enc(m, sk.Sk, \text{TweakNonce}(sk.nonce, id))$
            \State $h_m \gets Commit(m_i)$ 
            \State $h_c \gets Commit(c)$
            \State $h_{sk} \gets Commit(Sk)$ 
            \State $\sigma\gets AE.Sign(\{h_m, h_c, h_{sk}, id \}, iK)$ 
            \State $com_{enc} \gets \{h_m, h_c, h_{sk}, id, \sigma\}$
            \State \Return $com_{enc}$
        \EndFunction
        \Statex
        \Function{ECOM.Ver}{$com_{enc}, iPk, h_{sk}', id'$}
        \State Parse $\{h_m, h_c, h_{sk}, id, \sigma\} \gets com_{enc}$
        \State \Return 1 \textbf{if} $AE.Ver(\{h_m, h_c, h_{sk}, id\}, \sigma, iPk) = 1$ \textbf{and} $id = id'$ \textbf{and} $h_{sk} = h_{sk}'$
        \State \Return 0
    \EndFunction
    \end{algorithmic}
\end{breakablealgorithm}

\vspace{7pt}

\begin{breakablealgorithm}
    \small
    \caption{Proof of Misbehavior on Encryption (POME)}
    \begin{algorithmic}[1]
        \State \textbf{require} Claim $C(x, w)$:
            \State \hspace{\algorithmicindent} $ (h_m, h_c, h_{sk}, id) \gets x$, $ (c', sk') \gets w$
            \State \hspace{\algorithmicindent} $Open(c', h_c) = 1$    \Comment{Ciphertext is correct}
            \State \hspace{\algorithmicindent} $Open(sk', h_{sk}) = 1$  \Comment{Key is correct}
            \State \hspace{\algorithmicindent} $m' \gets SE.Dec(c', sk'.Sk, \text{TweakNonce}(sk'.nonce, id))$
            \State \hspace{\algorithmicindent} $Open(m', h_{m}) =0$ \Comment{Decryption result is incorrect}
        \State \textbf{require} $(pp, vp) \gets Snk.Setup(C)$
        \Function{PoME.Gen}{$com_{enc}, c', iPk, sk'$}\Comment{$c'$ and  $sk'$ are secret witness}
            \State $tid \gets iPk$
            \State Parse $\{h_m, h_c, h_{sk}, id, \sigma\} \gets com_{enc}$
            \State $x \gets  (h_m, h_c, h_{sk}, id)$, $w \gets (c', sk')$
            \State $\pi \gets Snk.Prove(pp, x, w)$, $\pi_{pome} \gets (x, \sigma, \pi)$
            \State \Return $(\pi_{pome}, tid)$
        \EndFunction
        \Statex
        \Function{PoME.Ver}{$\pi_{pome}, tid$}
            \State Parse $(x, \sigma, \pi) \gets \pi_{pome}$
            \State Parse $(h_m, h_c, h_{sk}, id) \gets x$
             \If{$AE.Ver(\{h_{m}, h_c, h_{sk}, id\}, \sigma, tid) = 1$}
            \State \Return $Snk.Verify(vp, x, \pi)$
            \EndIf 
            \State \Return 0 
        \EndFunction

    \end{algorithmic}
\end{breakablealgorithm}

We then formally define the security properties secured by the $MCOM$ and $POMM$. 

\begin{lemma}  \label{lemma:pome_binding}
    For any encryption commitment $com_{enc}$ correctly signed by public key $iPk$, once the encryption key $sk'$ and ciphertext $c'$ are correctly revealed: If $com_{enc}$ is generated honestly, a plaintext $m$ committed by $com_{enc}.h_{m}$ can be extracted. Otherwise, a proof of misbehavior on encryption can be generated towards $iPk$. In other words, $P_3$ with \textit{probability} 1 when the committer is honest, $P_4$ with \textit{probability} 1 when the committer is dishonest. The sum of $P_3$ and $P_4$ has a \textit{probability} of 1.

\end{lemma}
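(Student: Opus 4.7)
The plan is to mirror the structure used for Lemma~\ref{lemma:pomm_binding}, since the PoME binding statement has the same \emph{extract-or-accuse} shape, and to reduce it to the security of the three underlying primitives: the commitment scheme $Commit$, the signature scheme $AE$, and the zk-SNARK $Snk$. I would split on whether $com_{enc}$ was produced by an honest invocation of $ECOM.Gen$ or by a deviating committer, then show $P_3 = 1$ in the first case and $P_4 = 1$ in the second.

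For the honest case, I would unfold $com_{enc} = ECOM.Gen(m, sk, id, iK)$, so that $h_m = Commit(m)$, $h_c = Commit(c)$ with $c = SE.Enc(m, sk.Sk, \text{TweakNonce}(sk.nonce, id))$, and $h_{sk} = Commit(Sk)$. By the binding property of $Commit$, the revealed $c'$ and $sk'$ satisfying $Open(c', h_c) = 1$ and $Open(sk', h_{sk}) = 1$ must equal $c$ and $sk$ respectively, except with negligible probability. Determinism of $SE.Dec$ then forces $m' := SE.Dec(c', sk'.Sk, \text{TweakNonce}(sk'.nonce, id)) = m$, so $Open(m', h_m) = 1$ and $m$ is extracted as claimed.

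For the dishonest case, the failure of the honest extraction is witnessed by $Open(m', h_m) = 0$ for the deterministically recomputed $m'$. I would then instantiate the NP statement $C$ declared in the PoME construction with public input $x = (h_m, h_c, h_{sk}, id)$ and private witness $w = (c', sk')$: the first two conjuncts of $C$ hold by hypothesis, and the third conjunct is exactly the failed opening. Correctness of $Snk$ gives a proof $\pi = Snk.Prove(pp, x, w)$ that verifies under $vp$, and the already-valid $\sigma$ on $\{h_m, h_c, h_{sk}, id\}$ makes $PoME.Ver(\pi_{pome}, tid) = 1$ for $tid = iPk$ and $\pi_{pome} = (x, \sigma, \pi)$.

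The subtle step I expect to be the main obstacle is showing that the two cases are jointly exhaustive with overwhelming probability, i.e.\ that the dishonest case genuinely captures \emph{every} deviation a committer could mount, including swapping $h_m$ for a commitment to a different plaintext, choosing $c$ unrelated to any encryption, or picking $h_{sk}$ inconsistently. I would handle this by observing that $C$ refers only to the relation between the signed tuple and the revealed $(c', sk')$, not to any latent witness of the committer, so any such deviation is, by definition, a satisfying assignment of $C$; a signature forgery would be the only escape, which contradicts EUF-CMA security of $AE$. The proof then rests exclusively on binding of $Commit$, unforgeability of $AE$, and correctness of $Snk$, closing the case analysis and yielding $P_3 + P_4 = 1$ up to a negligible term absorbed by the primitives' security bounds.
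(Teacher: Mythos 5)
Your argument is correct and follows the same route the paper takes: the paper's own proof is a one-sentence assertion that the lemma follows from the security of the commitment, signature, and encryption schemes together with the \emph{correctness} of the zk-SNARK, and your case analysis (binding of $Commit$ forcing $c'=c$, $sk'=sk$ in the honest case; the failed opening $Open(m',h_m)=0$ serving directly as a satisfying assignment of the claim $C$ in the dishonest case) is exactly the fleshed-out version of that reduction. Your closing observation that the two cases are exhaustive because $C$ only constrains the signed tuple against the revealed $(c',sk')$ is a worthwhile detail the paper leaves implicit.
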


\[ P_3 = 
\Pr\left[
\begin{array}{l}
Open(m', h_m) = 1
\end{array} 
\middle| 
\begin{array}{lcl}
(h_m, h_c, h_{sk}, i, \sigma) \gets com_{enc} \\
AE.Ver((h_m, h_c, h_{sk}, i), \sigma, iPk) \\ = 1 \land \\
Open(c', h_c )  = 1\land \\
Open(sk',h_{sk} )  = 1 \\
m' \gets  SE.Dec(c', sk'.Sk, \\ \text{TweakNonce}(sk'.nonce , i))
\end{array}\right]
\]

\[ P_4 = 
\Pr\left[
\begin{array}{l}
PoME.Ver(\pi, tid) = 1
\end{array} 
\middle| 
\begin{array}{l}
(h_m, h_c, h_{sk}, id, \sigma) \gets com_{enc} \\
AE.Ver(\{h_m, h_c, h_{sk}, id\}, \\ \sigma, iPk) = 1 \land\\
Open(c',h_c  )  = 1\land\\
Open(sk',h_{sk}  )  = 1\\
(\pi, tid) \gets PoME.Gen(\\com_{enc}, c', iPk, sk')
\end{array}\right] 
\]

\begin{proof}
This property holds if the encryption scheme, commitment scheme, the signature scheme are secure, and the \textit{zk-SNARK} scheme used in proof generation guarantees the \textit{correctness}.
\end{proof}

\begin{lemma}    \label{lemma:pome_hiding}
    Consider an adversary holding an encryption commitment $com_{enc}$ honestly constructed and the corresponding ciphertext $c$. The probabilistic polynomial-time adversary can learn the plaintext $m$ and the encryption key $sk$ with a negligible probability:
\end{lemma}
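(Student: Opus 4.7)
The plan is a straightforward reduction: among the components of $(com_{enc}, c)$ visible to the adversary, I will argue that the hashes $h_m, h_c, h_{sk}$, the index $id$, and the signature $\sigma$ together carry no useful information about $(m, sk)$, leaving the ciphertext $c$ as the only residual source of leakage. The remaining claim then reduces to the IND-CPA security of $SE$. Because the lemma asks for recovery of \emph{both} $m$ and $sk$, it suffices to rule out recovery of either component.

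First, I would invoke the hiding of the random-oracle-based commitment scheme constructed in Appendix~\ref{app:pritmives}. Each invocation of $Commit(\cdot)$ samples a fresh padding $d \leftarrowS \{0,1\}^K$ before querying $\FR$ on $x\|d$, so for any PPT adversary the returned digest is statistically independent of the preimage except on the negligible-probability event that the adversary queries $\FR$ on the matching $x\|d$. Consequently $h_m$ and $h_{sk}$ leak nothing useful about $m$ and $Sk$, while $h_c$ only binds a value ($c$) that is already revealed and therefore adds nothing. The signature $\sigma$, produced by $\Fsig$ over the public tuple $(h_m, h_c, h_{sk}, id)$, likewise reveals nothing beyond what it already signs, by the ideal-signature guarantees of $\Fsig$.

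Next, I would construct a reduction $\mathcal{B}$ against the IND-CPA security of $SE$. Given an IND-CPA challenge ciphertext, $\mathcal{B}$ embeds it as $c$, draws fresh commitment paddings to simulate $h_m, h_c, h_{sk}$, generates a fresh signing key pair via $\Fsig$ to sign the resulting tuple, and forwards the synthesized $(com_{enc}, c)$ to $\mathcal{A}$. If $\mathcal{A}$ outputs a correct preimage of $h_m$ or recovers $Sk$ with non-negligible probability, then $\mathcal{B}$ can read off the hidden plaintext or key and win its IND-CPA game, contradicting the assumed security of $SE$.

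The main obstacle I anticipate is simulating $h_m$ (and similarly $h_{sk}$) without knowing the true preimage inside the reduction: $\mathcal{B}$ does not see the challenge plaintext in an IND-CPA game yet must hand $\mathcal{A}$ a convincing commitment to it. This is resolved using the \emph{restricted-programmability} of $\FR$ together with the information-theoretic independence of the random padding $d$: $\mathcal{B}$ outputs a uniformly random string in place of $h_m$, which is indistinguishable from a real commitment unless $\mathcal{A}$ happens to query $\FR$ on $m\|d$ for the correct $d$; such a query both occurs with only negligible probability (since $d$ is hidden and uniform) and, if it did occur, would already let $\mathcal{B}$ read $m$ off $\FR$'s query log and win. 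A symmetric argument handles $h_{sk}$ and the complementary component $Sk$, yielding the claimed negligible bound.
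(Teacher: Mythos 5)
Your proposal is correct and follows the same approach as the paper, whose entire proof is the one-line observation that the property holds if the commitment scheme and the symmetric encryption are secure. You simply supply the reduction details (hiding of the RO-based commitments, IND-CPA of $SE$, and simulation of the commitments via the padding/programmability argument) that the paper leaves implicit.
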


\[ 
\Pr\left[
\begin{array}{l}
\mathcal{A}(com_{enc}, c) = (m, sk)
\end{array} 
\middle| 
\begin{array}{l}
com_{enc} \gets \\ ECOM.Gen(m, sk,i, iK)\\
c \gets  SE.Enc(m, sk.Sk, \\ \text{TweakNonce}(sk.nonce , i) ) 
\end{array}\right] 
\]

\begin{proof}
This property holds if the commitment scheme and the symmetric encryption are secure.
\end{proof}

\subsubsection{Encryption Commitment Chain}

For a  commitment pair  $(com_{mask}, com_{enc})$ committed by the same committer with public key $iPk$, we consider a  commitment pair is \textit{paired} if: 
\begin{itemize}[leftmargin=*]
    \item parse $(h_m, h_c, h_{sk}, id, \sigma) \gets com_{enc}$
    \item parse $(h_{sk}', h_s, ck, \sigma') \gets com_{mask} $
    \item $h_{sk} = h_{sk}' \land AE.Ver(\{h_{sk}', h_s, ck\},\sigma', iPk) = 1 \land$
    \item $ AE.Ver(\{h_m, h_c, h_{sk}, id\},\sigma, iPk) = 1$
\end{itemize}

\begin{corollary}   \label{coro:pair}
    For a \textit{paired} commitment pair $(com_{mask}, com_{enc})$ committed by $iPk$, once the mask secret $s'$ and ciphertext $c'$ are correctly opened: If $(com_{mask}, com_{enc})$ are generated honestly, a plaintext $m$ committed by $com_{enc}.h_{m}$ can be extracted. Otherwise, a proof of misbehavior can be generated towards $iPk$.
\end{corollary}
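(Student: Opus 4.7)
The plan is to reduce Corollary \ref{coro:pair} directly to Lemmas \ref{lemma:pomm_binding} and \ref{lemma:pome_binding} via a two-step extraction: first unmask to recover the symmetric key, then decrypt to recover the plaintext. The pairing condition is exactly what lets these two lemmas be chained, since it enforces that the $h_{sk}$ appearing in $com_{enc}$ equals the $h_{sk}$ appearing in $com_{mask}$, and that both commitments are signed by the same $iPk$.

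First I would fix the extractor. Given the validly opened mask secret $s'$ (with $Open(s', com_{mask}.h_s) = 1$), compute $sk' := com_{mask}.ck \oplus s'$ and invoke Lemma \ref{lemma:pomm_binding}. By that lemma, with probability $1$ exactly one of two events occurs: either $Open(sk', com_{mask}.h_{sk}) = 1$, so $sk'$ is the genuine key committed by $h_{sk}$, or else $PoMM.Gen(com_{mask}, iPk, s')$ yields a proof $\pi_{pomm}$ that verifies against $tid = iPk$. In the latter case we are done: the corollary's ``otherwise'' branch holds, and the PoMM constitutes a valid proof of misbehavior towards $iPk$.

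In the former case, I would chain into Lemma \ref{lemma:pome_binding}, using the fact that the pairing gives $com_{enc}.h_{sk} = com_{mask}.h_{sk}$, so the recovered $sk'$ also opens $com_{enc}.h_{sk}$. Together with the validly opened ciphertext $c'$ (so $Open(c', com_{enc}.h_c) = 1$), Lemma \ref{lemma:pome_binding} again forces a dichotomy: either $m := SE.Dec(c', sk'.Sk, \text{TweakNonce}(sk'.nonce, id))$ satisfies $Open(m, com_{enc}.h_m) = 1$, in which case we have extracted the committed plaintext, or $PoME.Gen(com_{enc}, c', iPk, sk')$ produces a proof $\pi_{pome}$ that verifies against $tid = iPk$. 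Either outcome is exactly what the corollary asserts.

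I do not foresee any serious obstacle, because the corollary is essentially a syntactic composition of the two binding lemmas once pairing is unpacked. The only point requiring slight care is justifying that the $sk'$ obtained in the honest-mask branch is the same $sk'$ fed into the PoME generator and the decryption step; this is immediate because both $com_{mask}$ and $com_{enc}$ reference the identical hash $h_{sk}$ under a common signer $iPk$, so the witness $sk'$ is simultaneously valid for both commitments. The exhaustiveness of the case split then follows from the exhaustiveness already guaranteed by Lemmas \ref{lemma:pomm_binding} and \ref{lemma:pome_binding}.
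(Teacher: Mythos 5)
Your proposal is correct and matches the paper's argument, which likewise derives the corollary directly from Lemmas \ref{lemma:pomm_binding} and \ref{lemma:pome_binding}; you simply spell out the chaining (unmask via $PoMM$, then decrypt via $PoME$, with the pairing condition identifying $h_{sk}$ across the two commitments) that the paper leaves implicit.
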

    
\begin{proof}
    This corollary follows directly from Lemmas \ref{lemma:pomm_binding}, \ref{lemma:pome_binding}. 
\end{proof}

For a plaintext chunk $c_0$ with index $id$, there exist $n$ encoders $u_1, \ldots, u_n$: $u_i$ has encryption key $sk_i$, encrypting $c_0$ layer by layer and delivering the content to $u_{n+1}$. We consider $u_{n+1}$ knows all public keys of each user $u_i$ and receives a \textit{valid} tuple $(c_n, \mathbb{C}_1, \mathbb{C}_2,id)$ from $u_n$. We consider a tuple to be \textit{valid} if $(c_n, \mathbb{C}_1, \mathbb{C}_2, id)$ satisfies:
\label{sec:valid_tuple}

\begin{itemize}[leftmargin=*]
    \item Parse $\{\cdot com_{enc}^i \cdot\} \gets \mathbb{C}_1, \{\cdot com_{mask}^i \cdot\} \gets \mathbb{C}_2$
    \item $Open(c_n, com_{enc}^n.h_c) = 1 \land $ 
    \item For $i \in [1, n-1]$, $com_{enc}^i.h_c = com_{enc}^{i+1}.h_m \land $
    \item For $i \in [1, n]$: 
    \begin{itemize}
        \item $com_{enc}^i.id = id \land $
        \item  Parse $\{h_m, h_c, h_{sk}, i , \sigma \} \gets com_{enc}^i$
        \item Parse $(h_{sk}', h_s, ck, \sigma') \gets com_{mask}^i $
        \item $h_{sk} = h_{sk}' \land AE.Ver(\{h_{sk}', h_s, ck\},\sigma', Pk(u_i)) = 1 \land$
        \item $AE.Ver(\{h_m, h_c, h_{sk}, i\},\sigma, Pk(u_i) ) =1  $
    \end{itemize}
\end{itemize}

\begin{corollary} \label{coro:valid}
    For $u_{n+1}$ who holds a valid tuple $(c_n, \mathbb{C}_1, \mathbb{C}_2 , id)$, once all $u_i$ ($i \in [1, n])$ open their secrets committed by $com_{mask}^i.h_{s}$: If all commitments in $\mathbb{C}_1, \mathbb{C}_2$ are honestly constructed, $u_{n+1}$ can get the content committed $m$ by $com_{enc}^1.h_{m}$; Otherwise, a proof of misbehavior can be generated towards user $u_r$, where $u_r$ is the closest cheater to $u_{n+1}$. 
\end{corollary}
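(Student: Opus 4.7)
\textbf{Proof plan for Corollary \ref{coro:valid}.} The plan is to reduce the corollary to a backward induction that peels off one encryption layer at a time, invoking Corollary \ref{coro:pair} at each layer. Concretely, since the received tuple $(c_n, \mathbb{C}_1, \mathbb{C}_2, id)$ is \emph{valid}, every pair $(com_{mask}^i, com_{enc}^i)$ is \emph{paired} in the sense of Section \ref{app:pome} (signatures check out, $h_{sk}$ values match, and $com_{enc}^i.id = id$), and the ciphertext fields form a chain $com_{enc}^i.h_c = com_{enc}^{i+1}.h_m$ for $i \in [1, n-1]$, with $Open(c_n, com_{enc}^n.h_c) = 1$ at the top.

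The induction starts at layer $n$. By assumption $u_n$ opens $s_n$ matching $com_{mask}^n.h_s$, so together with the outer ciphertext $c_n$ and the paired commitment $(com_{mask}^n, com_{enc}^n)$, Corollary \ref{coro:pair} applies and yields one of two outcomes: either (i) a plaintext $c_{n-1}'$ satisfying $Open(c_{n-1}', com_{enc}^n.h_m) = 1$ is recovered, or (ii) a valid proof of misbehavior is produced against $tid = Pk(u_n)$, in which case we are done with $u_r = u_n$ being the closest cheater. In case (i), the chain equality $com_{enc}^n.h_m = com_{enc}^{n-1}.h_c$ promotes $c_{n-1}'$ to a correctly committed ciphertext for layer $n-1$, so the exact same argument can be repeated with $(com_{mask}^{n-1}, com_{enc}^{n-1})$ and the opened secret $s_{n-1}$. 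Iterating this, let $r$ be the largest index at which Corollary \ref{coro:pair} yields outcome (ii); then a PoM against $u_r$ exists and every $u_j$ with $j > r$ was honest, i.e.\ $u_r$ is the closest cheater to $u_{n+1}$. If no such $r$ exists, the peeling reaches layer $1$ and produces $m$ with $Open(m, com_{enc}^1.h_m) = 1$, as required.

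The main subtlety, which is where I would spend most of the write-up, is justifying the ``promotion'' step in case (i): Corollary \ref{coro:pair} as stated only guarantees that the decrypted value is committed by $com_{enc}^i.h_m$, and I must argue that this is precisely the correct ciphertext input for the next invocation at layer $i-1$. This follows from the chain condition in the definition of a valid tuple together with the binding property of $Commit$, and from the fact that $com_{enc}^{i-1}$ is paired and signed by $u_{i-1}$, so the $h_{sk}$ field it carries is the one used to generate $s_{i-1}$'s mask commitment. A minor but necessary remark is that the ``closest cheater'' is well-defined: if multiple layers would admit PoMs, the induction stops at the first (largest-index) one encountered, and no honest layer above is ever blamed because the peeling succeeded above $r$. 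Finally, the argument is information-theoretic in the commitment/signature primitives; no additional assumption beyond those already used in Lemmas \ref{lemma:pomm_binding} and \ref{lemma:pome_binding} is needed.
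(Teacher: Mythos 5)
Your proposal is correct and follows essentially the same route as the paper, whose proof is simply the one-line remark that the claim "follows directly from applying Corollary \ref{coro:pair} and Lemma \ref{lemma:pome_binding} recursively"; your backward induction from layer $n$ down to layer $1$, using the chain condition $com_{enc}^i.h_c = com_{enc}^{i+1}.h_m$ to promote each decryption result into the ciphertext input of the next invocation of Corollary \ref{coro:pair}, is exactly that recursion spelled out. The additional care you take with the promotion step and the well-definedness of the closest cheater is a faithful elaboration rather than a different argument.
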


\begin{proof}
    This corollary follows directly from applying Corollary \ref{coro:pair} and Lemma.\ref{lemma:pome_binding} recursively. 
\end{proof}

\subsection{Functions used in FairRelay}    \label{app:functions}

In this section we details functions used in our protocol. 

\vspace{3 pt}
\begin{breakablealgorithm}
    \small
    \begin{algorithmic}
        \caption{Nonce Tweak}
        \Function{TweakNonce}{$nonce, id$}  \Comment{$nonce \in \{0, 1\} ^\mu$}
        \State $h \gets \FR(\textit{query}, sid, nonce || id)$
        \State \Return $h[\mu:]$    \Comment{Cut first $\mu$ bits}
        \EndFunction
    \end{algorithmic}
\end{breakablealgorithm}

\vspace{2 pt}

\begin{breakablealgorithm}
    \caption{Lock and Unlock}\label{algo:lock_unlock}
         
    \begin{algorithmic} 
        \Function{lock}{$v, sid, cid, amt, \phi$} 
        \State $(sid, cid, lb, rb) \leftarrow \FC(\textit{query}, sid, cid)$
        \If{v is the left party of channel $cid$}
        \State $x \gets ( cid, lb - amt, rb + amt, \phi)$
        \Else
        \State $x \gets ( cid, lb +  amt, rb - amt, \phi)$
        \EndIf 
        \State $\sigma \gets AE.Sign(x, K(v))$
        \State $tx \gets \{x, \sigma\}$
        \State \Return $tx$
        \EndFunction
        \Statex
        \Function{Unlock}{$tx, w, s$}
        \State Parse $(x, \sigma ) \gets tx$
        \State Parse $(sid, cid, lb, rb, \phi) \gets x$
        \State $\sigma' \gets AE.Sign(x, K(w))$
        \State  $(\textit{update}, sid, cid, lb, rb, \phi, s) \gets \textbf{Compose}(x, \sigma, \sigma', s)$   \Comment{Generate acceptable calls to $\FC$}
        \State \Return $(\textit{update}, sid, cid, lb, rb, \phi, s)$
        \EndFunction
    \end{algorithmic}
\end{breakablealgorithm}

\vspace{2pt}
\begin{breakablealgorithm}
    \small
    \caption{Payment Condition Construction} \label{al:construct} 
    \begin{algorithmic} 
        \Function{Construct}{$\mathbb{H}, t$}
        \State \textbf{return Function} $\phi(x, ct)$:
        \State \quad \quad \quad store $\phi.\mathbb{H} \gets\mathbb{H} , \phi.t \gets t$
        \State \quad \quad \quad parse $s_1, \ldots, s_n \gets x$
        \State \quad \quad \quad  parse $h_1, \ldots, h_n \gets \mathbb{H}$
        \State \quad \quad \quad  \textbf{for} $i \gets 1$  \textbf{to} $n$
        \State \quad \quad \quad \quad \textbf{if} $Open(s_i, h_i) = 0$ \textbf{, return } $0$
        \State \quad \quad \quad \textbf{EndFor}
        \State \quad \quad \quad  \textbf{if} $ct > t$ \textbf{, return } $0$
        \State \quad \quad \quad  \textbf{return } $1$
        \EndFunction
    \end{algorithmic}
\end{breakablealgorithm}

\vspace{10pt}
\begin{breakablealgorithm}

    \caption{Extract Encryption Keys}\label{alg:extkey}
    \begin{algorithmic}
\Function{ExtKey}{$\mathbb{S}, \mathbb{COM}_{mask}, \mathcal{G}$} \Comment{Pass identities of parties in$ \mathcal{G}$}
    \State Parse ${s}_0, \{ s_{1,1}, \ldots, s_{1,|p_1|} \}, \ldots, \{ s_{\eta,1}, \ldots, s_{\eta,|p_{\eta}|} \} \gets \mathbb{S}$
    \State Parse $com_{m}^0, \{ com_{m}^{1,1}, \ldots, com_{m}^{1,|p_1|} \}, \ldots, \{ com_{m}^{\eta,1}, \ldots, com_{m}^{k,|p_{\eta}|} \} \gets \mathbb{COM}_{mask}$

    \State $\mathbb{K} \gets \emptyset$
    \If{$Open(com_{m}^0.ck \oplus s_0,  com_{m}^0.h_{sk} ) = 0$}
        \State $(\pi, tid) \gets POMM.Gen(com_{m}^0, Pk(\CP), s_0)$
        \State \Return $(\emptyset, tid, \pi)$
    \Else 
        \State $\mathbb{K} \gets \mathbb{K} \cup (com_{m}^0.ck \oplus s_0)$
    \EndIf

    \For{$k \gets 1$ to $\eta$}
        \For{$i \gets 1$ to $|{p_k}|$}
            \If{$Open(com_{m}^{k,i}.ck \oplus s_{k,i}, com_{m}^{k,i}.h_{sk} ) = 0$}
                \State $(\pi, tid) \gets POMM.Gen(com_{m}^{k,i}, Pk(\R_{k, i}), s_{k,i})$
                \State \Return $(\emptyset, tid, \pi)$
            \Else
                \State $\mathbb{K} \gets \mathbb{K} \cup (com_{m}^{k,i}.ck \oplus s_{k,i})$
            \EndIf
        \EndFor
    \EndFor
    
\State \Return $(\mathbb{K}, \bot, \bot)$
\EndFunction
    \end{algorithmic}
\end{breakablealgorithm}

\vspace{10 pt}

\begin{breakablealgorithm}
    \small
    \caption{Extract Content}\label{alg:extcontent}
    \begin{algorithmic}
        
        \Function{ExtChunk}{ $c', \mathbb{C}, j, \mathbb{Sk}$}  \Comment{Decrypt a ciphertext chunk layer by layer}
            \State $ n \gets |\mathbb{Sk}|$
            \State parse $sk_0, \ldots, sk_n \gets  \mathbb{Sk}$
            \State parse $com_{enc}^0, \ldots, com_{enc}^n \gets \mathbb{C}$
            \State $c \gets c'$
            \For{$i \gets n$ to $0$}
                \State $com_{enc} \gets com_{enc}^i$
                \State $c_p \gets SE.Dec(c, sk_i.Sk, TweakNonce(sk_i.nonce, j) )$
                \If{$Open(c_p, com_{enc}.h_m) = 0$} \Return $(\bot, i, c_p, com_{enc})$
                \EndIf 
                \State $c \gets c_p$
            \EndFor 
            \State \Return $(c, \bot, \bot, \bot)$
        \EndFunction
        \Statex 
        \Function{ExtPath}{$\mathbb{CT}, \mathbb{COM}_{enc}, \mathbb{Sk}$}
            \State $ n \gets |\mathbb{CT}|$     \Comment{Get chunk numbers}
            \State parse $ c_1, \ldots, c_n \gets \mathbb{CT}$
            \State parse $ \mathbb{C}_1, \ldots, \mathbb{C}_n \gets \mathbb{COM}_{enc}$
            \State $PT \gets \emptyset$     \Comment{Plaintext chunk list}
            \For{$i \gets 1$ to $n$}        \Comment{Range each ciphertext chunk}
                \State $idx \gets \mathbb{C}_i[0].id$   \Comment{Get the chunk index in content}
                \State $(m, cht, c, com_{enc}) \gets ExtChunk(c_i, idx, \mathbb{Sk})$
                \If{$cht \neq \bot$} \Return $ (\bot, cht, c, com_{enc})$
                \EndIf 
                \State $PT \gets PT \cup (m, idx)$
            \EndFor 
            \State \Return $ (PT, \bot, \bot, \bot)$
        \EndFunction
        \Statex
        
        \Function{Extract}{$\mathbb{K}, \mathbb{CT},\mathbb{COM}_{enc}, \mathcal{G}$}
            \State parse $\mathbb{CT}_{1}, \ldots, \mathbb{CT}_{n_f} \gets \mathbb{CT}$
            \State $sk_0,\{ sk_{1,1}, \ldots, sk_{1, |p_1}|\}, \ldots, \{sk_{\eta,1}, \ldots, sk_{\eta, |p_\eta|}\} \gets \mathbb{K}$
            \State $\mathbb{COM}_{enc}^1, \ldots, \mathbb{COM}_{enc}^{\eta} \gets \mathbb{COM}_{enc}$
             \State $Chunks \gets \emptyset$
            \For{$i \gets 1$ to $\eta$}
                \State $\mathbb{Sk} \gets \{sk_0, sk_{i,1}, \ldots,  sk_{i,|p_i|} \}$
                \State $(PT, cht, c, com_{enc}) \gets ExtPath(\mathbb{CT}_i, \mathbb{COM}_{enc}^i, \mathbb{Sk})$
                \If{$cht \neq \bot \land cht \neq 0$}    \Comment{ a relay cheats on encryption }
                    \State \Return $PoME.Gen(com_{enc}, c, Pk(\R_{i, cht}), sk_{i, cht})$
                \EndIf 
                \If{$cht \neq \bot \land cht = 0$}    \Comment{ provider cheats on encryption}
                    \State \Return $PoME.Gen(com_{enc}, c, Pk(\CP), sk_0)$
                \EndIf 
                   
                \State $Chunks \gets PT \cup Chunks$
            \EndFor
            \State \Return $Rebuild(Chunks)$    \Comment{Rebuild full content from chunks}
        \EndFunction
    \end{algorithmic}
\end{breakablealgorithm}

\begin{figure*}[htbp] 
\begin{mdframed}[frametitle={Ideal Functionality of $\Fex$}, frametitlealignment=\centering, frametitlerule=true, frametitlebackgroundcolor=gray!20, font=\small, innertopmargin=2pt, innerbottommargin=2pt, innerrightmargin=5pt, innerleftmargin=5pt]

Consider a \textit{Enforceable A-HTLC} consisting of a payer $u_0$ (provider $\CP$), and $n$ payees $\{ u_1, \ldots, u_n \}$ ($u_i$ is the i-th relay $\R_i$). $u_0$ pays $\mathfrak{B}_i$ to $u_i$ in exchange of a secret $s_i$ committed by $h_i$ ($Open(s_i, h_i) = 1)$. 
$u_0$ and $u_i$ had   an agreed-on enforcement deadline $T$. 
We denote $\{h_i, \ldots, h_n\}$ as $\mathbb{H}_i$. $cid_i$ denotes the payment channel identifier between $u_{i-1}$ and $u_i$, and 
$ ADDR: \{Pk(u_0),Pk(u_1), \ldots, Pk(u_n) \}$. We consider  $n$, $ADDR$ as public knowledge configured on a trusted setup (e.g, a PKI). $\Fex$ owns the private key of each party.

\noindent \textbf{Local variables}
\begin{itemize}[leftmargin=*]
    \item$states$: is a $n$-tuple, where $states[i]$ is the state of the $i$-th channel.  Each state is initialized as $\bot$.
        \item$\mathfrak{B}_{max}$: Refund amount configured globally. 
        \item$cheater$: Identity of the cheater, initialized as $\bot$.
        \item$\textit{ifEnf}$: Enforcement phase flag; $\textit{eid}$: Enforcement identifier, initialized as 0. 
        \item $ct$: Current round retrieved from global clock $\Fclk$
\end{itemize}

    \begin{center}
        \textbf{Lock Phase }
    \end{center}

    (Round 1) Upon receiving $(\textit{lock},  \textit{sid},  \mathbb{H}_1, T, h_0, amt, tx_1) \leftarrow u_0$: 
    \begin{itemize}[leftmargin=*]
        \item  If $u_0$ is honest: 
        \begin{itemize}
            \item $\phi:= Construct(  \{ h_0 \} \cup \mathbb{H}_1,  T + n + 1)$, $tx_1 := lock(u_0, sid, cid_1, amt, \phi) $
        \end{itemize}
        \item Save $ T,  h_0, \mathbb{H}_1$,  
    leak $(\textit{lock},  \textit{sid}, u_0, tx_1)$ to $\Sim$, set $states[1]:= \textit{locked}$, set $\textit{ifEnf}:= 0$.
    \end{itemize}
 
    (Round 1 + i) ($0< i $) Upon receiving $(\textit{lock}, \textit{sid}, T^*, h_0^*, \mathbb{H}_{1}^*, h_i, \mathfrak{B}_i, tx_{i+1}) \leftarrow u_i$: 

    \begin{itemize}[leftmargin=*]
        \item If $u_i$ is honest: 
        \begin{itemize}[leftmargin=*]
            \item Check if $tx_i$ is properly signed by $u_i$ and  $tx_i.\phi.t = T + n - i +2$.
            \item  $(sid, cid_i, lb, rb) \leftarrow \FC(\textit{query}, sid, cid_i)$, check  $tx_i.lb + tx_i.rb = lb + rb \land  tx_i.rb > rb +\mathfrak{B}_i$. Set $amt_i = tx_i.rb - rb$. 
            \item Parse $h_1^*, \ldots, h_n^* := \mathbb{H}_{1}^*$,  $h_0', h_i', \ldots, h_n' := tx_{i}.\phi.\mathbb{H}$. For $j \in [i, n]$ check if $h_j' = h_j^*$. Check if $h_0' = h_0^*$. If any check fails, $\Fex$ aborts.
            \item If $i < n $, set $\phi:= Construct(tx_{i}.\phi.\mathbb{H} \backslash h_i , tx_{i}.\phi.t-1 )$, $tx_{i+1} := lock(u_i, sid, cid_i, amt_i - \mathfrak{B}_i, \phi)$. 
        \end{itemize}
        
    \end{itemize}

    \begin{itemize}[leftmargin=*]
        \item If $T^* = T\land h_0^* = h_0 \land \mathbb{H}_{1}^* = \mathbb{H}_{1}$, set $\textit{ifEnf}:= \textit{ifEnf} + 1$, and set $Ch:= (h_0, \mathbb{H}_{1}^*, T^*, ADDR)$, $\sigma_i:= AE.Sign(Ch, K(u_i))$. Leak $(\textit{ack}, u_i, \sigma_i)$ to $\Sim$. 
        \item Save $tx_{i+1}$, if $i < n$,  set $states[i+1]:= \textit{locked}$,  leak $(\textit{locked}, \textit{sid}, u_i, tx_{i+1})$ to $\Sim$.  
    \end{itemize}
    
    \begin{center}
        \textbf{Unlock Phase}
    \end{center}

    (Round $T'$) Upon receiving $( \textit{release}, \textit{sid}, s_0) \leftarrow u_0$: If $Open(s_0, h_0) = 1 \land T' < T \land \textit{ifEnf} = n$ , save $s_0$, leak $(\textit{release}, \textit{sid}, u_0, s_0)$ to $\Sim$, and  set $states[n]:= \textit{unlockable}$. 
    
    \vspace{2pt }
    
    Upon receiving $(\textit{unlock}, \textit{sid}, s_n) \leftarrow u_n$: if $states[n] = \textit{unlockable}$ and $tx_{n}.\phi(\{s_0, s_n\}, ct) = 1$, save $s_n$, send $unlock(tx_n, u_n,\{s_0, s_n\})  \rightarrow \FC$, then set $states[n]:= \textit{unlocked}$.

    Upon receiving $(\textit{unlock}, \textit{sid}, s_i) \leftarrow u_i$ ($i< n$): if $states[i] = \textit{locked}$, and  $states[i+1] = \textit{unlocked}$, set $\mathbb{S} = \{s_0, s_i, \ldots, s_n\}$. If $tx_{i}.\phi(\mathbb{S} , ct) = 1$, send  $unlock(tx_{i}, u_n, \{s_i\} \cup \mathbb{S})  \rightarrow \FC$, then set $states[i]:= \textit{unlocked}$.

    \begin{center}
        \textbf{Enforcement Phase}
    \end{center}
   
    (Round $T''$) Upon receiving $(\textit{challenge}, \textit{sid}, s_0) \leftarrow u_0$: If $T'' < T \land Open(s_0, h_0) = 1 \land \textit{ifEnf} = n$, leak $(\textit{challenge}, \textit{sid}, u_0, s_0)$ to $\Sim$. 
    If $states[n] = \textit{locked} $, set $states[n]:= \textit{unlockable}$. Parse $h_1', \ldots, h_n' := \mathbb{H}_1$, and set $eid:= n$, $\textit{ifEnf} :=0$. 

    (Round $T'' + 1$) Upon receiving $(\textit{response}, \textit{sid}, s_n,  \textit{ifUnlock}) \leftarrow u_n$: 
    \begin{itemize}[leftmargin=*]
        \item If $Open(s_n, h_n') = 1 \land \textit{ifUnlock} =1 \land states[n] = \textit{unlockable}$ and $tx_{n}.\phi(\{s_0, s_n\}, T'' + 1) = 1$, save $s_n$, send $unlock(tx_n, u_n,\{s_0, s_n\})  \rightarrow \FC$, then set $states[n]:= \textit{unlocked}$. 
        \item If $Open(s_n, h_n') = 1 \land eid:= n$, save $s_n$, leak $(\textit{response}, \textit{sid}, s_n, u_n)$ to \Sim, $eid:= n-1$
        \item If no valid \textit{response} message received in this round, set $cheater:= u_n$.
    \end{itemize}

    (Round $T'' + n - i + 1$) Upon receiving $(\textit{reponse}, \textit{sid}, s_i, \textit{ifUnlock} ) \leftarrow u_i$ ($0<i< n$):
    \begin{itemize}[leftmargin=*]
        \item If  $Open(s_i, h_i') = 1 \land eid:= i \land states[n] = \textit{locked } \land \textit{ifUnlock} =1 \land cheater = \bot \land tx_{i}.\phi(\{s_0, s_i, \ldots, s_n\}, T'' +  n - i + 1) = 1$: send $unlock(tx_n, u_n,\{s_0, s_n\})  \rightarrow \FC$, and set $states[i]:= \textit{unlocked}$. 
        \item If $Open(s_i, h_i') = 1 \land eid:= i$, save $s_i$, leak $(\textit{reponse}, \textit{sid}, s_i, \R_i)$ to \Sim, and set $eid:= eid-1$. 
        \item If no valid \textit{response} message received in this round and $cheater = \bot$, set $cheater:= u_i$, $eid := 0$
    \end{itemize}
    (Round $T'' + n + 1$) Upon receiving $(\textit{punish}, \textit{sid} ) \leftarrow u_0$:  If $cheater \neq \bot$, send $(\textit{transfer}, sid, cheater, u_0, \mathfrak{B}_{max}) \rightarrow \FL$, and leak $(\textit{punished}, \textit{sid} ) $.
\end{mdframed}

 \caption{Ideal Functionality of $\Fex$}
  \label{fig:uc_if} 
\end{figure*}

\begin{figure*}[htb]
\setlength{\abovecaptionskip}{0.1cm}
\begin{mdframed}[frametitle={Enforceable \textit{A-HTLC} Protocol $\Pi$}, frametitlealignment=\centering, frametitlerule=true, frametitlebackgroundcolor=gray!20, font=\small, innertopmargin=2pt, innerbottommargin=2pt, innerrightmargin=5pt, innerleftmargin=5pt]

Consider an \textit{Enforceable A-HTLC} protocol involving a payer $u_0$ (referred to as the provider $\CP$), $n$ payees denoted as $\{ u_1, \ldots, u_n \}$ (where $u_i$ represents the i-th relay $\R_i$). In this protocol, $u_0$ makes a payment of $\mathfrak{B}_i$ to $u_i$ in exchange for a secret $s_i$ that is committed by $h_i$ (where $Open(s_i, h_i) = 1$).
Both $u_0$ and $u_i$ agree upon an enforcement deadline denoted as $T$. We use $\{h_i, \ldots, h_n\}$ to represent the set of commitments, which is denoted as $\mathbb{H}_i$. The payment channel identifier between $u_{i-1}$ and $u_i$ is denoted as $cid_i$, and the addresses are represented by $ADDR: \{Pk(u_0),Pk(u_1), \ldots, Pk(u_n) \}$.
In this protocol, we assume that the values of $n$ and $ADDR$ are public knowledge and are configured as part of a trusted setup, such as a Public Key Infrastructure (PKI).

    \begin{center}
        \textbf{Lock Phase }
    \end{center}

    \begin{itemize}[leftmargin=*]
        \item (Round 1) $u_0$ : 
        \begin{itemize}
                  \item Upon receiving $(\textit{lock},  sid, \mathbb{H}_1, T, h_0, amt) \leftarrow \env$, $u_0$ set $\phi:= Construct(  \{ h_0 \} \cup \mathbb{H}_1,  T + n + 1)$, $tx_1:= lock(u_0, sid, cid_1, amt, \phi)$, and send $ (\textit{channel-lock}, sid, tx_1 ) \rightarrow u_1$.
                  Then $u_0$ will enter the unlock/enforcement phase. If no valid \textit{release} or \textit{challenge} message arrives before round $T + n + 1$, $u_0$ terminates. 
                  \item If no valid \textit{lock} message received in this round, $u_0$ terminates.
        \end{itemize}
        
        \item  $u_i$ ($i> 0$) in round $(i + 1)$: 
                \begin{itemize}
                    \item Upon receiving $(\textit{lock}, \textit{sid}, T, h_i, \mathfrak{B}_i, h_0, \mathbb{H}_1) \leftarrow \env$
                    \begin{itemize}
                          \item Check if  $u_i$ had received lock message $ (\textit{channel-lock}, sid, tx_{i}) \leftarrow u_{i-1}$, where $tx_i$ should be signed by $u_{i-1}$ and $tx_{i}.\phi.t = T + n - i + 2$. 
                          \item Parse $ h_1, \ldots, h_n := \mathbb{H}_1$ . 
                          \item $(sid, cid_i, lb, rb) \leftarrow \FC(\textit{query}, sid, cid_i)$, check if $tx_i.lb + tx_i.rb = lb + rb \land  tx_i.rb > rb +\mathfrak{B}_i$. set $amt = tx_i.rb - rb$. 
                          \item Parse $h_0', h_i', \ldots, h_n' := tx_{i}.\phi.\mathbb{H}$. For $j \in [i, n]$, check if $h_j' = h_j$. Check if $h_0' = h_0$. 
                          \item If any check fail, $u_i$ aborts the protocol.
                          \item Set $st_i := \textit{locked}$, $Ch:= (h_0, \mathbb{H}_1, T,  ADDR )$,  $\sigma_{Ch}^i:= AE.Sign(Ch, K(u_i))$, $\sigma_{Ch}^i \rightarrow u_0$. 
                          \item If $i < n $, $\phi:= Construct(tx_{i}.\phi.\mathbb{H} \backslash h_i , tx_{i}.\phi.t-1 )$,  $tx_{i+1}:= lock(u_i, sid, cid_{i+1},amt - \mathfrak{B}_i, \phi ) $, $(\textit{channel-lock}, sid, tx_{i+1} ) \rightarrow u_{i+1}$. 
                          \item Then $u_i$ will enter the next phase. If no new \textit{unlock} or \textit{release} message received until round $tx_{i}.\phi.t$, $u_i$ aborts the protocol.
                    \end{itemize}
                    \item If no valid \textit{lock} message received in round $1 + i$, $u_i$ terminates.
                \end{itemize}

    \end{itemize}

    \begin{center}
        \textbf{Unlock Phase}
    \end{center}

    \begin{itemize}[leftmargin=*]
        \item  $u_0$: Upon receiving $( \textit{release}, \textit{sid}, s_0) \leftarrow \env  $ in round $T'$.
        \begin{itemize}
            \item If $T' < T \land Open(s_0, h_0) = 1$ and $u_0$ had received all $\sigma_{Ch}^i$ from $u_i$, where $Ch:= (h_0, \mathbb{H}_1, T, ADDR )$,  $AE.Ver(Ch, \sigma_{Ch}^i, Pk(u_i) ) = 1$, send $(\textit{released}, \textit{sid},  s_{0}) \rightarrow u_n$. 
        \end{itemize}

        \item $u_n$: Upon receiving $(\textit{unlock}, sid, s_n) \leftarrow \env$ in round $ct$: 
                  \begin{itemize}
                      \item  If $u_i$ had received  $(\textit{released} , \textit{sid},  s_{0})$ from  $u_0$, $u_i$ sets $\mathbb{S}_n := \{ s_{0}, s_n \}$. 
                      \item If $ tx.\phi(\{s_{0}, s_{n}\}, ct) = 1 \land st_n = \textit{locked}$, $u_i$  sends $unlock(cid_n, tx_n,  u_{n} , \mathbb{S}_n)\rightarrow \FC$, and sets $st_n:= \textit{unlocked}$. 
                  \end{itemize}

        \item $u_i$ ($0<i< n$):  Upon receiving $(\textit{unlock}, sid, s_i) \leftarrow \env$:
        \begin{itemize}
            \item If $u_i$ had received $(\textit{updated}, sid, cid_{i+1},  \mathbb{S}_{i+1} ) \leftarrow \FC$, $u_i$ sets $\mathbb{S}_{i}:= \mathbb{S}_{i+1} \cup \{ s_i \}$. 
            \item If $st_i = \textit{locked} \land tx.\phi(\mathbb{S}_{i}, ct) = 1$, $u_i$ sends $unlock( tx_i, u_i, \mathbb{S}_{i}) \rightarrow \FC$, then set $st_i:= \textit{unlocked}$. 
        \end{itemize}

    \end{itemize}

        \begin{center}
        \textbf{Enforcement phase}
    \end{center}

    \begin{itemize}[leftmargin=*]
        \item  $u_0$:
              \begin{itemize}[leftmargin=*]
                  \item  Upon receiving $(\textit{challenge}, sid, s_0) \leftarrow \env$ in round $(T'')$, if $T''  < T  \land Open(s_0, h_0) =1$ and $u_0$ had received all $\sigma_{Ch}^i$ from $u_i$, where $Ch:= (h_0, \mathbb{H}_1, T, ADDR )$,  $AE.Ver(Ch, \sigma_{Ch}^i, Pk(u_i) ) = 1$: set $\Sigma := \{ \sigma_{Ch}^1, \ldots, \sigma_{Ch}^n \}$, $(\textit{enforce},sid,Ch,\Sigma, s_{0}) \rightarrow \Fjc$.
                  \item In round $(T'' + n + 1)$: 
                  \begin{itemize}
                      \item Upon receiving $(\textit{punish}, sid ) \leftarrow \env$, send $(\textit{punish}, sid) \rightarrow \Fjc$, then terminate.
                        \item If no \textit{punish} message received in this round, $u_n$ terminates. 
                  \end{itemize}
              \end{itemize}
        \item In round $(T'' +  1)$, $u_n$:  
        \begin{itemize}
            \item  Upon receiving $(\textit{response}, sid, s_n) \leftarrow \env$, if $u_n$ had received $(\textit{enforced}, sid, s') \leftarrow \Fjc$ :
            \begin{itemize}[leftmargin=*]
                  \item If $st_n = \textit{locked} \land tx_n.\phi(\{s', s_{n}\}, ct) = 1$, send $unlock(cid_n, tx_n, \R_{i}, \{s', s_n\}) \rightarrow \FC$, set $st_n:= \textit{unlocked}$
                  \item send $(\textit{log}, sid, i, s_n) \rightarrow \Fjc$, then $u_n$ terminates.
              \end{itemize}
            \item If no \textit{response} message received in this round, $u_n$ terminates.
        \end{itemize}
    
        \item In round $(T'' + n -i+ 1)$, $u_i$ ($0<i < n$): 
        \begin{itemize}
            \item Upon receiving $(\textit{response}, sid, s_i) \leftarrow \env$, if $u_i$ had received $(\textit{logged}, sid, i, \mathbb{S})$ from $\Fjc$: 
              \begin{itemize}[leftmargin=*]
                  \item If $st_i = \textit{locked} \land  tx_i.\phi(\{\mathbb{S}, s_{i}\}, ct) = 1$, send $unlock(cid_i, tx_i, u_{i}, \mathbb{S} \cup \{s_i\}) \rightarrow \FC$, set $st_i:= \textit{unlocked}$
                  \item Send $(\textit{log}, sid, i, s_i) \rightarrow \Fjc$, then $u_i$ terminates.
              \end{itemize}
            \item If no \textit{response} message received in this round, $u_i$ terminates.
        \end{itemize}
        
    \end{itemize}
\end{mdframed}
 \caption{Enforceable \textit{A-HTLC} Protocol Formal Description $\Pi$}
\label{fig:uc_protocol}
\vspace{-0.3cm}
\end{figure*}

\begin{figure*}

\begin{mdframed}[frametitle={Simulator}, frametitlealignment=\centering, frametitlerule=true, frametitlebackgroundcolor=gray!20, font=\small, innertopmargin=2pt, innerbottommargin=2pt, innerrightmargin=5pt, innerleftmargin=5pt]

    \begin{center}
        \textbf{Lock Phase}
    \end{center}

    (Round 1) Upon receiving $(\textit{lock}, sid, \mathbb{H}_1, T, h_0, amt) \leftarrow \env$ in the ideal world:
    \begin{itemize}[leftmargin=*]
        \item If $u_0$ is honest:
        \begin{itemize}
            \item If $\Sim$ learns $(\textit{lock}, sid, u_0, tx_1)$ from $\Fex$ leakage, $\Sim$ simulates the lock message by sending $(sid, tx_1)$ to $u_1$ on behalf of $u_0$.
        \end{itemize}
        \item If $u_0$ is dishonest:
        \begin{itemize}[leftmargin=*]
            \item $\Sim$ learns the locking message $(sid, tx_1^*) \rightarrow u_1$ in the real world. Then $\Sim$ sends $(\textit{lock}, sid, \mathbb{H}_1, T, h_0, tx_1^*) \rightarrow \Fex$. If no locking message is captured, $\Sim$ aborts the simulation.
            \item $\Sim$ then simulates messages by replaying the captured messages in the ideal world.
        \end{itemize}
    \end{itemize}

(Round $i + 1$) Upon receiving $(\textit{lock}, \textit{sid}, T, h_i, \mathfrak{B}_i, h_0, \mathbb{H}_1) \leftarrow \env$:
\begin{itemize}[leftmargin=*]
    \item If $u_i$ is honest:
    \begin{itemize}[leftmargin=*]
        \item If $\Sim$ learns $(\textit{ack}, u_i, \sigma_i)$ from $\Fex$ leakage, $\Sim$ sends $(sid, \sigma_i) \rightarrow u_0$ on behalf of $u_i$. If $\Sim$ learns $(\textit{locked}, sid, u_i, tx_{i+1})$ from $\Fex$, $\Sim$ sends $(sid, tx_{i+1}) \rightarrow u_{i+1}$ on behalf of $u_i$.
    \end{itemize}
    \item If $u_i$ is dishonest:
    \begin{itemize}[leftmargin=*]
        \item $\Sim$ learns if $u_i$ sent $(sid, tx_{i+1})$ in the real world. If no $tx_{i+1}$ is sent in this round, $\Sim$ sets $tx_{i+1}' = \bot$. 
        $\Sim$ then learns if $u_i$ sends $\sigma_{Ch}^i$ towards $u_0$. As $u_i$ is corrupted, $\Sim$ extracts $Ch^i$ from $\sigma_{Ch}^i$. After the extraction, $\Sim$ sends $(\textit{lock}, sid, CH^i.T, Ch^i.h_0, Ch^i.\mathbb{H}_1, \bot,  \bot, tx_{i+1}') \rightarrow \Fex$. If $u_i$ does not send $\sigma_{Ch}^i$, $\Sim$ sends $(\textit{lock}, sid, \bot, \bot, \bot, \bot, \bot, tx_{i+1}') \rightarrow \Fex$
        \item $\Sim$ then simulates the communication by replaying the captured messages in the ideal world.
    \end{itemize}
\end{itemize}

    \begin{center}
        \textbf{Unlock Phase}
    \end{center}
    (Round $T'$) Upon $\tilde{u_0}$ receiving $( \text{release}, \textit{sid}, s_0) \leftarrow \env  $: 
    \begin{itemize}[leftmargin=*]
        \item  $u_0$ is honest:
        \begin{itemize}
            \item As $\Fex$ leaks $(\textit{release}, sid, u_0, s_0)$ to $\Sim$, $\Sim$ simulates communication by sends $(\textit{released}, sid, s_0)$ to $u_n$ on behalf of $u_0$. 
        \end{itemize}
        \item $u_0$ is dishonest:
        \begin{itemize}[leftmargin=*]
            \item If $\Sim$ learns $(\textit{release}, sid, s_0^*)$ sent from $u_0$, $\Sim$ will send $( \text{release}, \textit{sid}, s_0^*) \rightarrow \Fex  $. If no message captured, $\Sim$ send nothing to $\Fex$. 
            \item $\Sim$ then simulates the communication by replaying captured message in ideal world.     
        \end{itemize}
    \end{itemize}
     Upon $\tilde{u_i}$ ($i \in [1, n]$) receiving  $( \text{release}, \textit{sid}, s_i) \leftarrow \env  $: 
     \begin{itemize}[leftmargin=*]
         \item If $u_i$ is honest: $\Sim$ learns $(\textit{release}, u_i, \textit{sid}, s_i)$ from $\Fex$'s leakage, simulating the release message by sending $(\textit{release},  \textit{sid}, s_0)$ to $u_n$. 
         \item If $u_i$ is dishonest: 
         \begin{itemize}[leftmargin=*]
             \item If $u_i$ sends $(\textit{release},  \textit{sid}, s_0^*)$ to $u_n$, $\Sim$ sends  $ (\text{release}, \textit{sid}, s_0^*) \rightarrow \Fex$. Otherwise, $\Sim$ sends nothing to $\Fex$.
         \end{itemize}
          
     \end{itemize}

    \begin{center}
        \textbf{Enforcement Phase}
    \end{center}
    (Round $T''$) Upon $\tilde{u_0}$ receiving $(\textit{challenge}, sid, s_0) \leftarrow \env$: 
    \begin{itemize}[leftmargin=*]
        \item If $u_0$ is honest: 
        \begin{itemize}
            \item If $\Sim$ learns the $s_0$ and $\sigma_i$ from $\Fex$'s leakage, set $\Sigma:= \{\sigma_1, \ldots, \sigma_n\}$, $Ch:= (h_0, \mathbb{H}_1, T, ADDR)$,  $\Sim$ will simulate $\Fjc$'s execution by running $(\textit{enforced}, sid, s_0) \leftarrow \Fjc(\textit{enforce}, sid, Ch, \Sigma, s_0)$. 
        \end{itemize}
        \item If $u_0$ is dishonest: 
        \begin{itemize}[leftmargin=*]
            \item If $u_0$ sends $(\textit{enforce}, sid, Ch^*, \Sigma^*, s_0')$ to $\Fjc$ in real world, $\Sim$ sends $(\textit{challenge}, sid, s_0')   \rightarrow  \Fex$. $\Sim$ then simulates $\Fjc$'s execution in the ideal world ($ (\textit{enforce}, sid, Ch^*, \Sigma^*, s_0') \rightarrow \Fjc $). If $\Fjc$ leaks the challenge message, $\Sim$ simulates $\Fjc$'s response:  $(\textit{enforced}, sid, s_0') \leftarrow \Fjc$. 
            
        \end{itemize}
    \end{itemize}
    (Round $T'' + n - i + 1$) Upon $\tilde{u_i}$ receiving $(\textit{responce}, sid, s_i) \leftarrow \env$ ($i \in [1, n]$): 
    \begin{itemize}[leftmargin=*]
        \item If $u_i$ is honest: 
        \begin{itemize}[leftmargin=*]
            \item If $\Sim$ learns $(\textit{response}, sid, s_i, u_i)$ from $\Fex$,  $\Sim$ will simulate the $\Fjc$ by running $(\textit{logged}, sid, \mathbb{S}) \leftarrow \Fjc(\textit{log}, sid,i, s_i)$. 
        \end{itemize}
        \item If $u_i$ is dishonest: 
        \begin{itemize}[leftmargin=*]
            \item If $\Sim$ captures the $(\textit{log}, sid, i, s_i') \rightarrow \Fjc$ message sent by $u_i$ in real world, and the channel $cid_{i}$ updated from $tx_i$, $\Sim$ sends $(\textit{responce}, sid, s_i', 1)   \rightarrow  \Fex$. 
            If $\FC$ does not captured \textit{updated} message from $\Sim$, $\Sim$ sends $(\textit{responce}, sid, s_i', 0)   \rightarrow  \Fex$
            \item $\Sim$ replays the contract execution in the ideal world.  
        \end{itemize}
    \end{itemize}
    
   (Round $T'' + n + 1$) Upon $\tilde{u_0}$ receiving $(\textit{punish}, sid) \leftarrow \env$ 
   \begin{itemize}[leftmargin=*]
        \item If $u_0$ is honest: 
        \begin{itemize}[leftmargin=*]
            \item If $\Sim$ learns \textit{punished} message from $\Fex$,  $\Sim$ will simulates the contract execution $(\textit{punished}, sid) \leftarrow \Fjc(\textit{punish}, sid)$. If $\Sim$ learns \textit{punished-fail} message from $\Fex$, $\Sim$ will simulate  $(\textit{punish}, sid) \rightarrow  \Fjc$ without any output from $\Fjc$
        \end{itemize}
        \item If $u_n$ is dishonest: 
        \begin{itemize}[leftmargin=*]
            \item $\Sim$ watches if $u_n$ calls the $\Fjc$ in real world, if yes, $(\textit{punish}, sid) \leftarrow \Fex $. $\Sim$ also simulates the $\Fjc$ to relay the $\Fjc$ execution in the ideal world. If no message sent from $u_n$, $\Sim$ will forward nothing to $\Fex$. 
        \end{itemize}
   \end{itemize}
\end{mdframed}

\caption{Simulator Construction}
\label{fig:uc_sim}
\end{figure*}

\section{Security proof}    \label{sec:fullproof}
In this section, we first analyze our \textit{Enforceable A-HTLC} protocol in the UC framework, addressing the security properties ensured by \textit{Enforceable A-HTLC}. We then use the \textit{Enforceable A-HTLC} ideal functionality as building blocks, arguing for the fairness and confidentiality defined in Section \ref{sec:prob_def}.

\subsection{\textit{Enforceable  A-HTLC} in UC} \label{sec:uc}
Here, we define the \textit{ Enforceable A-HTLC} protocol (denoted as $\Pi$) in UC framework, and then define the ideal functionality $\Fex$. 

\subsubsection{The UC-security definition}
We define $\Pi$ as a \textit{hybrid} protocol that accesses a list of preliminary ideal functionalities: secure communication channel $\Fan$, global clock $\Fclk$, \textit{restricted-programmable} global random oracle $\FR$, signature scheme $\Fsig$, public ledger $\FL$, \textit{Judge Contract} $\Fjc$, and payment channel $\FC$. 

We denote these functionalities as $\Fpre$. The environment $\env$ supplies inputs to all parties in $\Pi$ and the adversary $\mathcal{A}$ with a security parameter $\lambda$ and auxiliary input $z$, $z \in \{0, 1\} ^ *$. The ensemble of outputs of executing protocol $\Pi$ are represented as $\text{EXEC}_{\Pi, \mathcal{A}, \env}^{\Fpre}\{\lambda, z\}$. 

In the ideal world, all parties do not interact with each other but simply forwarding the message from $\env$ to $\Fex$. $\Fex$ has access to following ideal functionalities:  global clock $\Fclk$, \textit{restricted-programmable} random oracle $\FR$, public ledger $\FL$, and payment channel $\FC$. We denotes these ideal functionalities as $\Fideal$. 
Let $\text{EXEC}_{\Fex, \Sim, \env}^{\Fideal}\{\lambda, z\}$ be  the ensemble of outputs of executing ideal functionality $\Fex$.

\begin{definition}[\textbf{UC Security}]
    \label{def:uc}
    A protocol $\Pi$ \textit{UC-realizes} an ideal functionality $\Fex$ if, for any probabilistic polynomial-time (PPT) adversary $\mathcal{A}$, there exists a simulator $\Sim$ such that, for any environment $\env$ with $z \in \{0, 1\} ^*$, $\lambda \in \mathbb{N}$, $\text{EXEC}_{\mathcal{F}, \mathcal{S}, \mathcal{E}}$ and $\text{EXEC}_{\tau, \mathcal{A}, \mathcal{E}}$ are computationally indistinguishable: 
    \[
    \text{EXEC}_{\Pi, \mathcal{A}, \env}^{\Fpre}\{\lambda, z\} \approx_{c} \text{EXEC}_{\Fex, \Sim, \env}^{\Fideal}\{\lambda, z\}
    \]
\end{definition}

\subsubsection{Ideal functionality of \textit{Enforceable A-HTLC}}
In this section, we present the ideal functionality $\Fex$, shown in Fig.\ref{fig:uc_if}.

\subsubsection{ Enforceable A-HTLC Protocol}
In this section, we formally present the Enforceable A-HTLC Protocol, shown in Fig.\ref{fig:uc_protocol}.

\subsubsection{Simulator Construction}

In this section, we present the simulator constuction in Fig.\ref{fig:uc_sim} and formal proof that  $\Pi$ \textit{UC-realize} the ideal functionality $\Fex$. We denotes a dummy party in the ideal world as $\tilde{u_i}$. 
Fig. \ref{fig:uc} demonstrates the real world execution and the ideal world execution. 
\begin{figure*}
    \centering
    \includegraphics[width=0.9\textwidth] {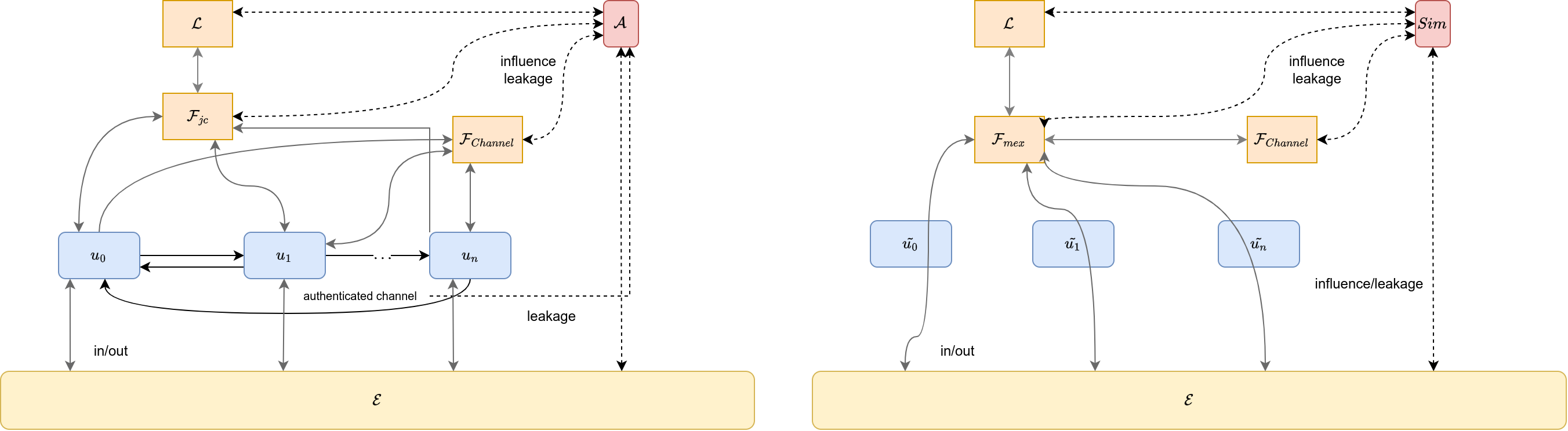}
    \caption{Setup of a Simulation with honest parties. The left part is the real $\Fpre$ hybrid world, while the right part is the $\Fideal$ ideal world.}
    \label{fig:uc}
\end{figure*}

\subsubsection{UC Proof}

In this section, we formally argue that $\Pi$ \textit{uc-realize}s $\Fex$. In our description, we write $m^i$ as the message observed at round $i$. 

\begin{lemma}   \label{lemma:uc_lock}
    The lock phase of $\Pi$ \textit{UC-realize}s the lock phase of $\Fex$. 
\end{lemma}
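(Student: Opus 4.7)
The plan is to argue indistinguishability of the real execution $\text{EXEC}_{\Pi, \mathcal{A}, \env}^{\Fpre}$ and the simulated ideal execution $\text{EXEC}_{\Fex, \Sim, \env}^{\Fideal}$ during the lock phase by induction on the round index $r = 1, 2, \ldots, n+1$. At each round, I would split into cases according to whether the sender (resp.\ the immediate receiver of the on-chain/channel action) is honest or corrupted, and verify that the leakage interface of $\Fex$ together with the simulator's construction in Fig.~\ref{fig:uc_sim} reproduces every message the environment $\env$ can observe.

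In the base case (round $1$), if $u_0$ is honest, $\Fex$ computes $\phi$ and $tx_1$ by invoking exactly the same algorithm $\mathtt{Construct}$ and $\mathtt{lock}$ that honest $u_0$ runs in $\Pi$, then leaks $tx_1$ to $\Sim$, which forwards it on $\Fan$; the two transcripts are therefore identical. If $u_0$ is corrupted, $\Sim$ intercepts the real-world message $tx_1^\star$ on the authenticated channel to $u_1$ and sends it to $\Fex$ as the prescribed input, so $\Fex$'s stored state coincides with what the real protocol would have induced. In the inductive step for round $i+1$, honest $u_i$ behaves deterministically given its view (incoming $tx_i$, local keys, and $\env$'s input), and $\Fex$'s checks on $tx_i.\phi.\mathbb{H}$, $tx_i.\phi.t$, the channel balance query to $\FC$, and the construction of $\sigma_i$ and $tx_{i+1}$ mirror the protocol line by line; for corrupted $u_i$ the simulator extracts the pair $(\sigma_{Ch}^i, tx_{i+1})$ (or marks either missing) and relays the extracted tuple to $\Fex$.

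The main obstacle is handling \emph{equivocating} corrupted relayers, namely a dishonest $u_i$ that sends a signature $\sigma_{Ch}^i$ bound to some $Ch^i = (h_0^\star, \mathbb{H}_1^\star, T^\star, ADDR)$ towards $u_0$ while simultaneously locking $tx_{i+1}$ on a channel that references a \emph{different} hash list to $u_{i+1}$. I would address this by having $\Sim$ extract $Ch^i$ from the signature and pass $(T^\star, h_0^\star, \mathbb{H}_1^\star)$ into $\Fex$, and then rely on two facts: (i) in the ideal world, $\Fex$'s $\textit{ifEnf}$ counter increments only when the extracted tuple matches the stored $(T, h_0, \mathbb{H}_1)$, which corresponds to exactly the honest $u_0$'s later acceptance check $AE.Ver(Ch, \sigma_{Ch}^i, Pk(u_i))=1$ with the correct $Ch$; (ii) any signature purporting to come from an honest $u_i$ on a $Ch$ that $u_i$ never signed is producible only with negligible probability by EUF-CMA of $\Fsig$. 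An analogous argument rules out a corrupted $u_0$ forging the outgoing $tx_1$ signature on behalf of an honest $u_0$, and collisions in hash values binding $\mathbb{H}_1$ are excluded by the collision resistance of $\FR$.

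Combining the per-round indistinguishability yields a distinguishing advantage bounded by a polynomial number of signature-forgery and random-oracle collision events, hence negligible in $\lambda$. Finally, I would remark that interactions with $\FL$, $\FC$, and $\Fclk$ are passed through identically in both worlds because $\Fex$ delegates channel queries to the same $\FC$ that $\Pi$ uses, so no additional simulation overhead arises from these hybrid calls. This establishes the lemma.
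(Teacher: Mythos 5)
Your proposal is correct and follows essentially the same route as the paper's proof: a round-by-round case analysis on which parties are corrupted, with the simulator replaying the messages leaked by $\Fex$ for honest parties and extracting the effective inputs (e.g.\ $tx_{i+1}$ and the $Ch^i$ underlying $\sigma_{Ch}^i$) from observed real-world messages for corrupted parties, so that the two transcripts coincide. Your added appeal to EUF-CMA of $\Fsig$ and collision resistance of $\FR$ to handle equivocating relayers is a harmless refinement of what the paper leaves implicit in its matching of the $\textit{ifEnf}$ counter to $u_0$'s signature check.
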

\begin{proof} 
We assume the protocol $\Pi$ starts on round $1$, and note that in the real world $\env$ controls $\mathcal{A}$.  We first define the messages used in the lock phase: 
    \begin{itemize}[leftmargin=*]
    \sloppy 
        \item $m_1$: the \textit{lock} message sent from $\env$ to $u_i$. 
        \item $m_2$: the \textit{locked} message sent from $u_i$ to $u_{i+1}$.
        \item $m_3$: the \textit{ack} message sent from $u_{i+1}$ to $u_0$.
    \end{itemize}
    We proceed to compare the messages received by $\env$ in the real world and the ideal world, considering various adversary assumptions. During the lock phase, the output of $u_0$ is only relevant to $\env$'s input $m_1$, while the output of $u_i$ ($i > 0$) is relevant to $\env$'s input $m_1$ and the output of $u_{i-1}$. Thus, we consider the following cases:

    \begin{itemize}[leftmargin=*]
    \sloppy 
    \item (Round 1): we consider $\env$ sends $m_1$ to $u_{0}$ in this round.   
    \begin{itemize}[leftmargin=*]
        \item Case 1: Assuming $u_0$ is honest. Once $\env$ sends $m_1$ to $u_0$ in round one, an honest $u_0$ will send the \textit{lock} message $m_2$ to $u_1$, resulting in $\text{EXEC}_{\Pi, \mathcal{A}, \env} := \{ m_1^1, m_2^{1}, \ldots\}$. In the ideal world, the Simulator will learn $tx_1$ from the leakage of $\Fex$ and simulate the \textit{locked} message, achieving the same output.
        \item Case 2: Assuming $u_0$ is corrupted. $\Sim$ captures the message $m_2$ sent from $u_0$ in the real world and extracts $tx_1'$ from $m_2$. $\Sim$ will then forward $tx_1'$ to $\Fex$.
    \end{itemize}
    \item(Round $1 + i $): we consider $\env$ sends $m_1$ to $u_{i}$  ($i \in [1, n]$) in this round.  
    \begin{itemize}[leftmargin=*]
        \item  Case 1: Assuming $u_{i-1}$ is corrupted, while $u_{i}$ remains honest. In the real world, during round $i$, if the corrupted $u_{i-1}$ sends an invalid $m_2^{i}$ to $u_{i}$, $u_{i}$ will refuse to proceed to the next lock and abort the protocol, resulting in $\text{EXEC}_{\Pi, \mathcal{A}, \env} := \{ \ldots,m_1^{i},  m_2^{i}, m_1^{1+i} \}$. In the ideal world, $\Sim$ captures the $m_2^{i}$ sent from $u_i$, extracts $tx_i'$, and forwards $tx_i'$ to $\Fex$. Since $\Fex$ follows the same logic as $\Pi$, when $u_{i}$ is honest, the execution result in the ideal world is: $\text{EXEC}_{\Fex, \Sim, \env} := \{ \ldots,m_1^{i},  m_2^{i}, m_1^{1+i} \}$.
        \item Case 2: Assuming $u_{i-1}$ and $u_{i}$ are both honest. In the real world, during round $i$, $u_{i-1}$ sends a valid $m_2^i$ to $u_{i}$, and $u_{i}$ proceeds to the next lock, resulting in $\text{EXEC}_{\Pi, \mathcal{A}, \env} := \{ \ldots, m_1^{i},  m_2^{i},m_3^{i}, m_1^{1+i}, m_2^{1 + i}, m_3^{1 + i}, ,\ldots\}$. In the ideal world, $\Sim$ learns the leakage from $\Fex$ and constructs $m_2$ and $m_3$, resulting in $\text{EXEC}_{\Fex, \Sim, \env} := \{ \ldots, m_1^{i},  m_2^{i},m_3^{i}, m_1^{1+i}, m_2^{1 + i}, m_3^{1 + i}, ,\ldots\}$.
        \item Case 3: Assuming $u_{i}$ is corrupted. $\Sim$ can observe everything sent by the corrupted $u_i$ and simulate the communication in the ideal world. Then $\Sim$ will extract the corresponding input from the messages sent by $u_i$ and forward it to $\Fex$.
    \end{itemize}
\end{itemize}

\end{proof}

\begin{lemma}   \label{lemma:uc_unlock}
    The lock-unlock phase of protocol $\Pi$ \textit{UC-realizes} the lock-unlock phase of $\Fex$. 
\end{lemma}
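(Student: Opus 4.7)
The plan is to mirror the case-analysis structure used in Lemma~\ref{lemma:uc_lock}, tracking the observable messages exchanged during the unlock phase and showing that for every corruption pattern the simulator in Fig.~\ref{fig:uc_sim} reproduces exactly the distribution $\env$ sees in the real world. The messages to track are: $m_4$, the \textit{released} message from $u_0$ to $u_n$ that carries $s_0$; $m_5^i$, the $unlock(tx_i, u_i, \mathbb{S}_i)$ call submitted to $\FC$ by $u_i$; and $m_6^i$, the $(\textit{updated}, sid, cid_i, \mathbb{S}_i)$ notification that $\FC$ sends back to $u_{i-1}$. All other communication in this phase is internal to $\FC$, which is common to both worlds.

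First I would handle $u_0$'s behaviour: if $u_0$ is honest, the leakage $(\textit{release}, sid, u_0, s_0)$ from $\Fex$ lets $\Sim$ emit the same $m_4$ in the ideal world exactly when an honest real-world $u_0$ would, because both check $T' < T$, $Open(s_0,h_0)=1$, and the presence of all $\sigma_{Ch}^i$; if $u_0$ is corrupted, $\Sim$ intercepts $m_4$ on the wire and forwards the extracted $s_0^*$ to $\Fex$, whose subsequent state transition ($states[n] := \textit{unlockable}$) matches the honest $u_n$'s acceptance criterion. Next, I would treat the cascade from $u_n$ back to $u_1$ inductively in $i$ from $n$ down to $1$: when $u_i$ is honest, the preconditions checked in $\Pi$ ($st_i = \textit{locked}$ and $tx_i.\phi(\mathbb{S}_i, ct) = 1$) coincide exactly with those checked by $\Fex$, so $\Sim$ simply replays the $\FC$ update messages; when $u_i$ is corrupted, $\Sim$ observes $u_i$'s raw call to $\FC$, extracts the secret it reveals, and forwards the corresponding $(\textit{unlock},sid,s_i)$ to $\Fex$, after which $\Fex$ triggers $\FC$ identically.

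The main obstacle will be carefully verifying two timing invariants that tie the two worlds together. The first is the monotone decrease of timelocks $tx_{i}.\phi.t = T + n - i + 2$ established during the lock phase; I must show that an honest $u_i$ in $\Pi$ succeeds in submitting $m_5^i$ before $tx_i$ expires iff $\Fex$ reaches the analogous state in the same round, and in particular that a wormhole-style omission by a corrupted intermediate party produces the same abort pattern in both worlds (because downstream honest parties refuse to reveal their secret once the deadline on their outgoing channel passes). The second subtlety is that $\FC$ is shared between worlds only at the interface level: $\Sim$ must faithfully simulate $\FC$'s \textit{update}/\textit{update-fail} leakages on behalf of ideal-world dummy parties, which is handled by having $\Sim$ relay the corresponding $\FC$ events it observes. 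Once these two invariants are pinned down, indistinguishability of the output ensembles follows from the unforgeability of signatures underlying $tx_i$, the binding of the commitment scheme for the $h_i$'s, and the fact that $\Fex$'s update rules for $states[i]$ are a literal transcription of $\Pi$'s checks, so no environment can produce a distinguishing transcript except with negligible probability.
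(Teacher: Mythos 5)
Your proposal is correct and follows essentially the same route as the paper's proof: a round-by-round case analysis over corruption patterns of $u_0$ and the $u_i$, showing that the simulator reproduces the real-world transcript by replaying $\Fex$'s leakage for honest parties and extracting inputs from intercepted messages for corrupted ones, with the timelock monotonicity and the signature-receipt precondition handling the abort cases. The extra emphasis you place on the timing invariants and on $\Sim$'s faithful relay of $\FC$ events is consistent with, and slightly more explicit than, the paper's argument.
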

\begin{proof}
We first define new messages used in the unlock phase: 
\begin{itemize}[leftmargin=*]
    \item $m_4$: the \textit{release} message sent from $\env$ to $u_0$.
    \item $m_5$: the \textit{released} message sent from $u_0$ to $u_n$.
    \item $m_6$: the \textit{unlock} message sent from $\env$ to $u_i$ $i \in [1, n]$.
    \item $m_7$: the \textit{updated} message leaked from $\FC$ to $\mathcal{A}$ and $\Sim$. 
\end{itemize}

We subsequently compare the messages received by $\env$ in both the real world and the ideal world, considering different assumptions regarding the adversary. It is assumed that $\env$ transmits $m_4$ during round $T'$.
The output of $u_0$ is solely pertinent to $\env$'s input and the $m_3$ transmitted during the lock phase.
The update message $m_7$ concerning channel $cid_n$ solely pertains to $u_0$'s \textit{released} message and the environment's input to $u_n$, while $m_7$ concerning channel $cid_i$ solely pertains to $u_{i+1}$'s \textit{unlock} message and the environment's input to $u_i$. Therefore, the following conditions are considered:

\begin{itemize}[leftmargin=*]
    \sloppy 
    \item (Round $T'$): we consider $\env$ sends $m_4$ to $u_{0}$ in round $T'$. 
    \begin{itemize}
        \item Case 1: Assuming $u_0$ is honest, along with all $u_i$ being honest during the lock phase. In the real world, upon receiving $m_4$, $u_0$ will send $m_5$, resulting in the execution $\text{EXEC}_{\Pi, \mathcal{A}, \env} := \{ m_4^{T'}, m_5^{T'}, \ldots\}$. In the ideal world, $\Fex$ will disclose $s_0$ to $\Sim$, which will simulate this execution with the output $\text{EXEC}_{\Fex, \Sim, \env} := \{ m_4^{T'}, m_5^{T'}, \ldots\}$.
        \item Case 2:  Assuming $u_0$ is honest, but at least one $u_i$ is corrupted during the lock phase. In the real world, $u_0$ will refrain from sending $m_4$ if not all signatures are received during the lock phase, resulting in $\text{EXEC}_{\Pi, \mathcal{A}, \env} := \{ m_4^{T'}, \ldots\}$. In the ideal world, $\Fex$ will also prevent the transmission of $m_4$ from $\env$ (the $ifEnf$ flag maintained by $\Fex$ is less than $n$), resulting in $\text{EXEC}_{\Fex, \Sim, \env} := \{ m_4^{T'}, \ldots\}$.
        \item Case 3: Assuming $u_0$ is corrupted. If $u_0$ sends an invalid $s_0'$ in $m_5$ in the real world, $\Sim$ will forward $s_0'$ to $\Fex$ and simulate $m_5$, resulting in $\text{EXEC}_{\Fex, \Sim, \env} := \{ m_4^{T'}, m_5^{T'}, \ldots\}$. If $u_0$ fails to send any $m_5$ in round $T'$, $\Sim$ will not provide any input to $\Fex$.
    \end{itemize}
    \item (Round $T' + 1$): we consider $\env$ sends $m_6$ to $u_n$ in this round. 
    \begin{itemize}
        \item  Case 1: Assuming both $u_0$ and $u_n$ are honest. In the real world, $u_0$ will send $m_5$ to $u_n$, and $u_n$ will update its channel, resulting in $\text{EXEC}_{\Pi, \mathcal{A}, \env} := \{ m_4^{T'}, m_5^{T'}, m_6^{T' + 1}, m_7^{T' + 1}, \ldots\}$. In the ideal world, $\Sim$ will forward $s_0$ to $\Fex$ and simulate $m_6$, while $m_7$ will be generated by $\Fex$, resulting in $\text{EXEC}_{\Fex, \Sim, \env} := \{ m_4^{T'}, m_5^{T'}, m_6^{T' + 1}, m_7^{T' + 1}, \ldots\}$.
        \item Case 2: Assuming $u_0$ is corrupted and $u_n$ is honest. In the real world, $u_n$ will unable to send $m_7$ as $u_0$ donot reveal the correct secrets, resulting $\text{EXEC}_{\Pi, \mathcal{A}, \env} := \{ m_4^{T'}, m_5^{T'}, m_6^{T' + 1}\}$. In the ideal world, $\Fex$ will also refuse to submit $m_7$, resulting $\text{EXEC}_{\Fex, \Sim, \env} := \{ m_4^{T'}, m_5^{T'}, m_6^{T' + 1}\}$
        \item Case 3: Assuming $u_n$ is corrupted. $\Sim$ captures all message sent in the real world and relay in the ideal world, resulting the same outputs. 
    \end{itemize}
    \item(Round $T_r + 1$): we consider $\env$ sends $m_6$ to $u_{i+1}$ in this round. 
    \begin{itemize}
        \item Case 1: Assuming $u_i$ is honest, while $u_{i+1}$ is corrupted. If $u_{i+1}$ fails to trigger the \textit{updated} message $m_7$ in the real world during round $T_r$, $u_i$ would never trigger the channel update of $cid_i$ since the unlock condition would never be satisfied, resulting in $\text{EXEC}_{\Pi, \mathcal{A}, \env} := \{ m_6^{T_r},  m_6^{T_r+ 1}, \ldots\}$.
        In the ideal world, $\Fex$ would maintain $states[i+1]$ as \textit{locked}, preventing $u_i$ from triggering the update call, resulting in $\text{EXEC}_{\Pi, \mathcal{A}, \env} := \{ m_6^{T_r},  m_6^{T_r+ 1}, \ldots\}$.
        \item Case 2: Assuming both $u_i$ and $u_{i+1}$ are honest ($i \in [1, n-1]$). If $u_{i+1}$ sends $m_7$, $u_i$ will also send $m_7$, resulting in $\text{EXEC}_{\Pi, \mathcal{A}, \env} := \{ m_6^{T_r}, m_7^{T_r}, m_6^{T_r+ 1},m_7^{T_r + 1}, \ldots\}$. In the ideal world, $\Fex$ will send $m_7$, resulting in $\text{EXEC}_{\Fex, \Sim, \env} :=  \{ m_6^{T_r}, m_7^{T_r}, m_6^{T_r+ 1},m_7^{T_r + 1}, \ldots\}$.
        \item Case 3: Assuming $u_{i+1}$ is corrupted ($i \in [1, n-1]$). $\Sim$ can observe everything sent by corrupted $u_i$ in the real world and simulate the communication in the ideal world.
    \end{itemize}

\end{itemize}

\end{proof}

\begin{lemma}   \label{lemma:uc_enforce}
    The lock-enforcement phase of protocol $\Pi$ \textit{UC-realize}s the lock-enforcement phase of $\Fex$. 
\end{lemma}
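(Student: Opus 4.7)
The plan is to follow the same case-analysis template used in Lemmas \ref{lemma:uc_lock} and \ref{lemma:uc_unlock}: first fix the set of messages that are new to this phase, then walk through the enforcement rounds one by one and, at each round, compare $\text{EXEC}_{\Pi,\mathcal{A},\env}$ with $\text{EXEC}_{\Fex,\Sim,\env}$ under every relevant choice of corruptions. The new message labels I would introduce are $m_8$ (\textit{challenge} from $\env$ to $u_0$), $m_9$ (\textit{enforce} from $u_0$ to $\Fjc$), $m_{10}$ (\textit{enforced} broadcast from $\Fjc$), $m_{11}$ (\textit{response} from $\env$ to $u_i$), $m_{12}$ (\textit{log} from $u_i$ to $\Fjc$), $m_{13}$ (\textit{logged} broadcast from $\Fjc$), $m_{14}$ (\textit{punish} from $\env$ to $u_0$), and $m_{15}$ (\textit{punished}/\textit{transfer} triggered by $\Fjc$). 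Since the only way the enforcement phase is reachable is after a successful lock phase, I can assume the invariant that whichever $\sigma_{Ch}^i$ signatures are required were already processed in Lemma \ref{lemma:uc_lock}; this lets me appeal to the $\textit{ifEnf}$ counter inside $\Fex$ exactly the way $u_0$'s check on $\Sigma$ is used in $\Pi$.

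The round-by-round argument then proceeds as follows. In round $T''$, I split on whether $u_0$ is honest: if honest, $\Sim$ learns $s_0$ and the $\sigma_i$'s from $\Fex$'s leakage and runs the $\Fjc$ simulation verbatim, so both worlds emit $\{m_8^{T''},m_9^{T''},m_{10}^{T''}\}$; if corrupted, $\Sim$ intercepts $m_9$, extracts $(Ch^*,\Sigma^*,s_0')$ and forwards a matching \textit{challenge} to $\Fex$, after which the simulated $\Fjc$ mirrors the real ledger state. Next, for each round $T''+n-i+1$, I consider $u_i$'s honesty: an honest $u_i$ in $\Pi$ atomically (i) tries to redeem $tx_i$ over $\FC$ and (ii) logs $s_i$ to $\Fjc$, and I show that $\Fex$'s \textit{response} handler reproduces both effects through the $\textit{ifUnlock}$ flag and the subsequent $\FC$ call; a corrupted $u_i$ is handled by having $\Sim$ observe whether a valid $m_{12}$ is sent within the round window and whether $\FC$ actually receives the unlock, then forward $(\textit{response},sid,s_i',\textit{ifUnlock})$ with the correct flag. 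Finally, at round $T''+n+1$ I handle the punish step symmetrically: if $u_0$ is honest, $\Sim$ replays the $\Fjc$ transfer upon $\Fex$'s \textit{punished} leak, and if $u_0$ is corrupted, $\Sim$ watches for the on-chain \textit{punish} call and forwards it to $\Fex$, which then uses its $cheater$ variable to issue the compensation via $\FL$.

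The main obstacle I expect is showing that $\Fex$'s internal $cheater$ and $eid$ state faithfully tracks what a real $\Fjc$ would compute, in particular that the first non-responding relayer (highest-indexed one in the real contract's logic, lowest-remaining-$eid$ in $\Fex$) coincides in both worlds even when some relayers interleave channel updates via $\FC$ with on-chain logs. The subtle case is a relayer $u_i$ that redeems $tx_i$ off-chain after the enforcement was triggered but before round $T''+n-i+1$: the real $\Pi$ still requires $u_i$ to post $m_{12}$ to avoid being flagged, while $\Fex$ must not double-count the unlock on $cid_i$. I will argue this cannot distinguish the two worlds because $\Pi$ conditions the channel update on $st_i=\textit{locked}$ and $\Fex$ conditions the $\FC$ call on $states[i]=\textit{locked}$, so at most one call reaches $\FC$ in either world; the rest of the argument, including the collision-resistance-based uniqueness of preimages for $\{h_1,\dots,h_n,h_0\}$ and the unforgeability of $\sigma_{Ch}^i$ under $\Fsig$, is inherited from $\FR$ and $\Fsig$ and reduces to standard negligible-advantage bounds. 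Combining this with Lemmas \ref{lemma:uc_lock} and \ref{lemma:uc_unlock} then yields the full UC-realization statement.
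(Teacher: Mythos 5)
Your proposal is correct and follows essentially the same route as the paper's proof: the same enumeration of the enforcement-phase messages, the same round-by-round comparison of $\text{EXEC}_{\Pi,\mathcal{A},\env}$ and $\text{EXEC}_{\Fex,\Sim,\env}$ at rounds $T''$, $T''+n-i+1$, and $T''+n+1$ under each corruption pattern, and the same reliance on the lock-phase invariant (the $\textit{ifEnf}$ counter versus $u_0$'s check on $\Sigma$) to rule out enforcement after a failed lock phase. Your additional remarks on the $cheater$/$eid$ bookkeeping and the $states[i]=\textit{locked}$ guard against double unlocking are consistent with, and slightly more explicit than, what the paper writes.
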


\begin{proof}

First define messages used in the enforcement phase: 

\begin{itemize}[leftmargin=*]
    \item $m_8$: the \textit{challenge} message sent from $\env$ to $u_0$.
    \item $m_9$: the \textit{enforce} message send from $u_0$ to $\Fjc$.
    \item $m_{10}$: the \textit{enforced} message leaked from $\Fjc$.
    \item $m_{11}$: the \textit{response} message sent from $\env$ to $u_i$.
    \item $m_{12}$: the \textit{log} message sent from $u_i$ to $\Fjc$. 
    \item $m_{13}$: the \textit{logged} message leaked from $\Fjc$. 
    \item $m_{14}$: the \textit{punish} message sent from $\env$ to $u_0$.
    \item $m_{15}$: the \textit{punish} message sent from $u_0$ to $\Fjc$.
    \item $m_{16}$: the \textit{transfer} message sent from $\Fjc$ to $\FL$
    \item $m_{17}$: the \textit{punished} message leaked from $\Fjc$
\end{itemize}

We then compare the messages that $\env$ receives in the real world and ideal world under different adversary assumptions. We assume that $\env$ sends $m_8$ in round $T''$. In the real world, $m_9$ is solely pertinent to $\env$'s input and the $m_3$ messages transmitted during the lock phase. $u_n$'s output is solely pertinent to $\env$'s input and the message $m_{10}$, and $u_i$'s output ($i \in [1, n-1]$) is solely pertinent to $\env$'s input, the message $m_{10}$, and $u_{i+1}$'s output. The punishment result ($m_{15}, m_{16}, m_{17}$) is pertinent to $\env$'s input and the message $m_{13}$ sent from $u_i$. 
Since the enforcement phase follows a round-by-round setting, we can demonstrate the comparison as follows:

\begin{itemize}[leftmargin=*]
        \item (Round $T''$): we consider $\env$ sends $m_8$ to $u_0$ in this round. 
        \begin{itemize}[leftmargin=*]
            \sloppy
            \item Case 1: Assuming $u_0$ is honest, and all user $u_i$ are honest in the lock phase. In the real world,  $u_0$ will submit $m_{9}$ to $\Fjc$, then $\Fjc$ will broadcast $m_{10}$, resulting $\text{EXEC}_{\Pi, \mathcal{A}, \env} := \{  m_8^{T''}, m_9^{T''},  m_{10}^{T''}, \ldots \}$
             In the ideal world,  $\Sim$ will simulate  $\Fjc$'s execution, resulting $\text{EXEC}_{\Fex, \Sim, \env} := \{  m_8^{T''}, m_9^{T''},  m_{10}^{T''}, \ldots \}$. 
    
            \item Case 2: Assuming $u_0$ is honest, and there exist corrupted $u_i$ in the lock phase. In the real world, honest $u_0$ will never send $m_8$ to $\Fjc$ since not all valid $m_3$ messages are received. The same applies to $\Fex$ in the ideal world. In the following cases, we assume all user $u_i$ are honest in the lock phase. 
            \item Case 3: Assuming $u_0$ is corrupted. If $u_0$ reveals an incorrect secret $s_0'$ ($Open(s_0', h_0) \neq 1$) in message $m_{9}$, $\Fjc$ will refuse to send $m_{10}$, and no further outputs will be observed: $\text{EXEC}_{\Pi, \mathcal{A}, \env} :=  \{ m_8^{T''}, m_9^{T''}\}$. In the ideal world, $\Sim$ will simulate the $\Fjc$ interaction $m_9$, and $\Fex$ will refuse to send further response messages $m_{11}$ from $\tilde{u_i}$: $\text{EXEC}_{\Fex, \Sim, \env} :=  \{ m_8^{T''}, m_9^{T''} \}$.
        \end{itemize}
        \item (Round $T'' +  1$):  we consider $\env$ sends $m_{11}$ to $u_n$ in this round.
        \begin{itemize}[leftmargin=*]
            \sloppy
            \item  Case 1: Assuming $u_0$ is corrupted, and $u_n$ is honest. If $u_0$ reveals an incorrect secret $s_0'$ ($Open(s_0', h_0) \neq 1$) in message $m_{9}$, $\Fjc$ will refuse to send $m_{10}$, and no further outputs will be observed: $\text{EXEC}_{\Pi, \mathcal{A}, \env} :=  \{ m_8^{T''}, m_9^{T''},\ldots, m_{11}^{T'' +  i + 1}, \ldots \}$. 
            In the ideal world, $\Sim$ will simulate the $\Fjc$ interaction $m_9$, and $\Fex$ will refuse to send further response messages $m_{11}$ from $\tilde{u_i}$: $\text{EXEC}_{\Fex, \Sim, \env} :=  \{ m_8^{T''}, m_9^{T''}, \ldots, m_{11}^{T'' + i + 1}, \ldots \}$.
            \item Case 2: Assuming $u_0$ is honest, and $u_n$ is honest. In the real world, $u_n$ will send $s_n$ to $\Fjc$, and $\Fjc$ will broadcast the \textit{log} message, resulting $\text{EXEC}_{\Pi, \mathcal{A}, \env} := \{  m_8^{T''}, m_9^{T''},  m_{10}^{T''}, m_{11}^{T'' + 1},  m_{12}^{T'' + 1}, m_{13}^{T'' + 1}, \ldots \}$. In the ideal world, $\Sim$ leaks $s_n$ to $\Sim$, while in turn, $\Sim$ simulates the execution of $\Fjc$, resulting $\text{EXEC}_{\Fex, \Sim, \env}  := \{  m_8^{T''}, m_9^{T''},  m_{10}^{T''}, m_{11}^{T'' + 1},  m_{12}^{T'' + 1}, m_{13}^{T'' + 1}, \ldots \}$. If the payment channel is not updated, $u_n$ in the real world will send $m_7^{T'' + 1}$, so as the $\Fex$ in the ideal world. 
            
            \item Case 3: Assuming $u_n$ is corrupted. If $u_n$ sends an invalid $m_12$ to $\Fjc$, the real world execution is: $\text{EXEC}_{\Pi, \mathcal{A}, \env} := \{  \ldots, m_{11}^{T'' + 1},  m_{12}^{T'' + 1} \}$. In the ideal world, $\Sim$ will simulate the invalid interaction, resulting $\text{EXEC}_{\Fex, \Sim, \env}  := \{ \ldots, m_{11}^{T'' + 1},  m_{12}^{T'' + 1}\}$. 
        \end{itemize}
        \item (Round $T'' + n - i +  1$):  we consider $\env$ sends $m_{11}$ to $u_i$ $(i \in [1, n-1])$ in this round.
        \begin{itemize}
            \sloppy 
            \item Case 1: Assuming $u_0$ is honest, $u_{i}, \ldots, u_{n}$ are honest. In the real world, $u_i$ will  send $s_i$ to $\Fjc$, and $\Fjc$ will broadcast the \textit{log} message, resulting $\text{EXEC}_{\Pi, \mathcal{A}, \env} := \{  \ldots, m_{11}^{T'' + n - i +  1},  m_{12}^{T'' + n - i +  1}, m_{13}^{T'' + n - i +  1}, \ldots \}$. In the ideal world, $\Sim$ leaks $s_i$ to $\Sim$, while in turn, $\Sim$ simulates the execution of $\Fjc$, resulting $\text{EXEC}_{\Fex, \Sim, \env}  := \{  \ldots, m_{11}^{T'' + n - i +  1},  m_{12}^{T'' + n - i +  1}, m_{13}^{T'' + n - i +  1}, \ldots \}$. 
            \item Case 2: Assuming $u_0$ is honest, $u_i$ is honest, and there exist a $u_r$ $(r> i)$ is corrupted. In the real world, $u_i$ will refuse to send $m_12$ to $\Fjc$. resulting $\text{EXEC}_{\Pi, \mathcal{A}, \env} := \{  \ldots, m_{11}^{T'' + n - i +  1}, \ldots \}$. In the ideal world, $\Fex$ will leak nothing to $\Sim$ in this round, and $\Sim$ will do nothing, resulting  $\text{EXEC}_{\Fex, \Sim, \env} := \{  \ldots, m_{11}^{T'' + n - i +  1}, \ldots \}$.
            \item Case 3: Assuming $u_0$ is corrupted, $u_i$ is honest. As $u_0$ is corrupted, $u_i$ will refuse to send $m_12$ to $\Fjc$, so the real world execution and ideal world execution is the same as Case 2. 
            \item Case 4: Assuming $u_i$ is corrupted. $\Sim$ can observe everything sent by corrupted $u_i$ in the real world and simulate the communication in the ideal world.
        \end{itemize}
        \item (Round $T'' + n + 1$): we consider $\env$ sends $m_{14}$ to $u_0$ in this round.
        \begin{itemize}
            \sloppy 
            \item Case 1: Assuming all users are honest. In the real world, $u_0$ will send $m_{15}$, but $\Fjc$ will not trigger $m_{16}$ and $m_{17}$ as no party is corrupted, resulting $\text{EXEC}_{\Pi, \mathcal{A}, \env} := \{  \ldots, m_{14}^{T'' + n+  1}, m_{15}^{T'' + n+  1} \}$. In the ideal world, $\Fex$ will not trigger $m_{16}$ and $\Sim$ only simulates the message towards $\Fjc$ with no response, resulting $\text{EXEC}_{\Fex, \Sim, \env} := \{  \ldots, m_{14}^{T'' + n+  1}, m_{15}^{T'' + n+  1} \}$.
            \item Case 2: Assuming $u_0$ is honest, and there exist a $u_r$ $(r> 0)$ is corrupted. In the real world,  $u_0$ will send $m_{15}$, and $\Fjc$ will trigger $m_{16}$ to $\FL$ and $m_{17}$, resulting $\text{EXEC}_{\Pi, \mathcal{A}, \env} := \{  \ldots, m_{14}^{T'' + n+  1}, m_{15}^{T'' + n+  1},  m_{16}^{T'' + n+  1},  m_{17}^{T'' + n+  1} \}$. In the ideal world, $\Fex$ will send $m_{16}$ to $\FL$, and $\Sim$ will simulates the interaction between $u_0$ and $\Fjc$, resulting $\text{EXEC}_{\Fex, \Sim, \env} := \{  \ldots, m_{14}^{T'' + n+  1}, m_{15}^{T'' + n+  1},  m_{16}^{T'' + n+  1},  m_{17}^{T'' + n+  1} \}$
            \item Case 3: Assuming $u_0$ is corrupted. In the real world, $u_0$ submits an invalid $m_{15}$ to $\Fjc$, $\Fjc$ will do nothing about this call, so the execution output of both real world and ideal world is identical to Case 1. If $u_0$ sends nothing in this round, $\text{EXEC}_{\Pi, \mathcal{A}, \env} := \{  \ldots, m_{14}^{T'' + n+  1}\}$. In the ideal world, $\Sim$ will forward nothing to $\Fex$, resulting  $\text{EXEC}_{\Fex, \Sim, \env} := \{  \ldots, m_{14}^{T'' + n+  1}\}$.
        \end{itemize}
       
\end{itemize}

\end{proof}

\begin{theorem}
    \textit{Enforceable A-HTLC} protocol $\Pi$ \textit{UC-realize}s the ideal functionality $\Fex$.
\end{theorem}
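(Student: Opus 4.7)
The plan is to assemble the theorem from the three phase-level results already established, namely Lemma~\ref{lemma:uc_lock}, Lemma~\ref{lemma:uc_unlock}, and Lemma~\ref{lemma:uc_enforce}, using the unified simulator $\Sim$ specified in Fig.~\ref{fig:uc_sim}. The key structural observation is that any execution of $\Pi$ factorises cleanly into phases: the lock phase always runs first, and then, from the viewpoint of honest $u_0$, exactly one of the unlock or enforcement branches is traversed (each triggered by a distinct $\env$ input, \textit{release} or \textit{challenge}). Consequently, the environment's view can be written as a prefix produced during the lock phase followed by a suffix produced during the chosen second phase, and it suffices to show indistinguishability of each segment while making sure the state carried across the phase boundary matches in both worlds.

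Concretely, I would run a standard hybrid argument. Let $H_0 := \text{EXEC}_{\Pi, \mathcal{A}, \env}^{\Fpre}\{\lambda, z\}$ and define $H_1$ by replacing all lock-phase interactions with the ideal lock phase of $\Fex$ mediated by the lock-phase portion of $\Sim$; Lemma~\ref{lemma:uc_lock} gives $H_0 \approx_c H_1$. Next, define $H_2^{\text{unl}}$ and $H_2^{\text{enf}}$ by additionally replacing the unlock phase (resp.\ enforcement phase) with its ideal counterpart; Lemmas~\ref{lemma:uc_unlock} and~\ref{lemma:uc_enforce} yield $H_1 \approx_c H_2^{\text{unl}}$ on the unlock branch and $H_1 \approx_c H_2^{\text{enf}}$ on the enforcement branch. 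Since the two branches are distinguished by a public message produced by $\env$, the final hybrid is exactly $\text{EXEC}_{\Fex, \Sim, \env}^{\Fideal}\{\lambda, z\}$, and a union bound over the at most two branches preserves computational indistinguishability.

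The main obstacle will be showing that the simulator state agrees at the phase boundary so that the three lemmas can actually be composed, rather than just proved in isolation. Specifically, the lock phase leaves behind (i) the set of collected signatures $\{\sigma_{Ch}^{i}\}$ and channel-lock transactions $\{tx_i\}$ in the real world, and (ii) the counters $\textit{ifEnf}$ and per-channel $\textit{states}[i]$ together with the stored hashlocks in $\Fex$. I will need to argue that, no matter which parties $\mathcal{A}$ corrupts, $\Sim$ as defined in Fig.~\ref{fig:uc_sim} extracts exactly the inputs that make $\Fex$'s internal bookkeeping at round $\tau$ coincide with what the real protocol's honest parties would see at round $\tau$; this is essentially a case analysis over which of $u_0, u_1, \ldots, u_n$ are corrupt, and it is already implicit in the per-phase case analyses of the three lemmas. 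Modulo this gluing, together with the UC secure realisation of the preliminary functionalities in $\Fpre$ (notably $\Fsig$, $\FR$, $\Fan$, $\Fclk$, $\Fjc$, $\FC$, and $\FL$), the UC composition theorem of \cite{canetti_universally_2001} delivers the claim.
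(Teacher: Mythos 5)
Your proposal is correct and follows essentially the same route as the paper, whose proof of this theorem is simply the one-line observation that it follows from Lemmas~\ref{lemma:uc_lock}, \ref{lemma:uc_unlock} and \ref{lemma:uc_enforce}. Your additional hybrid-argument scaffolding and the explicit attention to matching simulator state at the phase boundary merely make rigorous what the paper leaves implicit, so no substantive difference in approach exists.
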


\begin{proof}
    This theorem follows from Lemmas \ref{lemma:uc_lock}, \ref{lemma:uc_unlock} and \ref{lemma:uc_enforce}. 
\end{proof}

\subsubsection{Security properties of $\Fex$}

Let us discuss the security properties guaranteed by $\Fex$ in multi-party money-secret exchanges.  
Since our \textit{enforceable A-HTLC} protocol realizes the ideal functionality $\Fex$, these security properties are also achieved by our protocol in the real world.
In a $\Fex$ with $n$ payee and a deadline $T$: 

\begin{lemma}   \label{lemma:fex1}
    In $\Fex$, an honest $u_i$ reveals $s_i$ only if it receives $\mathfrak{B}_i$. 
\end{lemma}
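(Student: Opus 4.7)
The plan is to establish the lemma by enumerating the execution paths of $\Fex$ that can cause $s_i$ to become observable outside of $u_i$, and matching each to a balance update on $\FL$ that nets $u_i$ a gain of $\mathfrak{B}_i$. By the hiding property of the commitment on $h_i$ (Lemma~\ref{lemma:pomm_hiding} applied to the outer commitment scheme), the value $s_i$ of an honest $u_i$ is known only to $u_i$ until $\Fex$ either (a) passes $s_i$ to $\FC$ as a redeem parameter inside an $unlock(\cdot)$ call (which is leaked by $\FC$ to $\Sim$ as part of the \textit{update} message), or (b) explicitly leaks $s_i$ to $\Sim$ via the ``$\textit{response}$'' tag in the Enforcement-Phase response handler. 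All other messages emitted by $\Fex$ are independent of $s_i$, so the two cases are exhaustive.

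For case (a), both the Unlock-Phase handler and the first bullet of the Enforcement response handler invoke $\FC$'s $unlock$ only after the guards $tx_i.\phi(\cdot, ct)=1$ together with $states[i+1]=\textit{unlocked}$ for $i<n$ (respectively $states[n]=\textit{unlockable}$ for $i=n$) have been checked. The lock-phase check that $\Fex$ runs for honest $u_i$ forces $amt_i := tx_i.rb - rb \geq \mathfrak{B}_i$, so the incoming update credits $u_i$ with $amt_i$ on $\FL$. For $i<n$, the precondition $states[i+1]=\textit{unlocked}$ implies that the outgoing update on $cid_{i+1}$ has already debited $u_i$ by $amt_i-\mathfrak{B}_i$, yielding a net gain of exactly $\mathfrak{B}_i$; for $i=n$ there is no outgoing leg and $u_n$ receives $\mathfrak{B}_n$ directly. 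The Enforcement-Phase variant remains in time because the response round $T''+n-i+1$ is strictly less than the timelock $tx_i.\phi.t = T+n-i+2$, an inequality that follows from $T''<T$ by the timelock analysis of Section~\ref{sec:construction}.

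For case (b), the second bullet of the response handler is reached only when the first bullet's conjunction fails; for an honest $u_i$, after reading the paper's ``$states[n]$'' as the typographical intended $states[i]$, the only way this can occur is that $states[i]$ already equals $\textit{unlocked}$. Because $states[i]$ is initialised to $\bot$ and only advances $\bot \to \textit{locked} \to \textit{unlocked}$, and the only code path that writes $\textit{unlocked}$ is case~(a), this implies $u_i$ had already been credited with $\mathfrak{B}_i$ in an earlier round, so the lemma still holds.

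The main obstacle is the double-entry bookkeeping across $cid_i$ and $cid_{i+1}$: the phrase ``receives $\mathfrak{B}_i$'' in the lemma must be interpreted as the \emph{net} change on $\FL$ after the outgoing debit of $amt_i - \mathfrak{B}_i$ to $u_{i+1}$, and making this rigorous requires an induction on the cascade order $n, n-1, \ldots, 1$ of unlocks, showing that the outgoing debit is always posted strictly before the incoming credit can be redeemed. Once that state-machine invariant is set up, the above case analysis closes the proof.
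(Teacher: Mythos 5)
Your proof is correct and follows essentially the same route as the paper's: in the unlock phase the reveal of $s_i$ is the very $unlock$ call that redeems the incoming payment on $cid_i$ (with the outgoing leg on $cid_{i+1}$ already debited, netting $\mathfrak{B}_i$), and in the enforcement phase the inequality $T'' + n - i + 1 < tx_i.\phi.t = T + n - i + 2$ guarantees $u_i$ can still redeem after being forced to reveal. Your only slip is treating the two bullets of the response handler as mutually exclusive (both can fire when their guards hold), but this does not affect the argument, since your case (b) correctly reduces the leak-without-fresh-unlock situation to a channel that was already unlocked and hence already paid.
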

\begin{proof}
    We first consider the unlock phase of $\Fex$: 
    For an honest payee $u_i$ ($i \in [1, n-1]$) selling secret $s_i$, once the channel $cid_{i+1}$ is updated (redeeming the payment $\sum_{j=i}^n \mathfrak{B}_j$ from $u_i$), $\FC$ will send the redeem secret $\mathbb{S}$ to $u_i$. Since the transaction locking time $tx_i.\phi.t > tx_{i+1}.\phi.t$, $u_i$ can update the channel $cid_i$, redeeming the payment $\sum_{j=i}^n \mathfrak{B}_j$ from $u_{i-1}$.

    We then consider the enforcement phase of $\Fex$: Once $u_0$ submits the \textit{enforce} message to $\Fex$ at round $T'$ ($T' < T$), $u_i$ has to reveal its secret $s_i$ in round $T' + n - i + 1$ if $u_0$ and all its last hops ($u_{i+1}, \ldots, u_n$) reveal their secrets. Since the timelock for updating channel $cid_i$ is $t_i := T + n - i + 2$, and $t_i > T' + n - i + 1$, $u_i$ has enough time to unlock its payment after revealing its secret.

    In summary, in the unlock phase, once $u_{i+1}$ updates channel $cid_{i+1}$, $u_i$ can always update channel $cid_i$. In the enforcement phase, even if $u_i$ has to reveal its secret, $u_i$ can still update channel $cid_i$. Therefore, an honest $u_i$ reveals $s_i$ only if it receives the fee $\mathfrak{B}_i$.

\end{proof}

\begin{lemma}    \label{lemma:fex2}
    In $\Fex$, an honest $u_0$ pays $\sum_{i=1}^n \mathfrak{B}_i$ only if it receives all the secrets $\{s_1, \ldots, s_n\} $ from all payees. 
\end{lemma}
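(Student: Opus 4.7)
The plan is to trace the only two mechanisms by which the outgoing channel $cid_1$ of $u_0$ can be updated --- which is the unique event that transfers $\sum_{i=1}^{n}\mathfrak{B}_i$ out of $u_0$'s balance --- and to show that in both cases $u_0$ necessarily learns every secret $s_1, \ldots, s_n$ before (or exactly when) the transfer takes effect.

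First, I would observe that since $u_0$ is honest, the only outgoing locked payment it ever signs is $tx_1$, whose unlock condition is $\phi_1 := \textit{Construct}(\{h_0\} \cup \mathbb{H}_1, T+n+1)$ with $\mathbb{H}_1 = \{h_1,\ldots,h_n\}$. By the specification of $\FC$, the channel $cid_1$ can only be updated if some party submits a redeem parameter $\mathbb{S}$ with $\phi_1(\mathbb{S}, ct)=1$, i.e., $\mathbb{S}$ must supply a preimage for \emph{every} hash in $\{h_0\}\cup \mathbb{H}_1$ before the timelock expires. By the binding property of $Commit/Open$ inherited from the collision-resistance of $\FR$, these preimages coincide with the original $s_0, s_1, \ldots, s_n$ fixed in the lock phase, except with negligible probability.

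Second, I would invoke the fact that the $\textit{update}$ interface of $\FC$ delivers $(\textit{updated},sid,cid_1,\mathbb{S})$ to both endpoints of the channel; therefore, as soon as $cid_1$ is updated, the honest left party $u_0$ receives the full $\mathbb{S}$, which contains $\{s_1,\ldots,s_n\}$. Hence, to show the lemma, it suffices to argue that no event decrements $u_0$'s channel balance without triggering this delivery. I would dispatch the two possible execution paths:
\begin{itemize}[leftmargin=*]
    \item \emph{Unlock path.} By an argument symmetric to Lemma~\ref{lemma:fex1}, the update of $cid_1$ is only ever generated after $u_1$ assembles $\mathbb{S}_1 := \mathbb{S}_2 \cup \{s_1\}$ from the prior update of $cid_2$; inducting downward, $\mathbb{S}$ can only reach $cid_1$ after $u_n$ unlocks with $s_n$ and each intermediate $u_i$ appends $s_i$. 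So any $\mathbb{S}$ that updates $cid_1$ contains all $s_1, \ldots, s_n$.
    \item \emph{Enforcement path.} Here $u_0$ itself builds the unlock call using the local set $\mathbb{S}$ maintained by the \textit{Response} handler of $\Fex$. By inspection, $\mathbb{S}[i]$ is populated only when the handler has checked $Open(s_i, h_i)=1$ and $\mathbb{S}[i+1]\neq \bot$; hence a well-formed $\mathbb{S}$ usable in $\phi_1$ must already contain valid $s_1,\ldots,s_n$.
\end{itemize}

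The main obstacle is not the structural argument, which is a direct unrolling of the $\Fex$ spec, but the \emph{uniqueness} of preimages: we must rule out an adversarial $\mathbb{S}'$ satisfying $\phi_1$ whose entries disagree with the fixed $s_i$'s. This reduces to the binding property of $Commit$ (Algorithm~\ref{algo:com}) and ultimately to collision-resistance of the global random oracle $\FR$, so the implication ``$u_0$ pays $\Rightarrow$ $u_0$ holds $\{s_1,\ldots,s_n\}$'' holds except with negligible probability in $\lambda$, which is the conclusion of the lemma.
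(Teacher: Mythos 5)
Your proposal is correct and follows essentially the same route as the paper's own (two-sentence) proof: the honest $u_0$'s outgoing payment is locked under a condition requiring preimages of all of $\{h_0\}\cup\mathbb{H}_1$, $\FC$ only updates the channel when that condition is satisfied, and the \textit{updated} message delivers the redeem parameters to $u_0$. Your additional case split between the unlock and enforcement paths and the explicit appeal to the binding of $Commit$ are extra rigor the paper omits, but they do not change the argument.
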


\begin{proof}
    In the lock phase, an honest $u_0$ requires $u_i$ to provide $s_0, \ldots, s_n$ in order to redeem this payment, protected by $\FC$. Once this payment is redeemed, $u_0$ will receive these secrets from $\FC$.
\end{proof}

\begin{lemma}    \label{lemma:fex3}
   \textit{Controllable.} In $\Fex$, an honest $u_0$ pays nothing until $u_0$ sends the \textit{release} or \textit{enforce} message.
\end{lemma}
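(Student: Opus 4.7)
The plan is to trace $u_0$'s single outgoing channel $cid_1$ in $\Fex$ and show that its redemption condition is unsatisfiable until $s_0$ is released by $u_0$ itself. First, I would observe that in the lock phase $\Fex$ constructs $tx_1$ via $lock(u_0, sid, cid_1, amt, \phi)$ with $\phi := Construct(\{h_0\} \cup \mathbb{H}_1, T + n + 1)$. By Algorithm~\ref{al:construct}, any $\FC$ call $unlock(tx_1, u_1, \mathbb{S})$ succeeds only if $\mathbb{S}$ contains some $s_0^*$ with $Open(s_0^*, h_0) = 1$. Since honest $u_0$ sampled $s_0 \leftarrowS \{0,1\}^{\lambda}$ and set $h_0 = Commit(s_0)$ using the random-oracle commitment of Algorithm~\ref{algo:com}, binding gives that any accepting $s_0^*$ equals $s_0$ except with negligible probability.

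Next, I would enumerate every interface of $\Fex$, together with the reachable auxiliary functionalities $\FL, \FC, \FR, \Fjc$, that could expose $s_0$ to the adversary or to the corrupted parties. A direct read of the lock, unlock, and enforcement phases in Fig.~\ref{fig:uc_if} reveals that $s_0$ enters $\Fex$'s internal state (and is subsequently leaked to $\Sim$ or broadcast via the $\Fjc$ enforcement handler) exclusively while processing $(\textit{release}, \textit{sid}, s_0)$ or $(\textit{challenge}, \textit{sid}, s_0)$ from $u_0$, which are precisely the triggers the lemma calls \emph{release} and \emph{enforce}. No other handler takes $s_0$ as input or produces it as output. Consequently, the joint view of $\mathcal{A}$ and every corrupted $u_i$ prior to $u_0$'s first such call is statistically independent of $s_0$, so any adversarial guess $s_0^*$ satisfies $Open(s_0^*, h_0) = 1$ with probability at most $2^{-\lambda}$ up to random-oracle collisions.

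Combining the two, before $u_0$ issues release or enforce, no valid preimage set $\mathbb{S}$ for $tx_1$ can be assembled: $\Fex$ itself will not populate $\mathbb{S}$ with $s_0$, and no other party can supply one. Hence every $\textit{update}$ call $\FC$ might receive for $cid_1$ is rejected, and since $cid_1$ is the only channel in this instance of $\Fex$ that ever debits $u_0$, $u_0$'s position on $\FL$ is unchanged. The hard part will be the enforcement branch: there $s_0$ is eventually disseminated through the $(\textit{enforced}, sid, s_0)$ broadcast, so one must carefully verify that this broadcast occurs only after $\Fex$ has checked $Open(s_0, h_0) = 1$ on the secret supplied in the \textit{challenge} message, meaning the only route to a usable preimage for a third party is via $u_0$ itself submitting it, which is exactly the triggering event that the ``until'' clause of the lemma allows.
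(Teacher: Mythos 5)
Your proposal is correct and follows essentially the same route as the paper: the paper's proof likewise observes that honest $u_0$ places $h_0 := Commit(s_0)$ into $u_1$'s redeem condition and that, since $s_0$ stays private to $u_0$, the channel $cid_1$ cannot be updated until $u_0$ issues the \textit{release} or \textit{enforce} message containing $s_0$. Your version merely spells out the commitment-binding and secret-hiding steps that the paper leaves implicit.
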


\begin{proof}   
    In the lock phase, an honest $u_0$ adds $h_0 := \text{Commit}(s_0)$ to $u_1$'s redeem condition. Since the secret $s_0$ is kept private in $u_0$, $u_1$ cannot update this payment in $\FC$ until $u_0$ sends the \textit{release} or \textit{enforce} message, which includes $s_0$.
\end{proof}

\begin{lemma}    \label{lemma:fex4}
   \textit{Enforceable.} If $\Fex$ completes the \textit{lock} phase with $ifEnf = n$, $\Fex$ guarantees that the payer $u_0$ can enforce all payees to reveal their secrets before round $T + n$, or $\CP$ will receive a refund of amount $\mathfrak{B}_{\text{max}}$.
\end{lemma}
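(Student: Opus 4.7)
The plan is to combine the premise $ifEnf = n$ with a round-by-round trace of the \textit{Enforcement Phase} of $\Fex$, showing that either every $u_i$ is forced to log its secret on schedule, or the \textit{cheater} register is set and the \textit{punish} path pays out $\mathfrak{B}_{\max}$ from $\FL$. First I would observe that the condition $ifEnf = n$ at the end of the lock phase is exactly what $\Fex$ requires before it will accept the \textit{challenge} from $u_0$, so an honest $u_0$ can in fact submit $(\textit{challenge}, sid, s_0)$ at some round $T''$ with $T'' < T$ and $Open(s_0,h_0)=1$. This triggers the leak of $s_0$ and resets the internal counter $eid:=n$, which kicks off the enforcement cascade.

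Next, I would walk through rounds $T''+1, T''+2, \ldots, T''+n$ and show by induction on $i$ (from $n$ down to $1$) that in round $T''+n-i+1$ either (a) $u_i$ sends a valid $(\textit{response}, sid, s_i, \cdot)$ with $Open(s_i,h_i)=1$, which causes $\Fex$ to store $s_i$, decrement $eid$, and leak the secret (so $u_0$ has learned $s_i$ by that round), or (b) no valid response arrives and $\Fex$ writes the first such $u_i$ into \textit{cheater} and stops advancing $eid$. Since $T'' < T$, case (a) gives all secrets by round $T''+n < T+n$, which is the stated deadline. For case (a) I would also note (borrowing the timelock argument already used in Lemma~\ref{lemma:fex1}) that each $u_i$'s channel timelock $t_i = T+n-i+2$ strictly exceeds its response deadline $T''+n-i+1$, so revealing is always preferable to forfeiting the payment, which is why honest behavior is actually induced here.

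For case (b), I would show that in round $T''+n+1$ the honest $u_0$ sends $(\textit{punish}, sid)$, and because \textit{cheater}$\neq\bot$ at that point, $\Fex$ issues $(\textit{transfer}, sid, \textit{cheater}, u_0, \mathfrak{B}_{\max})$ to $\FL$. By the assumption in Section~\ref{sec:assumption} that every relayer has a sufficient deposit locked in $\FL$ (and by the $\mathbb{RT}$ invariant enforced through \Fjc's \textit{register} interface, which guarantees $\mathfrak{B}_{\max}$ is dominated by the deposit), the transfer succeeds and $u_0$ receives the $\mathfrak{B}_{\max}$ refund. Combined with the previous paragraph, this yields exactly the disjunction in the lemma statement.

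The main obstacle I expect is not the round-counting itself but verifying that the two branches truly exhaust the behavior of $\Fex$, i.e., that there is no "mixed" trace in which some $u_j$ for $j > i$ never responded yet \textit{cheater} was overwritten later. Inspecting the \textit{Response} handler carefully, the guards $eid = i$ and $cheater = \bot$ ensure that once a missing response sets \textit{cheater}, \textit{eid} is frozen at $0$ and no later $u_i$ can clear it; symmetrically, as long as responses arrive on time, \textit{cheater} remains $\bot$. So the dichotomy is clean, and the lemma follows.
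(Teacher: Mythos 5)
Your proposal is correct and follows essentially the same route as the paper's (much terser) proof: the guard $ifEnf = n$ licenses the \textit{challenge} at some $T'' < T$, the enforcement cascade then forces each $u_i$ to log $s_i$ in round $T''+n-i+1$ (so all secrets arrive by $T''+n < T+n$), and the first missing response sets \textit{cheater} so that the \textit{punish} call transfers $\mathfrak{B}_{\max}$ via $\FL$. Your additional checks (the freeze of $eid$ and \textit{cheater}, and the timelock margin borrowed from Lemma~\ref{lemma:fex1}) are sound elaborations rather than a different argument.
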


\begin{proof}
    Once $ifEnf = n$, $\Fex$ allows $u_0$ to reveal $s_0$ by sending an \textit{enforce} message to $\Fex$ in round $T'$ $(T' < T)$. Then, each payee $u_i$ has to reveal its secret $s_i$ to $\Fex$. If a payee $u_r$ is the first to fail in revealing its secret, $\Fex$ allows $u_0$ to submit a \textit{punish} message to $\Fex$ at the end of the enforcement phase, transferring $\mathfrak{B}_{\text{max}}$ from $u_r$ to $u_0$.
\end{proof}

\begin{lemma}    \label{lemma:fex5}
   Consider a payer $u_0$ initiating $\eta$ instances of $\Fex$ with different payees in $\eta$ payment paths at round $T''$, and all $u_0$ sets the same enforcement deadline $T$ for each instance. We assume each path $p_i$ has $n_i$ payees. If all $\Fex$ instances complete the \textit{lock} phase with $ifEnf = n_i$, the payer $u_0$ can enforce all payees in all $\Fex$ instances to reveal their secrets before round $T + \max(n_i) + 1$, or $u_0$ will receive a refund of at least amount $\mathfrak{B}_{\text{max}}$.
\end{lemma}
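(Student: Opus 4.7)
The plan is to reduce this multi-instance statement to a parallel application of Lemma \ref{lemma:fex4}. First I would observe that the $\eta$ instances $\Fex_{1}, \ldots, \Fex_{\eta}$ are independent ideal functionalities: each maintains its own $states$, $cheater$, $ifEnf$, and $eid$ variables, and each operates on a disjoint set of payment channels $cid_{i, j}$. The \textit{Judge Contract} $\Fjc$ is shared, but different paths produce distinct challenge objects $Ch_i$ (their $ADDR$ fields differ), so the entries that \textit{Enforce} writes into $Logs$ are distinct, and $\Fjc$'s \textit{Response} and \textit{Punishment} handlers act on each $Ch_i$ separately. Hence the guarantees of Lemma \ref{lemma:fex4} can be applied to each $\Fex_{i}$ in isolation without the schedule of one interfering with another.

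Next I would handle the timing. By hypothesis, every $\Fex_{i}$ completes its lock phase with $ifEnf = n_i$, so $u_0$ is entitled to invoke the \textit{enforce} interface on each of them. At some round $T'' < T$, honest $u_0$ broadcasts $(\textit{challenge}, sid_i, s_0)$ to every $\Fex_{i}$ in parallel. For instance $i$, the enforcement phase then occupies rounds $T''+1, \ldots, T''+n_i+1$, with the final \textit{punish} step at round $T''+n_i+1$ (as specified in the Enforcement Phase of $\Fex$). Taking $T'' \le T - 1$, each instance's enforcement concludes no later than round $T - 1 + \max_i(n_i) + 1 = T + \max_i(n_i)$, which is strictly less than the bound $T + \max_i(n_i) + 1$ claimed in the lemma. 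Since all instances share the same global clock $\Fclk$, the parallel execution is well defined: in any given round, every $\Fex_{i}$ that expects a reply from the corresponding $u_{i, j}$ receives (or fails to receive) one, independently of the other instances.

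Finally I would conclude about the refund. By Lemma \ref{lemma:fex4} applied to $\Fex_{i}$, one of two mutually exclusive outcomes holds for each $i$: either every payee in path $p_i$ reveals its secret on time, or $u_0$'s \textit{punish} call causes $\Fex_{i}$ to transfer $\mathfrak{B}_{max}$ from the offending payee to $u_0$ via $\FL$. The lemma's conclusion therefore splits into the two disjuncts: if all $\eta$ instances fall in the first case, then all payees across all paths have revealed their secrets by round $T + \max_i(n_i) + 1$; otherwise, there exists at least one instance in the second case, contributing a refund of $\mathfrak{B}_{max}$ to $u_0$, so the total refund is at least $\mathfrak{B}_{max}$ (and in fact $k \cdot \mathfrak{B}_{max}$ if $k$ instances fail, since the refunds come from distinct payees' deposits in $\FL$). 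The main delicate point, and the one I would flesh out carefully, is the non-interference argument in the first paragraph: specifically, that $\Fjc$'s round-indexed \textit{Response}/\textit{Punishment} handlers, keyed by the per-path $Ch_i$, allow the $\eta$ enforcement schedules to overlap within the same global rounds without any instance's unresponsive relayer blocking another instance's progress or refund.
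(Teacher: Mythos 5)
Your proposal is correct and takes the same route as the paper, which simply states that the lemma follows directly from Lemma \ref{lemma:fex4} applied to each instance; you have filled in the independence, timing, and refund details that the paper leaves implicit. The non-interference point you flag (distinct $Ch_i$ entries keeping the shared $\Fjc$ handlers from coupling the $\eta$ schedules) is exactly the right thing to spell out, and your round arithmetic matches the bound in the statement.
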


\begin{proof}
   The lemma follows directly from Lemma \ref{lemma:fex4}. 
\end{proof}

\subsection{Fairness and Confidentiality in FairRelay}    \label{sec:proof}

In this section, we formally argue that FairRelay guarantees the fairness and confidentiality properties defined in Section \ref{sec:prob_def}.  

Consider a relay-assisted content exchange $\mathcal{G}$ in the FairRelay protocol, where $\CP$ delivers content $m = \{m_1, \dots, m_n\}$ committed by $com_m$ to $\C$ through $\eta$ delivery paths. Each relay path $p$ is assigned a delivery job $Job(p)$, where $Job(p) \subset m$. All paths should complete their delivery job before time $T_1$.
$\CP$ generates a new encryption key $sk_0$ and a mask $s_0$. The $i^{th}$ relayer in path $p_k$, denoted as $\R_{k, i}$, generates its encryption key $sk_{k,i}$ and the mask secret $s_{k,i}$.
All participants agree on the delivery deadline $T_1$ and the challenge deadline $T_2$. At round $T_1$, if $\eta$ paths ($\mathbb{P}$) finish the ciphertext delivery, $\C$ issues a conditioned payment to $\CP$. In the next round, $\CP$ issues $\eta$ $\Fex$ for each relay path in $\mathbb{P}$ in exchange for their secret $s_{k, i}$.

\begin{theorem}
    FairRelay guarantees the fairness for the customer $\C$ (Definition \ref{def:exf_r}).
    \label{pro:fair_r}
\end{theorem}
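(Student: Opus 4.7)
The plan is to trace every path along which $\mathcal{C}$ could lose $\mathfrak{B}_m$ and show that each such path either delivers the content $m$ or yields a compensation at $\Fjc$ that exceeds $\mathfrak{B}_m$. We rely on three ingredients: the $\FC$-atomicity of the single outgoing payment from $\mathcal{C}$; the soundness of the commitment/PoM schemes (Lemmas~\ref{lemma:pomm_binding},~\ref{lemma:pome_binding} and Corollary~\ref{coro:valid}); and the integrity check that $\mathcal{C}$ performs in the delivery phase. We consider the multi-path protocol; the single-path case is a specialization.

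First, observe that an honest $\mathcal{C}$ only emits the \textit{channel-lock} message for $\mathfrak{B}_m$ after (i) receiving all $\mathit{setup}$ messages and verifying each $com_{mask}^{\ast}$, (ii) reconstructing $com_m$ from the hashes $\mathbb{H}_{Job}^{k}$ relayed on every path and checking $MT.\mathrm{Root}(\cdot)=com_m$, and (iii) verifying, for every ciphertext chunk $c_{|p_k|,j}$ received, that the accompanying commitment chain is well-formed, signature-valid, and linked hop-by-hop down to an element $com_{enc}^{0,j}.h_m$ lying in $\mathbb{H}_{Job}^{k}$. If any of these checks fails or any message is missing by round $T_1$, $\mathcal{C}$ aborts without ever sending the lock. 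So the only remaining branch is the one in which $\mathcal{C}$ \emph{has} locked the payment.

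Once $\mathcal{C}$'s channel is updated by $\FC$ with preimages $S_0$, the condition $\phi_{\mathit{A\text{-}HTLC}}$ enforces that $S_0$ contains openings of \emph{every} mask hash $h_s^{0},h_s^{k,i}$ that appeared in $\mathcal{C}$'s lock. Hence $\mathcal{C}$ holds a revealed secret for every $com_{mask}$ it previously received. Feeding these to $ExtKey$ (Algorithm~\ref{alg:extkey}), Lemma~\ref{lemma:pomm_binding} gives the dichotomy: either every $s$ is consistent with its $com_{mask}$ and $\mathcal{C}$ recovers the full key set $\mathbb{K}$ of genuine encryption keys, or $ExtKey$ outputs $(\pi_{pomm},tid)$ that $PoMM.\mathrm{Ver}$ accepts; in the latter case $\mathcal{C}$ submits it to $\Fjc$, which transfers $\mathfrak{B}_{\max}\geq\mathfrak{B}_m$ from $tid$ to $\mathcal{C}$ via $\FL$, so $\mathcal{C}$ is net-compensated.

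Assuming instead that $\mathbb{K}$ is extracted cleanly, $\mathcal{C}$ runs $Extract$ on each path's ciphertext list together with the already-verified commitment chains. Because $\mathcal{C}$ checked in the delivery phase that each chain $\mathbb{C}_{|p_k|,j}$ forms a \emph{valid tuple} in the sense of Section~\ref{sec:valid_tuple} (signatures chain correctly, $h_c$ of each layer matches $h_m$ of the next, and the top $h_m$ equals the corresponding entry of $\mathbb{H}_m$), Corollary~\ref{coro:valid} applies hop-by-hop per chunk: either every layer of every chunk decrypts consistently with its commitment, in which case the recovered plaintext chunks carry the exact hashes of $\mathbb{H}_m$ and hence, by $MT.\mathrm{Root}(\mathbb{H}_m)=com_m$ and collision-resistance of $\FR$, reconstitute the agreed content $m$; or $Extract$ returns a valid $\pi_{pome}$ against the closest cheater, which $\mathcal{C}$ submits to $\Fjc$ for a $\mathfrak{B}_{\max}$ payout.

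The only subtle step, and the main obstacle, is showing that what $\mathcal{C}$ obtains through $ExtKey\circ Extract$ is really $m$ rather than some other string that happens to pass all local checks. This reduces to: (a) the link $MT.\mathrm{Root}(\mathbb{H}_m)=com_m$ forces $\mathbb{H}_m$ to be exactly the chunk hashes of $m$ by collision resistance; (b) Corollary~\ref{coro:valid} forces the decrypted top-layer plaintexts to open $com_{enc}^{0,j}.h_m$, which by (a) is $h_m^j$; so by the binding property of $Commit$, the recovered chunk is $m_j$. Combining the three cases (abort before payment, successful decryption to $m$, or valid PoM and compensation $\mathfrak{B}_{\max}\geq\mathfrak{B}_m$), fairness for $\mathcal{C}$ follows.
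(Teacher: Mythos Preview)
Your proof is correct and follows essentially the same approach as the paper: both argue that $\mathcal{C}$ only locks after all commitment chains verify as \emph{valid tuples}, that the $\FC$ update condition forces disclosure of every mask secret, and then invoke Corollary~\ref{coro:valid} (together with Lemmas~\ref{lemma:pomm_binding} and~\ref{lemma:pome_binding}) to obtain the dichotomy between correct decryption of $m$ and a valid PoM yielding $\mathfrak{B}_{\max}\geq\mathfrak{B}_m$. Your treatment is in fact slightly more explicit than the paper's in spelling out why the recovered plaintexts must equal the genuine chunks of $m$ (via $MT.\mathrm{Root}(\mathbb{H}_m)=com_m$ and the binding of $Commit$), a step the paper leaves implicit.
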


\noindent \begin{proof}
\sloppy 
Before $\C$ makes the payment in round $T_1$, $\C$ receives one mask commitment from each user, ciphertext of chunks along with encryption commitment chain, and the Merkle multi-proof for each plaintext chunk.

$\C$ can compose the following tuple for any chunk $m_{idx} \in m$ with index $idx$ relayed in path $p_k$: $(idx, c_{|p_k|, idx}, \mathbb{C}_1, \mathbb{C}_2, \pi_{\text{merkle}})$, where:
- $c_{|p_k|, idx}$ is the ciphertext encrypted ($|p_k| + 1$) times by $\{\CP, \R_{k, 1}, \ldots, \R_{k, |p_k|}\}$ sequentially;
- $\mathbb{C}_1$ is the encryption commitment chain $\{com_{\text{enc}}^{0, idx}, com_{\text{enc}}^{1, idx}, \ldots, com_{\text{enc}}^{|p_k|, idx}\}$,
- $\mathbb{C}_2$ is the mask commitments $\{com_{\text{mask}}^0, com_{\text{mask}}^{k, 1}, \ldots, com_{\text{mask}}^{k, |p_k|}\}$,
- $\pi_{\text{merkle}}$ proves that $com_{\text{enc}}^{0, idx}.h_m$ is the $idx^{th}$ leaf of the Merkle tree with root $com_m$.

Additionally, $(c_{|p_k|, idx}, \mathbb{C}_1, \mathbb{C}_2, idx)$ forms a \textit{valid}( defined in \ref{sec:valid_tuple}) tuple signed by $\{\CP, \R_{k, 1}, \ldots, \R_{k, |p_k|}\}$. 

Then $\C$ performs a conditioned off-chain payment of amount $\mathfrak{B}_m$ to $\CP$ in exchange for all secrets in $\mathbb{P}$. $\FC$ guarantees that once this conditioned payment is redeemed by $\CP$ and updated on $\FC$, $\C$ will receive all requested mask secrets to decrypt these \textit{valid} tuples. 
According to Corollary \ref{coro:valid}, once all mask secrets $s_0, s_{k, 1}, \ldots, s_{k, |p_k|}$ are revealed, $\C$ can either obtain $m_{idx}'$ committed by $com_{\text{enc}}^{0, idx}.h_m$ when all nodes in this path are honest, or $\C$ can generate a proof of misbehavior towards the closest corrupted node.

If all providers and relayers are honest, $\C$ will obtain all chunks $m_{idx} \in m$ and compose the content $m$ committed by $com_m$. Otherwise, $\C$ can generate at least one proof of misbehavior towards the \textit{Judge Contract}, which will refund an amount of $\mathfrak{B}_{\text{max}}$ to $\C$. Since the \textit{Judge Contract} requires all content prices to be lower than $\mathfrak{B}_{\text{max}}$, we can consider $\C$ pays nothing when $\C$ does not receive $m$.

\end{proof}

\begin{theorem}
    FairRelay guarantees the fairness for the honest content provider $\CP$ (Definition \ref{def:exf_p}).
    \label{pro:fair_p}
\end{theorem}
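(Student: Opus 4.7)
The proof splits along the two conditions of Definition~\ref{def:exf_p}. The plan is to reduce each condition to properties already established for the ideal functionality $\Fex$ in Lemmas~\ref{lemma:fex1}--\ref{lemma:fex5}, and to invoke Corollary~\ref{coro:valid} so that recovery of $m$ is tied to the revelation of $\CP$'s own mask secret $s_0$.

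For condition~(1), the plan is to view the $\C$-to-$\CP$ off-chain payment as a single-payee instance of $\Fex$ in which $\CP$ plays the role of payee, with secret $s_0$ committed by $h_s^0$ and price $\mathfrak{B}_m$. The first step is to argue, using Lemma~\ref{lemma:pomm_hiding} for the hiding of $MCOM$ together with the IND-CPA security of the outer symmetric encryption layer applied by $\CP$, that any PPT adversary lacking $s_0$ cannot strip $\CP$'s encryption layer from any ciphertext chunk. Combined with Corollary~\ref{coro:valid} on the commitment chain that $\C$ has verified during the delivery phase, this shows that recovering any plaintext $m_j$ necessarily requires $s_0$ to be disclosed. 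The second step invokes Lemma~\ref{lemma:fex1}: an honest $\CP$ as payee reveals $s_0$ only after the channel update that credits $\mathfrak{B}_m$ to $\CP$ has committed on $\FC$. Chaining these two steps yields condition~(1).

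For condition~(2), the plan is to model honest $\CP$ as the payer in $\eta$ parallel instances of $\Fex$, one per relay path $p_k$, and to case-split on whether $\CP$ ever emits a \textit{release} or \textit{enforce} message carrying $s_{\mathrm{sync}}$. In the first case, Lemma~\ref{lemma:fex3} (controllability) implies that no $\R_{k,i}$ can redeem any outgoing A-HTLC of $\CP$, so no relay fee is paid at all. In the second case, Lemma~\ref{lemma:fex5} guarantees that before round $T_2 + \max_k |p_k| + 1$ either all relayers in all paths reveal their secrets---in which case $\CP$ holds the full tuple $\{s_0\}\cup\{s_{k,i}\}$ and, by Lemma~\ref{lemma:fex2} applied to the $\C$-to-$\CP$ instance, can redeem $\mathfrak{B}_m$---or else $\CP$ is reimbursed at least $\mathfrak{B}_{\mathrm{max}}$ from the non-compliant relayer via $\Fjc$, an amount the \textit{register} interface forces to exceed any content price and therefore to exceed the total relay-fee outflow in $\mathcal{G}$.

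The main obstacle is twofold. First, one must verify that the timelock schedule leaves $\CP$ strictly positive rounds to submit $\mathrm{unlock}(tx_0,\CP,S_0)$ on $\FC$ after secrets are gathered via enforcement; this is precisely why the protocol sets $t_0 = T_2 + \max_k |p_k| + 2$, and the per-hop decrement established in Section~\ref{sec:construction} has to be threaded through the case analysis so that an adversary cannot desynchronize the ``payment-in'' and ``payment-out'' events of $\CP$. Second, the literal reading of condition~(2) must be reconciled with the reimbursement branch: whenever $\CP$ fails to collect $\mathfrak{B}_m$ after releasing $s_{\mathrm{sync}}$, the enforcement outcome delivers a payment of at least $\mathfrak{B}_{\mathrm{max}} > \sum_{k,i}\mathfrak{B}_{k,i}$ to $\CP$, so any relay-fee outflow is strictly dominated by a compensation inflow and condition~(2) holds in the intended net-balance sense.
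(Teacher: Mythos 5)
Your proposal follows essentially the same route as the paper's proof: condition (1) is reduced to the hiding properties of the mask and encryption commitments (so $m$ is recoverable only once $s_0$ is revealed, which $\FC$ ties to the redemption of $\mathfrak{B}_m$), and condition (2) is reduced to the controllability and enforceability guarantees of $\Fex$ (Lemmas \ref{lemma:fex3} and \ref{lemma:fex5}), with the same case split on whether $\CP$ ever releases $s_{\mathrm{sync}}$. The one substantive omission is the paper's closing observation that the soundness of the zk-SNARK-based proof-of-misbehavior scheme prevents a corrupted $\C$ or $\R$ from slashing an honest $\CP$'s deposit via the \textit{PoMM}/\textit{PoME} interfaces of $\Fjc$; since that is an independent channel through which $\CP$ could lose funds without receiving $\mathfrak{B}_m$, a complete argument for condition (2) should rule it out as well.
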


\begin{proof}

We contend that the revelation of $m$ is contingent upon $\CP$ receiving payment. Each constituent chunk $m_r \in m$ undergoes encryption using $sk_0$ and is subsequently masked by $s_0$. The inherent hiding properties (Lemma \ref{lemma:pomm_hiding}, \ref{lemma:pome_hiding}) ensure that the disclosure of $m_r$ remains unattainable as long as $s_0$ is kept confidential. Only upon the redemption of payment by $\CP$ from $\C$, will $s_0$ be disclosed to $\C$ via $\FC$.

Subsequently, we argue that $\CP$ refrains from remitting relay fees until it receives payment from $\C$. As  $\C$ first initiates a conditioned payment in return for all secrets within $\mathbb{P}$. Consequently, $\CP$ issues $\eta$ $\Fex$ messages to each path  in exchange for the relayers' mask secrets. In the event that not all paths successfully complete the lock phase, $\CP$ terminates the protocol by ceasing to transmit \textit{release} or \textit{enforce} messages to $\Fex$. In such instances, no party involved in this transaction receives payment. Conversely, if all paths progress to the unlock or enforcement phase, the veracity of Lemma \ref{lemma:fex5} ensures that $\CP$ receives all secrets atomically, prior to a predetermined deadline, after which $\CP$ can update the payment from $\C$.

Furthermore, the \textit{soundness} of \textit{zk-SNARK} guarantees that no party can submit a proof of misbehavior to an honest $\CP$, thereby reducing its balance. Consequently, we stipulate that $\CP$ only pays the relay fee upon receiving payment from $\C$.

\end{proof}

\begin{theorem}
    FairRelay guarantees the fairness for any relayer $\R$   (Definition \ref{def:exf_c}).
    \label{pro:fair_c}
\end{theorem}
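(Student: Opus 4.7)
The plan is to reduce this fairness property to the security guarantees established for the \textit{Enforceable A-HTLC} ideal functionality $\Fex$ (in particular Lemma \ref{lemma:fex1}) combined with the hiding property of the mask commitment scheme (Lemma \ref{lemma:pomm_hiding}). We only need to treat honest relayers explicitly, since a corrupted $\R_{k,i}$ is by definition under the control of $\mathcal{A}$ and its secret-release behaviour is chosen by the adversary; whenever $\C$ actually learns chunks in $p_k$ the atomicity of \textit{A-HTLC} will push the relay fee to the corresponding channel balance regardless.

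First I would argue that for $\C$ to recover any plaintext chunk $m_j \in Job(p_k)$ delivered through path $p_k$, $\C$ must obtain every encryption key $sk_{k,i}$ applied along the path. Since the $i$-th ciphertext layer is produced by an IND-CPA encryption under $sk_{k,i}$, and the only information about $sk_{k,i}$ ever transmitted to $\C$ is the mask commitment $com_{mask}^{k,i} = (h_{sk}^{k,i}, h_s^{k,i}, ck_{k,i}, \sigma)$, the hiding property (Lemma \ref{lemma:pomm_hiding}) tells us that learning $sk_{k,i}$ reduces to learning the mask secret $s_{k,i}$. Hence it suffices to show that an honest $\R_{k,i}$ never exposes $s_{k,i}$ unless $\mathfrak{B}_{k,i}$ has already been credited to its channel $cid_{k,i}$.

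Next I would enumerate the only channels through which $s_{k,i}$ can leave $\R_{k,i}$. There are exactly two: (i) the off-chain update of the incoming channel $cid_{k,i}$ via $\FC$, which by the lock condition of $tx_{k,i}$ atomically transfers $\mathfrak{B}_{k,i}$ to $\R_{k,i}$; and (ii) the on-chain \textit{response} message to $\Fjc$ during the enforcement phase. For case (ii), the timelock schedule is set so that the incoming HTLC timelock $t_{k,i} = T_2 + |p_k| - i + 2$ strictly exceeds the round $T_2 + |p_k| - i$ by which $\R_{k,i}$ is forced to log its secret; $\R_{k,i}$ therefore still has at least two rounds after the forced reveal to submit $unlock(tx_{k,i}, \R_{k,i}, \cdot)$ to $\FC$ and collect $\mathfrak{B}_{k,i}$. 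Both cases follow directly from Lemma \ref{lemma:fex1} once we view each path $p_k$ as an independent instance of $\Fex$ in which $\R_{k,i}$ plays the role of payee $u_i$.

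The main obstacle will be ruling out wormhole-style collusion, in which corrupted relayers adjacent to an honest $\R_{k,i}$ (possibly together with a corrupted $\CP$) attempt to extract $s_{k,i}$ or to let $\C$ decrypt without the fee ever reaching $\R_{k,i}$. The key observation is the accumulating hashlock structure of \textit{A-HTLC}: every $tx_{k,r}$ with $r \leq i$ requires preimages of $\{h_s^{k,r}, \dots, h_s^{k,|p_k|}\}$, so no adversarial coalition can short-circuit the $i$-th lock, and $s_{k,i}$ is never needed off-path to unlock any other relayer's payment. Consequently the only route by which $s_{k,i}$ can reach $\C$ is through $\CP$'s redemption of $\C$'s outer conditioned payment, which by the previous step forces $\mathfrak{B}_{k,i}$ to be credited to $\R_{k,i}$. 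This establishes Definition \ref{def:exf_r}.
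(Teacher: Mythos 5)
Your proposal is correct and follows essentially the same route as the paper's proof: the hiding properties of the mask/encryption commitments (Lemmas \ref{lemma:pomm_hiding}, \ref{lemma:pome_hiding}) reduce the relayer's fairness to the confidentiality of $s_{k,i}$, and Lemma \ref{lemma:fex1} on $\Fex$ (covering both the off-chain unlock and the on-chain enforcement reveal, with the same timelock arithmetic you spell out) guarantees an honest $\R_{k,i}$ releases $s_{k,i}$ only after $\mathfrak{B}_{k,i}$ is credited. The only point in the paper's proof you omit is the brief observation that the \emph{soundness} of the zk-SNARK prevents anyone from submitting a valid proof of misbehavior against an honest $\R_{k,i}$ and thereby slashing its deposit, which the paper includes so that ``gets the relay fee'' holds net of penalties.
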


\begin{proof}
Within path $p_k$, it is imperative that the ciphertext received by $\C$ is encrypted using the encryption key $sk_{k, i}$ belonging to $\R_{k, i}$. The underlying hiding properties (Lemma \ref{lemma:pomm_hiding}, \ref{lemma:pome_hiding}) ensure that the ciphertext transmitted along this path remains incompletely decryptable as long as $s_{k, i}$ is kept confidential. Lemma \ref{lemma:fex1} guarantees that an honest $\R_{k, i}$ only discloses its secret $s_{k, i}$ upon receiving the payment $\mathfrak{B}_{k, i}$. Consequently, the content relayed on path $p_k$ ($Job_{p_k}$) can only be unveiled if all $\R_{k, i}$ entities receive a payment of $\mathfrak{B}_{k, i}$.

Additionally, the \textit{soundness} of \textit{zk-SNARK} guarantees that no party can present a proof of misbehavior to an honest $\R_{k, i}$, resulting in a reduction of its balance.
As a consequence, we establish that $\C$ cannot gain access to the ciphertext chunks relayed by $\R$ unless $\R$ receives its relay fee.

\end{proof}

\begin{theorem}
    FairRelay guarantees the Confidentiality  (Definition \ref{def:conf}).
\end{theorem}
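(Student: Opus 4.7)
The plan is to reduce confidentiality to three standard cryptographic assumptions: the IND-CPA security of the symmetric scheme $SE$ with which $\mathcal{P}$ encrypts each chunk under its fresh key $sk_0$, the hiding of the hash-based commitment scheme underlying $h_{sk}^0$ and $h_s^0$, and the IND-CPA security of the asymmetric scheme $AE$ under which $com_{mask}^0$ is sealed to honest $\mathcal{C}$. The central observation is that every ciphertext reaching any node in $\mathcal{G}$ has already been symmetrically encrypted by $\mathcal{P}$ with $sk_0$ under chunk-specific tweaked nonces, so recovering any plaintext chunk $m_j$ reduces to recovering $sk_0$ or directly breaking the IND-CPA game of $SE$.

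First I would enumerate the view of a PPT adversary $\mathcal{A}$ that corrupts an arbitrary subset of relayers in $\mathcal{G}$. This view consists of the broadcast hashes $h_{sk}^0, h_s^0, h_{sync}$ together with $\mathcal{A}$'s own encryption and mask material, the forwarded ciphertexts and encryption commitments on every path that contains a corrupted relayer, and the $\Fex$, $\Fjc$, and $\FC$ transcripts visible to a corrupted endpoint. The quantity $ck := s_0 \oplus sk_0$ never appears in the clear in the adversary's view: it sits only inside $c_{mask}^0 = AE.\text{Enc}(com_{mask}^0, Pk(\mathcal{C}))$, whose sole decryptor is honest $\mathcal{C}$. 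The secret $s_0$ appears only as a preimage inside the $\mathcal{C}$--$\mathcal{P}$ channel $cid_0$; by the $\FC$ interface, the \textit{updated} message carrying the preimages is delivered only to channel endpoints, both of which are honest. The synchronizer secret $s_{sync}$ is indeed revealed during unlock, but it is sampled independently of $(sk_0, s_0)$ and reveals nothing about them.

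Next I would execute a standard hybrid argument. In the first hybrid, I replace $c_{mask}^0$ with an encryption of an all-zero string under $Pk(\mathcal{C})$; indistinguishability follows from the IND-CPA security of $AE$, since only honest $\mathcal{C}$ holds $K(\mathcal{C})$ and the substituted ciphertext is never decrypted in $\mathcal{A}$'s view. In the second hybrid, I replace the commitments $h_{sk}^0$ and $h_s^0$ with commitments to fresh random strings; indistinguishability follows from the hiding property of $MCOM$ stated in Lemma \ref{lemma:pomm_hiding}, instantiated via the restricted-programmable random oracle $\FR$. In the resulting experiment, $sk_0$ is used only as the key inside $SE.\text{Enc}(\cdot, sk_0.Sk, \text{TweakNonce}(sk_0.nonce, j))$ across distinct nonces; any $\mathcal{A}$ that recovers some $m_j$ with non-negligible probability can be turned into a distinguisher for the IND-CPA game of $SE$ by embedding the challenge key as $sk_0$ and simulating every honest party internally, noting that the re-encryption performed by honest relayers after $\mathcal{P}$ only adds further layers and does not aid inversion.

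The main obstacle is the bookkeeping induced by the multi-path concurrency: many $\Fex$ instances execute in parallel, $s_{sync}$ is broadcast after lock receipts are collected, corrupted relayers observe partial secret-release transcripts along their neighbouring channels, and the enforcement logic of $\Fjc$ may post further secrets on-chain. I must argue that none of these events, composed with the setup-phase hash broadcasts, leaks a non-trivial function of $sk_0$ or $s_0$. The ideal-functionality abstractions do most of this work: $\FC$ delivers preimage sets only to channel endpoints, and $\Fex$ together with $\Fjc$ manipulate only $s_{sync}$ and the relayers' own secrets $s_{k,i}$, none of which depend on $sk_0$ or $s_0$ by the independence of the sampling in the setup phase.
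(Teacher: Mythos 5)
Your proposal is correct and follows essentially the same route as the paper's proof: confidentiality rests on the fact that every chunk carries $\mathcal{P}$'s layer of encryption under $sk_0$, that $ck = s_0 \oplus sk_0$ reaches only honest $\mathcal{C}$ inside $c_{mask}^0$, and on the hiding properties of the mask/encryption commitments (Lemmas \ref{lemma:pomm_hiding} and \ref{lemma:pome_hiding}) together with the security of the symmetric scheme. You merely expand the paper's two-paragraph sketch into an explicit hybrid/reduction argument (and are in fact slightly more careful than the paper, which concedes that $s_0$ may become public over the payment channel yet still concludes $sk_0$ stays hidden because only $\mathcal{C}$ can recover $ck$).
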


\begin{proof}
Since each chunk $m_r \in m$ is encrypted using $sk_0$ and masked with $s_0$, the hiding properties (Lemma \ref{lemma:pomm_hiding}, \ref{lemma:pome_hiding}) ensure that no one can obtain $m_r$ as long as $s_0$ remains confidential.

The honest $\CP$ privately sends the mask commitment to the honest $\C$ by encrypting the commitment with $\C$'s public key. Considering that the masked secret $s_0$ is publicly revealed over the payment channel, only $\CP$ and $\C$ possess the encryption key $sk_0$. Consequently, as long as $\CP$ and $\C$ act honestly, no relayer collision can access the plaintext content $m$.

\end{proof}

\section{Discussions}

\noindent \textbf{Termination. }
In the worst case, an honest party participating in FairRelay terminates the protocol when the round reaches the expiration time of its incoming conditioned payment. However, if the incoming channel is not even locked, an honest party is able to abort the protocol earlier than the worst case scenario.

\noindent \textbf{From a single payment channel to a multi-hop payment.} \label{app:loose}
In section \ref{sec:construction} we assumed that the payment path between $\C$ to $\CP$ can be considered as a single payment channel. We can simply expand this condition to the full payment path by utilizing multi-hop  payment schemes. The most naive solution is to repeat the condition payment hop by hop with incremental timelock (similar to HTLC multi-hop payment). We can also integrate this payment with advanced multi-hop payment schemes like AMHL\cite{malavolta_anonymous_2019}, Sprites\cite{miller_sprites_2017} or Blitz\cite{aumayr2021blitz}, ensuring the \textit{strong atomicity} over this multi-hop payment.

\noindent \textbf{Tackle Front-running Attacks. }   \label{dis:front-running}
In our protocol, both relayers and providers have a global security deposit locked on-chain, which is shared among multiple FairRelay instances. However, this setting enables dishonest relayers or providers to conduct front-running attacks against honest customers.
In a front-running attack, a dishonest party (denoted as $v$) sends a proof of misbehavior against itself to a colluded party. The colluded party can then submit these proofs of misbehavior to the \textit{Judge Contract}, constantly draining the security deposit.
To mitigate this problem, a slashing scheme can be added to the proof of misbehavior handler in the Judge Contract. This ensures that no rational party would attempt such attacks, as their security deposit would be burned.
Additionally, the customer checks the service provider's deposit before making the conditioned payment, and the payment has an expiration time. This means that the dishonest node has only a limited time-window to perform such attacks. Moreover, the throughput of the blockchain is limited, which restricts the extractable deposit.
By increasing the amount of the security deposit, we can guarantee that there is always a sufficient amount for the customer to claim a refund. By combining these two solutions, the front-running attack can be effectively mitigated.

\noindent \textbf{Fault Tolerance. }
In our current multi-path FairRelay protocol, the protocol will securely terminate if a single relayer fails to fulfill its task. In future iterations, fault tolerance schemes could be integrated into our existing design to enhance the robustness of the content delivery protocol.

\end{document}